\title[Quantum Zeno Effect]{Optimal convergence rate in the quantum Zeno effect for open quantum systems in infinite dimensions}
\author[M\"obus, Rouz\'e]{Tim M\"obus$^{1,2}$, Cambyse Rouz\'e$^{1,2}$}
\email{tim.moebus@tum.de, rouzecambyse@gmail.com}
\address{$^1$ Department of Mathematics, Technical University of Munich}
\address{$^2$ Munich Center for Quantum Science and Technology (MCQST),  M\"unchen, Germany}
\begin{document}
	\begin{abstract}
		In open quantum systems, the quantum Zeno effect consists in frequent applications of a given quantum operation, e.g.~a measurement, used to restrict the time evolution (due e.g.~to decoherence) to states that are invariant under the quantum operation. In an abstract setting, the Zeno sequence is an alternating concatenation of a contraction operator (quantum operation) and a $C_0$-contraction semigroup (time evolution) on a Banach space. In this paper, we prove the optimal convergence rate $\cO(\tfrac{1}{n})$ of the Zeno sequence by proving explicit error bounds. For that, we derive a new Chernoff-type $\sqrt{n}$-Lemma, which we believe to be of independent interest. Moreover, we generalize the convergence result for the Zeno effect in two directions: We weaken the assumptions on the generator, inducing the Zeno dynamics generated by an unbounded generator and we improve the convergence to the uniform topology. Finally, we provide a large class of examples arising from our assumptions.
	\end{abstract}
	\maketitle
	\tableofcontents
\newpage

	\section{Introduction}\label{sec:intro}
		In its original form, the quantum Zeno effect is defined for closed finite quantum systems. \citeauthor{Misra.1977}  predicted that "an unstable particle which is continuously observed to see whether it decays will never be found to decay!" \cite[Abst.]{Misra.1977}. In a more general setup, frequent measurements  enable a change in the time evolution and convergence to the so called \textit{Zeno dynamics}. Experimentally, the Zeno effect is verified for instance in \cite{Itano.1990, Fischer.2001}. In addition to its theoretical value, the quantum Zeno effect is used in error correction schemes to suppress decoherence in open quantum systems \cites{Hosten.2006}{Franson.2004}{Beige.2000}{Barankai.2018}{Luchnikov.2017}. The idea is to frequently measure the quantum state and thereby force the evolution to remain within the code space. With an appropriate measurement, one can even decouple the system from its environment \cites{Facchi.2004}{Burgarth.2020} and show that appropriately encoded states can be protected from decoherence with arbitrary accuracy \cites{Dominy.2013}{Erez.2004}. Moreover, the quantum Zeno effect has been used in commercial atomic magnetometers \cite{Kominis.2009}.
		
		Introduced by \citeauthor{Beskow.1967} in 1967 and later named by \citeauthor{Misra.1977} after the greek philosopher Zeno of Elea, the quantum Zeno effect in its simplest form can be stated as follows: given a projective measurement $P$ and a unitary time-evolution generated by a Hamiltonian $H$ acting on a finite dimensional Hilbert space $\cH$ \cite{Misra.1977}: For $n\to\infty$
		\begin{equation}\label{eq:zeno-misra-sudarshan}
    		(Pe^{\frac{it}{n}H})^n\longrightarrow e^{it\,PHP}
		\end{equation}
	    Since the seminal works \cites{Beskow.1967}{Misra.1977}, the result was extended in many different directions (overviews can be found in \cites{Facchi.2008}{Schmidt.2004}{Itano.1990}). Recently, the convergence in \Cref{eq:zeno-misra-sudarshan} was proven in the strong topology for unbounded Hamiltonian under the weak assumption that $PHP$ is the generator of a $C_0$-semigroup \cite{Exner.2021}. Earlier approaches used the so called \textit{asymptotic Zeno condition} \cites{Schmidt.2004}{Exner.1989}{Misra.1977}, which assumes $(\1-P)e^{itH}P$ and $Pe^{itH}(\1-P)$ to be Lipschitz continuous at $t=0$. This condition is natural in the sense that it is related to the boundedness of the first moment of the Hamiltonian in the initial state and is efficiently verifiable in practice. With the works \cites{Burgarth.2020}{Mobus.2019}{Barankai.2018}, the quantum Zeno effect was generalized to open and infinite dimensional quantum systems equipped with general quantum operations and uniformly continuous time evolutions. Note that in open quantum systems, we are dealing with operators acting on the Banach space $\mathcal{T}_1(\cH)$ of trace-class operators. More recently, \citeauthor{Becker.2021} generalized
		the Zeno effect further and interpreted the Zeno sequence as a product formula consisting of a contraction $M$ (quantum operation) and a $C_0$-contraction semigroup (quantum time evolution) on an abstract Banach space. Under a condition of \textit{uniform power convergence} of the power sequence $\{M^k\}_{k\in\mathbb{N}}$ towards a projection $P$ and boundedness of $M\cL $ and $\cL M$, they proved a quantitative bound on the convergence rate \cite{Becker.2021}:
		\begin{align}\label{eq:Becker}
			\|(Me^{\frac{t}{n}\cL})^nx-e^{tP\cL P}Px\|=\cO(n^{-\frac{1}{3}}(\|x\|+\|\cL x\|))\,,
		\end{align} 
		for $n\rightarrow\infty$ and all $x\in\cD(\cL)$. However, the optimality of \eqref{eq:Becker} was left open.\footnote{Note that we found an inconsistency in the proof of \cite[Lem.~2.1]{Zagrebnov.2017} (see \cite{Zagrebnov.2022}), which slightly reduces the convergence rate found in \Cref{eq:Becker} (more details are given in \Cref{sec:alternative-chernoff-lemma-trotter-product-formula}).}
		
		\subsubsection*{Main contributions:}
			In this paper, we achieve the optimal convergence rate $\mathcal{O}(n^{-1})$ of the Zeno sequence consistent with the finite-dimensional case \cite{Burgarth.2020} by providing an explicit bound which recently attracted interest in finite closed quantum systems \cite[Thm.~1]{Burgarth.2021}. Moreover, we generalize the results of \cite{Becker.2021} in two complementary directions:
			
			In \Cref{thm:spectral-gap} below, we assume a special case of the \textit{uniform power convergence} assumption on $M$, that is $\|M^n-P\|\leq\delta^n$ for some $\delta\in(0,1)$, and weaken the assumption on the semigroup to the \textit{uniform asymptotic Zeno condition} inherited from the unitary setting of \cite{Schmidt.2004}: for $t\rightarrow0$
			\begin{equation*}
				\norm{(\1-P) e^{t\cL}P}_\infty=\cO(t)\quad\text{and}\quad \norm{Pe^{t\cL}(\1-P)}_\infty=\cO(t).
			\end{equation*}
			Therefore, we prove the convergence of a non-trivial Zeno sequence in open quantum systems to a Zeno dynamics described by a possibly unbounded generator.\\
			Second, \Cref{thm:spectral-gap-uniform} is stated under slightly weaker assumptions as Theorem 3 in \cite{Becker.2021} and improves the result to the optimal convergence rate and to the uniform topology.
		
			In order to achieve these results, we prove a modified Chernoff $\sqrt{n}$-Lemma in \Cref{lem:improved-chernoff}, find a quantitative convergence rate for $\operatorname{exp}\big({nP(e^{\frac{1}{n}t\cL}-\1)P}\big)P-\operatorname{exp}(tP\cL P)P$ as $n \rightarrow\infty$, where $P\cL P$ is possibly unbounded, and prove the upper semicontinuity of parts of the spectrum of $Me^{t\cL}$ under tight assumptions.
		
		\subsubsection*{Organization of the paper:}
			In \Cref{sec:prelim}, we provide a short recap on bounded and unbounded operator theory. We expose our main results in \Cref{sec:assumptions-results}. \Cref{sec:alternative-chernoff-lemma-trotter-product-formula} deals with the modified Chernoff $\sqrt{n}$-Lemma and some of its implications as regards to Trotter-Kato's product formula. Then, we prove our main theorems under the weakest possible assumptions on the $C_0$-semigroup in \Cref{sec:unbounded-generator-zeno-subspace}, and under the weakest possible assumptions on $M$ in \Cref{sec:finitely-many-eigenvalues}. Our results are illustrated by three large classes of examples in finite and infinite dimensional quantum systems in \Cref{sec:applications}. Finally, we discuss some remaining open questions in \Cref{sec:discussion}.

	\section{Preliminaries}\label{sec:prelim}
		Let $(\cX,\|\cdot\|)$ be a Banach space and $(\cB(\cX),\|\cdot\|_{\infty})$ be the associated space of bounded linear operators over $\C$ equipped with the operator norm,  i.e.~$\|C\|_{\infty}\coloneqq\sup_{x\in\cX\backslash\{0\}}\frac{\|Cx\|}{\|x\|}$, and the identity $\1\in\cB(\cX)$. By a slight abuse of notation, we extend all densely defined and bounded operators by the \textit{bounded linear extension theorem} to bounded operators on $\cX$ \cite[Thm.~2.7-11]{Kreyszig.1989}.  
		A sequence $(C_k)_{k\in\N}\subset\cB(\cX)$ converges uniformly to $C\in\cB(\cX)$ if $\lim_{k\rightarrow\infty}\|C_k-C\|_\infty=0$ and strongly if $\lim_{k\rightarrow\infty}\|C_kx-Cx\|=0$ for all $x\in\cX$. The integral over bounded vector-valued maps, e.g.~$[a,b]\rightarrow\cX$ or $[a,b]\rightarrow \cB(\cX)$ with $a<b$, is defined by the \textit{Bochner integral}, which satisfies the triangle inequality, is invariant under linear transformations, and satisfies the \textit{fundamental theorem of calculus} if the map is continuously differentiable \cites[Sec.~3.7-8]{Hille.2000}[Sec.~A.1-2]{Liu.2015}.\\
		We define the \textit{resolvent set} of $C\in\cB(\cX)$ by $\rho(C)\coloneqq\{z\in\C\;|\;(z-C)\text{ bijective}\}$ and its \textit{spectrum} by $\sigma(C)\coloneqq\C\backslash\rho(C)$. Then, we define the \textit{resolvent} of $C$ by $R(z,C)\coloneqq(z-C)^{-1}$ for all $z\in\rho(C)$. Note that the resolvent is continuous in $z$ \cite[Thm.~3.11]{Kato.1995} and thereby uniformly bounded on compact intervals. A point $\lambda\in\sigma(C)$ is called isolated if there is a neighbourhood $V\subset\C$ of $\lambda$ such that $V\cap\sigma(C)=\lambda$. The spectrum is separated by a curve $\Gamma:I\rightarrow\rho(C)$ if $\sigma(C)=\sigma_1\cup\sigma_2$, the subsects $\sigma_1$, $\sigma_2$ have disjoint neighbourhoods $V_1$, $V_2$, and the curve is simple, closed, rectifiable\footnote{A simple, closed, and rectifiable curve is continuous, has finite length, joins up, and does not cross itself.}, and encloses one of the neighbourhoods without intersecting $V_1$ and $V_2$ \cite[Sec.~III§6.4]{Kato.1995}. This definition can be generalized to a finite sum of subsets separated by curves. An example for a simple, closed, and rectifiable curve is the parametrization of the boundary of a complex ball around the origin with radius $r$. We denote the open ball by $\D_r$ and its boundary by $\partial\D_r$.\\
		We define the \textit{spectral projection} $P\in\cB(\cX)$ with respect to a separated subset of $\sigma(C)$ enclosed by a curve $\Gamma$ via the \textit{holomorphic functional calculus} \cite[Thm.~2.3.1-3]{Simon.2015}
		\begin{equation}\label{eq:spectral-projection}
			P=\frac{1}{2\pi i}\oint_{\Gamma} R(z,C)dz.
		\end{equation}
		As the name suggests, $P\in\cB(\cX)$ satisfies the projection property, $P^2=P$. We denote the complementary projection $\1-P$ by $P^\perp$. If $\Gamma$ encloses the isolated one-point subset $\{\lambda\}\subset\sigma(C)$, then $CP=\lambda P+N$ with the \textit{quasinilpotent operator} \cite[Thm.~2.3.5]{Simon.2015}
		\begin{equation}\label{eq:quasinilpotent-operator}
			N\coloneqq\frac{1}{2\pi i}\oint_{\Gamma}(z-\lambda)R(z,C)dz\in\cB(\cX)\,.
		\end{equation}
		A quasinilpotent operator $N\in\cB(\cX)$ is characterized by $\lim\limits_{n\rightarrow\infty}\|N^n\|_\infty^{\frac{1}{n}}=0$.
		
		The time evolution in the Zeno sequence is described by a $C_0$-contraction semigroup \cite[Def.~5.1]{Engel.2000}. A family of operators $(T_t)_{t\geq0}\subset\cB(\cX)$ is called a $C_0$-semigroup if the family satisfies the semigroup properties, namely: (i) $T_tT_s=T_{t+s}$ for all $t,s\geq0$, (ii) $T_0=\1$, and (iii) $t\mapsto T_t$ is strongly continuous in $0$. The generator of a $C_0$-semigroup is a possibly unbounded operator defined by taking the strong derivative in $0$, that is
		\begin{equation*}
			\cL x\coloneqq\lim\limits_{h\downarrow 0}\frac{T(h)x-x}{h}
		\end{equation*}
		for all $x\in\cD(\cL)\coloneqq\{x\in\cX\;|\;t\mapsto T_tx\text{ differentiable}\}$ and is denoted by $(\cL,\cD(\cL))$ (cf.~\cite[Ch.~II.1]{Engel.2000}). Since the generator defines the $C_0$-semigroup uniquely, we denote it by $(e^{t\cL})_{t\geq0}$, although the series representation of the exponential is not well-defined (cf.~\cite[Sec.~IX§1]{Kato.1995}). A semigroup is called contractive if $\sup_{t\ge 0}\|e^{t\cL}\|_\infty\leq 1$. Note that the linear combination and concatenation of two unbounded operators $(\cK,\cD(\cK))$ and $(\cL,\cD(\cL))$ is defined on $\cD(\cK+\cL)=\cD(\cK)\cap\cD(\cL)$ and $\cD(\cK\cL)=\cL^{-1}(\cD(\cK))$ (cf.~\cite[Sec.~III§5.1]{Kato.1995}). The following lemma summarizes some properties of $C_0$-semigroups which are used in this work:
		\begin{lem}[\texorpdfstring{\cite[Lem.~II.1.3]{Engel.2000}}{[8, Lem.~II.1.3]}]\label{lem:properties-semigroups}
			Let $(\cL,\cD(\cL))$ be the generator of the $C_0$-semigroup $(e^{t\cL})_{t\geq0}$ defined on $\cX$. Then the following properties hold:
			\begin{enumerate}
				\item If $x\in\cD(\cL)$, then $e^{t\cL}x\in\cD(\cL)$ and 
				\begin{equation*}
					\frac{\partial}{\partial t}e^{t\cL}x=e^{t\cL}\cL x=\cL e^{t\cL}x\qquad \text{for all $t\geq0$\,.}
				\end{equation*}
				\item For every $t\geq0$ and $x\in\cX$, one has
				\begin{equation*}
					\int_{0}^{t}e^{\tau\cL}xd\tau\in\cD(\cL)\,.
				\end{equation*}
				\item For every $t\geq0$, one has
				\begin{align*}
					e^{t\cL}x-x&=\cL\int_{0}^{t}e^{\tau\cL}xd\tau\quad\text{ for all }x\in\cX\\
					&=\int_{0}^{t}e^{\tau\cL}\cL xd\tau\quad\text{ if }x\in\cD(\cL).
				\end{align*}
			\end{enumerate}
		\end{lem}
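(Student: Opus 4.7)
The three statements are standard consequences of strong continuity and the semigroup law $e^{(t+s)\cL}=e^{t\cL}e^{s\cL}$, so the plan is to chain them in the order $(2) \Rightarrow (3\text{a}) \Rightarrow (1) \Rightarrow (3\text{b})$, since parts (2) and the first identity of (3) are the only ones that use neither the domain $\cD(\cL)$ nor the intertwining of $\cL$ with $e^{t\cL}$.

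To establish (2) and the first identity of (3) simultaneously, fix $x\in\cX$ and set $y_t\coloneqq\int_0^t e^{\tau\cL}x\, d\tau$, which exists as a Bochner integral by strong continuity and the uniform boundedness of $\tau\mapsto e^{\tau\cL}$ on $[0,t]$. I would apply $e^{h\cL}$ under the integral (valid because $e^{h\cL}$ is bounded linear) and perform the standard change of variables
\begin{equation*}
  \frac{e^{h\cL}y_t-y_t}{h}=\frac{1}{h}\int_0^t\bigl(e^{(\tau+h)\cL}x-e^{\tau\cL}x\bigr)d\tau=\frac{1}{h}\int_t^{t+h} e^{\tau\cL}x\,d\tau-\frac{1}{h}\int_0^h e^{\tau\cL}x\,d\tau.
\end{equation*}
By the strong continuity of $\tau\mapsto e^{\tau\cL}x$, both one-sided means converge as $h\downarrow 0$, the first to $e^{t\cL}x$ and the second to $x$. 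Hence the right-derivative of $s\mapsto e^{s\cL}y_t$ at $s=0$ exists and equals $e^{t\cL}x-x$, which by the very definition of $(\cL,\cD(\cL))$ places $y_t\in\cD(\cL)$ with $\cL y_t=e^{t\cL}x-x$.

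For (1), take $x\in\cD(\cL)$. Using the semigroup law on both sides,
\begin{equation*}
  \frac{e^{(t+h)\cL}x-e^{t\cL}x}{h}=e^{t\cL}\,\frac{e^{h\cL}x-x}{h}=\frac{e^{h\cL}(e^{t\cL}x)-e^{t\cL}x}{h}.
\end{equation*}
The middle expression converges to $e^{t\cL}\cL x$ as $h\downarrow 0$ because $e^{t\cL}$ is bounded and $x\in\cD(\cL)$. Equality of the outer expressions shows that the right-derivative of $s\mapsto e^{s\cL}(e^{t\cL}x)$ at $s=0$ exists, so $e^{t\cL}x\in\cD(\cL)$ and $\cL e^{t\cL}x=e^{t\cL}\cL x$. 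To upgrade the right-derivative to a two-sided derivative for $t>0$, I would write the left-difference quotient as $e^{(t-h)\cL}\tfrac{e^{h\cL}x-x}{h}$ and use the joint strong continuity $(s,y)\mapsto e^{s\cL}y$ on $[0,t]\times\{\cL x\}$ together with the uniform bound on $\|e^{s\cL}\|_\infty$ to pass to the limit.

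Finally, (3b) follows immediately: for $x\in\cD(\cL)$ the map $\tau\mapsto e^{\tau\cL}x$ is continuously differentiable by (1) with derivative $e^{\tau\cL}\cL x$, so the fundamental theorem of calculus for Bochner integrals (already quoted in the preliminaries) yields $e^{t\cL}x-x=\int_0^t e^{\tau\cL}\cL x\, d\tau$. The single subtle point, and the only place a reader might hesitate, is the domain preservation $e^{t\cL}\cD(\cL)\subset\cD(\cL)$; this is handled entirely by the identity $e^{t\cL}\tfrac{e^{h\cL}x-x}{h}=\tfrac{e^{h\cL}(e^{t\cL}x)-e^{t\cL}x}{h}$, which uses nothing beyond the semigroup property and the boundedness of $e^{t\cL}$.
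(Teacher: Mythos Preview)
Your argument is correct and is essentially the standard textbook proof (indeed it is the proof in \cite[Lem.~II.1.3]{Engel.2000}). Note that the paper does not supply its own proof of this lemma; it is merely quoted from Engel--Nagel, so there is nothing further to compare.
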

		Additionally, a general \textit{integral formulation} for the difference of two semigroups is discussed in the following \namecref{lem:integral-equation-semigroups} (cf.~\cite[Cor.~III.1.7]{Engel.2000}).
		\begin{lem}[Integral equation for semigroups]\label{lem:integral-equation-semigroups}
			Let $(\cL,\cD(\cL))$ and $(\cK,\cD(\cK))$ be the generators of two $C_0$-semigroups on a Banach space $\cX$ and $t\geq0$. Assume $\cD(\cL)\subset\cD(\cK)$ and $[0,t]\ni s\mapsto e^{s\cK}(\cK-\cL)e^{(t-s)\cL}x$ is continuous for all $x\in\cD(\cL)$. Then, for all $x\in\cD(\cL)$
			\begin{equation*}
				e^{t\cK}x-e^{t\cL}x=\int_{0}^{t}e^{s\cK}\left(\cK-\cL\right)e^{(t-s)\cL}xds.
			\end{equation*}
		\end{lem}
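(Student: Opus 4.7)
My plan is to prove the identity by evaluating the function $f : [0,t] \to \cX$ defined by $f(s) \coloneqq e^{s\cK} e^{(t-s)\cL} x$ for $x \in \cD(\cL)$ at the endpoints and integrating its derivative. Clearly $f(0) = e^{t\cL} x$ and $f(t) = e^{t\cK} x$, so if we can show that $f$ is continuously differentiable on $[0,t]$ with derivative $f'(s) = e^{s\cK}(\cK - \cL) e^{(t-s)\cL} x$, then the fundamental theorem of calculus for Bochner integrals (recalled in the preliminaries) yields the claim $f(t) - f(0) = \int_0^t f'(s)\,ds$.

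For the differentiation, I would form the difference quotient and add and subtract $e^{(s+h)\cK} e^{(t-s)\cL} x$:
\begin{equation*}
\frac{f(s+h) - f(s)}{h} = e^{(s+h)\cK}\,\frac{e^{(t-s-h)\cL} x - e^{(t-s)\cL} x}{h} + \frac{e^{(s+h)\cK} - e^{s\cK}}{h}\, e^{(t-s)\cL} x.
\end{equation*}
Since $x \in \cD(\cL)$, part (i) of \Cref{lem:properties-semigroups} gives $e^{(t-s)\cL} x \in \cD(\cL)$ and $\partial_r e^{r\cL} x = \cL e^{r\cL} x$, so the first bracket tends to $-e^{s\cK} \cL e^{(t-s)\cL} x$ (using the uniform boundedness of $e^{(s+h)\cK}$ on compact intervals for the $C_0$-contraction semigroup). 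For the second bracket, the hypothesis $\cD(\cL) \subset \cD(\cK)$ ensures that $e^{(t-s)\cL} x \in \cD(\cK)$, so again by \Cref{lem:properties-semigroups} applied to $\cK$ the bracket converges to $\cK e^{s\cK} e^{(t-s)\cL} x = e^{s\cK} \cK e^{(t-s)\cL} x$. Combining yields the stated formula for $f'(s)$, and a symmetric argument handles the left derivative.

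Finally, the continuity assumption on $s \mapsto e^{s\cK}(\cK-\cL) e^{(t-s)\cL} x$ guarantees that $f'$ is continuous, hence Bochner integrable, and one can invoke the fundamental theorem of calculus to conclude. The main subtlety I anticipate is the bookkeeping with unbounded operators: one must invoke $\cD(\cL) \subset \cD(\cK)$ at precisely the right moment to legitimise passing $\cK$ through $e^{s\cK}$, and one must use the contractivity (or at least local uniform boundedness) of $e^{r\cK}$ to interchange the limit $h \to 0$ with the action of $e^{(s+h)\cK}$ in the first term. Once these points are addressed, the integral identity follows directly.
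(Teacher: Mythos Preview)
Your approach is correct and is essentially the same as the paper's: define $f(s)=e^{s\cK}e^{(t-s)\cL}x$, differentiate in $s$ via a product rule to obtain $f'(s)=e^{s\cK}(\cK-\cL)e^{(t-s)\cL}x$, and apply the fundamental theorem of calculus using the assumed continuity of $f'$. The paper merely outsources the product-rule step to \cite[Lem.~B.16]{Engel.2000}, whereas you spell out the difference quotient explicitly; one small remark is that the lemma does not assume contractivity, but as you note yourself, local uniform boundedness of any $C_0$-semigroup on compact time intervals suffices to pass the limit through $e^{(s+h)\cK}$.
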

		\begin{proof}
			Assume $x\in\cD(\cL)$. In the first part, we follow the proof of Corollary III.1.7 in \cite{Engel.2000}. By definition, the vector-valued maps $[0,t]\ni s\mapsto e^{s\cK}x$ and $[0,t]\ni s\mapsto e^{(t-s)\cL}x$ are continuously differentiable and $\cD(\cL) $ is invariant under the second map. Then, Lemma B.16 in \cite{Engel.2000} shows
			\begin{equation*}
				\frac{\partial}{\partial s}e^{s\cK}e^{(t-s)\cL}x=e^{s\cK}(\cK-\cL)e^{(t-s)\cL}x.
			\end{equation*}
			By assumption the above derivative is continuous in $s\in[0,t]$, so that the fundamental theorem of calculus proves the integral equation.
		\end{proof}
		\begin{cor}\label{cor:integral-equation-semigroups-upper-bound}
			Let $(\cL,\cD(\cL))$ be the generator of a $C_0$-semigroup on a Banach space $\cX$, $t\geq0$, and $A\in\cB(\cX)$. Then, the unbounded operator $\cK=\cL+A$ defined on $(\cL+A,\cD(\cL))$ is the generator of a  $C_0$-semigroup. If additionally the semigroup $t\mapsto e^{t\cL}$ is quasi-contractive, i.e. $\|e^{t\cL}\|_\infty\leq e^{tw}$ for a $w\in\R$, then
		    \begin{equation*}
				\big\|e^{t(\cL+A)}-e^{t\cL}\big\|_\infty\leq e^{tw}(e^{t\norm{A}_\infty}-1).
			\end{equation*}
		\end{cor}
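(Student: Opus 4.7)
The two assertions correspond to two classical ingredients that must be combined. First, to get that $\cK = \cL + A$ on $\cD(\cL)$ generates a $C_0$-semigroup, I would invoke the bounded perturbation theorem for $C_0$-semigroups (see e.g.\ Theorem III.1.3 in \cite{Engel.2000}), which applies verbatim because $A \in \cB(\cX)$. The same theorem also yields the a priori growth bound $\|e^{t(\cL+A)}\|_\infty \le e^{t(w+\|A\|_\infty)}$ whenever $\|e^{t\cL}\|_\infty \le e^{tw}$, which is what I will need for the estimate.

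Second, I would apply \Cref{lem:integral-equation-semigroups} with $\cK = \cL + A$. The hypothesis $\cD(\cL) \subset \cD(\cK)$ is automatic since $\cK$ is defined on $\cD(\cL)$, and the continuity of $s \mapsto e^{s\cK}(\cK-\cL)e^{(t-s)\cL}x = e^{s\cK} A\, e^{(t-s)\cL}x$ on $[0,t]$ for every $x \in \cD(\cL)$ follows at once from the strong continuity of both semigroups together with boundedness of $A$. This gives
\begin{equation*}
e^{t(\cL+A)}x - e^{t\cL}x = \int_0^t e^{s(\cL+A)} A\, e^{(t-s)\cL} x\, ds \qquad \text{for all } x \in \cD(\cL).
\end{equation*}

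Third, I would take norms inside the Bochner integral, using the quasi-contractive bounds on both semigroups:
\begin{equation*}
\|e^{t(\cL+A)}x - e^{t\cL}x\| \;\le\; \|A\|_\infty \|x\| \int_0^t e^{s(w+\|A\|_\infty)} e^{(t-s)w}\, ds \;=\; e^{tw}\bigl(e^{t\|A\|_\infty}-1\bigr)\|x\|.
\end{equation*}
Since $\cD(\cL)$ is dense in $\cX$ and both $e^{t(\cL+A)}$ and $e^{t\cL}$ are bounded, the inequality extends to all $x \in \cX$ by continuity, yielding the claimed operator-norm bound.

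I expect no substantial obstacle: the only mild points to verify carefully are the continuity hypothesis of \Cref{lem:integral-equation-semigroups} and the density argument used to promote the pointwise estimate to a uniform one. The elementary integration $\int_0^t e^{s\|A\|_\infty}\,ds = (e^{t\|A\|_\infty}-1)/\|A\|_\infty$ is what produces the exact form of the bound, and the factor $e^{tw}$ appears through the product $e^{s\|A\|_\infty} \cdot e^{tw}$ after factoring $e^{sw} \cdot e^{(t-s)w} = e^{tw}$.
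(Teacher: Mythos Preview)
Your proposal is correct and follows essentially the same route as the paper's proof: invoke the bounded perturbation theorem to obtain generation of $e^{t(\cL+A)}$ together with the growth bound $e^{t(w+\|A\|_\infty)}$, apply \Cref{lem:integral-equation-semigroups} with $\cK-\cL=A$, and then estimate the integral via $e^{sw}e^{(t-s)w}=e^{tw}$ and $\int_0^t e^{s\|A\|_\infty}\,ds$. The paper differs only cosmetically in the references it cites (Hille--Phillips rather than Engel--Nagel for the perturbation theorem) and in passing directly to operator norms rather than making the density extension explicit.
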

		\begin{proof}
		   By Theorem 13.2.1 and the Corollary afterwards in \cite{Hille.2000}, the operator $(\cL+A,\cD(\cL))$ is the generator of a quasi-contractive $C_0$-semigroup, i.e.~$\|e^{t(\cL+A)}\|_\infty\leq e^{t\tilde{w}}$ with $\tilde{w}=w+\|A\|_\infty$. Moreover, $[0,t]\ni s\mapsto e^{s\cK}Ae^{(t-s)\cL}x$ is continuous by \cite[Lem.~B.15]{Engel.2000} so that \Cref{lem:integral-equation-semigroups} shows\\
		    \begin{equation*}
		        \norm{e^{t(\cL+A)}-e^{t\cL}}_\infty\leq\int_{0}^{t}\norm{e^{s(\cL+A)}Ae^{(t-s)\cL}}_\infty ds\leq e^{tw}\norm{A}_\infty\int_{0}^{t} e^{s\norm{A}_\infty}ds=e^{tw}(e^{t\norm{A}_\infty}-1).
		    \end{equation*}
		\end{proof}
		
	\section{Main results}\label{sec:assumptions-results}
		In this section, we list the main results achieved in the paper. Since the quantum Zeno effect consists of a contraction operator $M\in\cB(\cX)$ (e.g.~measurement) and a $C_0$-semigroup (e.g.~quantum time evolution), the assumptions on the contraction operator and on the $C_0$-semigroup influence each other. We first start with weak assumptions on the semigroup and stronger assumption on the contraction operator:
		\begin{manualthm}{I}[stated as~\Caref{thm:spectral-gap} in main text]\label{thm:spectral-gap-prev}
		    Let $(\cL,\cD(\cL))$ be the generator of a $C_0$-contraction semigroup on $\cX$, $M\in\cB(\cX)$ a contraction, and $P$ a projection satisfying 
			\begin{equation}\label{unifpower1-prev}
				\norm{M^n-P}_\infty\leq\delta^n
			\end{equation}
	        for $\delta\in(0,1)$ and all $n\in\N$. Moreover, assume there is $b\geq0$ so that for all $t\ge 0$
			\begin{equation}\label{eq:thm1-asympzeno-prev}
				\norm{Pe^{t\cL}P^\perp}_\infty\leq tb\quad\text{and}\quad\norm{P^\perp e^{t\cL}P}_\infty\leq tb.
			\end{equation}
			If $(P\cL P,\cD(\cL P))$ is the generator of a $C_0$-semigroup, then for any $t\ge 0$ and all $x\in\cD((\cL P)^2)$ 
			\begin{equation*}
				\norm{\left(Me^{\frac{t}{n}\cL}\right)^nx-e^{tP\cL P}Px}\leq\frac{c(t,b)}{n}\left(\norm{x}+\norm{\cL P x}+\norm{(\cL P)^2x}\right)+\delta^n\norm{x}
			\end{equation*}
			for some constant $c(t,b)>0$ depending on $t$ and $b$, but independent of $n$ .
		\end{manualthm}
		\begin{rmk*}
			Note that the assumption on $M$ is a special case of the so-called \textit{uniform power convergence} assumption (q.v.~\Caref{unifpower-prev}) and the assumption (\ref{eq:thm1-asympzeno-prev}) on the $C_0$-semigroup is a generalization of the uniform \textit{asymptotic Zeno condition} which implies the convergence in the case of a unitary evolution frequently measured by a projective measurement \cite[Sec.~3.1]{Schmidt.2004}. Note that in that specific case, \cite{Exner.2021} recently managed to remove the asymptotic Zeno condition. Moreover, the assumption that $(P\cL P,\cD(\cL P))$ is a generator can be relaxed to the assumption that $P\cL P$ is closeable and its closure defines a generator (q.v.~remark after \Caref{lem:proofthm1-term3}). The famous \textit{Generation Theorem} by Hille and Yosida provides a sufficient condition under which $\overline{P\cL P}$ is a generator \cite[Thm.~3.5-3.8]{Engel.2000}.
		\end{rmk*}
	    The following example confirms the optimality of the achieved convergence rate.
		\begin{ex}
			Let $\left\{\ket{1},\ket{2},\ket{3}\right\}$ be an orthonormal basis of $\R^3$ and $\delta\in(0,1)$. We define, 
			\begin{equation*}
				\cL\coloneqq \ket{1}\bra{2}\quad\text{and}\quad M\coloneqq\ket{1}\bra{1}+\delta\ket{3}\bra{3}.
			\end{equation*}
			Then, $P=\ket{1}\bra{1}$, $(\1-P) M=\delta\ket{3}\bra{3}$, $M\cL=\cL M$, and $\cL^2=0=\cL P$, $\|M^n-P\|_\infty\leq\delta^n$. Using these properties, $Me^{\frac{t}{n}\cL}=M +\frac{t}{n}\cL$ and for $t\in[0,\infty)$
			\begin{align*}
			    \left(Me^{\frac{t}{n}\cL}\right)^n&=\left(M+\frac{t}{n}\cL\right)^n\\
			    &=\delta^n\ket{3}\bra{3}+\ket{1}\bra{1}+\frac{t}{n}\ket{1}\bra{2}\\
			    &=\left((\1-P)M\right)^ne^{t\cL}+Pe^{tP\cL P}+\frac{t}{n}P\cL.
			\end{align*}
			Therefore, 
			\begin{equation*}
				\norm{\left(Me^\frac{t}{n}\cL\right)^n-Pe^{tP\cL P}}_\infty=\max\{\tfrac{t}{n},\delta^n\},
			\end{equation*}
			which shows the optimality of our convergence rate in \Cref{thm:spectral-gap}.
		\end{ex}
        Beyond the proven asymptotics, we find explicit error bounds in \Cref{lem:proofthm1-term1}, \ref{lem:proofthm1-term2}, and \ref{lem:proofthm1-term3}, which simplify if $\cL$ is bounded to the following explicit convergence bound depending on the generator $\cL$, the projection $P$, the spectral gap $\delta$, and the time $t$:
		\begin{prop}\label{cor:explicit-bound-thm1}
			Let $\cL\in\cB(\cX)$ be the generator of a contractive uniformly continuous semigroup and $M\in\cB(\cX)$ a contraction satisfying
			\begin{equation*}
				\norm{M^n-P}_\infty\leq\delta^n
			\end{equation*}
			for a projection $P\in\cB(\cX)$, $\delta\in(0,1)$, and all $n\in\N$. Then, for all $t\geq0$ and $n\in\N$,
			\begin{align*}
		        \norm{\left(Me^{\frac{t}{n}\cL}\right)^n-e^{tP\cL P}P}_\infty\leq c_{p}\frac{t\|\cL\|_\infty}{n} &+\left(\frac{c_{p}+(1+e^{\tilde{b}})(1+c_{p}^2)}{2}\right)\frac{t^2\|\cL\|_\infty^2}{n}\\
		        &+\delta^n+\frac{2\delta}{1-\delta}\frac{e^{3t\|\cL\|_\infty c_{p}}}{n}
	    	\end{align*}
			where $c_{p}\coloneqq\|\1-P\|_\infty$ and $e^{\tilde{b}}=\sup_{s\in[0,t]}\|e^{sP\cL P}\|_\infty$.
		\end{prop}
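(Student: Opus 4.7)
The strategy is to verify the hypotheses of Theorem~\ref{thm:spectral-gap-prev} in the bounded-$\cL$ setting and then substitute the concrete forms of the explicit remainder terms furnished by Lemmas~\ref{lem:proofthm1-term1}, \ref{lem:proofthm1-term2}, and \ref{lem:proofthm1-term3}. This is not a new convergence result but a careful bookkeeping exercise on the constants, exploiting the fact that in the bounded case every occurrence of $\|\cL P x\|$ or $\|(\cL P)^2 x\|$ collapses to a power of $\|\cL\|_\infty\,\|x\|$ and the domain $\cD((\cL P)^2)$ becomes all of $\cX$.

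First I would check the hypotheses. Since $M$ is a contraction and $\|M^n-P\|_\infty\le\delta^n\to 0$, passing to the limit gives $\|P\|_\infty\le 1$. Because $\cL$ is a bounded generator of a contractive semigroup, $\|e^{s\cL}-\1\|_\infty\le s\|\cL\|_\infty$, and writing $Pe^{s\cL}P^\perp = P(e^{s\cL}-\1)P^\perp$ (since $PP^\perp=0$) yields
\begin{equation*}
\|Pe^{s\cL}P^\perp\|_\infty\le \|P\|_\infty\, c_{p}\, s\|\cL\|_\infty \le c_{p}\|\cL\|_\infty\, s,
\end{equation*}
and symmetrically for $P^\perp e^{s\cL}P$. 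Hence the uniform asymptotic Zeno condition of Theorem~\ref{thm:spectral-gap-prev} holds with $b=c_{p}\|\cL\|_\infty$. Boundedness of $\cL$ also makes $P\cL P$ bounded and therefore automatically a generator of a uniformly continuous semigroup.

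Next I would apply the three explicit-bound lemmas and specialize. These lemmas decompose the Zeno error into three contributions: one arising from replacing each $M$-factor in the product by the spectral projection $P$ (producing the $\delta^n$ term and, after telescoping, the geometric residual), a Chernoff-type comparison between $(Pe^{\frac{t}{n}\cL}P)^n P$ and $\exp\!\big(nP(e^{\frac{t}{n}\cL}-\1)P\big)P$ (responsible for one half of the quadratic-in-$t$ coefficient), and a Taylor-type comparison between that exponential and $e^{tP\cL P}P$ (responsible for the other half, where $e^{\tilde b}=\sup_{s\in[0,t]}\|e^{sP\cL P}\|_\infty$ enters). Substituting $b=c_{p}\|\cL\|_\infty$ into the explicit constants, using $\|\cL P\|_\infty\le\|\cL\|_\infty$ and $\|(\cL P)^2\|_\infty\le\|\cL\|_\infty^2$ to eliminate the domain norms, and recombining produces exactly the four stated terms: the linear contribution $c_{p}t\|\cL\|_\infty/n$, the quadratic contribution with coefficient $\frac{c_{p}+(1+e^{\tilde b})(1+c_{p}^2)}{2}$, the pure exponential tail $\delta^n$, and the remainder $\frac{2\delta}{1-\delta}\frac{e^{3t\|\cL\|_\infty c_{p}}}{n}$.

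The main technical obstacle is the spectral-gap bookkeeping in the last term. After telescoping the replacement of $M$-factors by $P$, one collects a geometric sum $\sum_{k\ge 1}\delta^k = \delta/(1-\delta)$ multiplied by an $n$-fold product of one-step evolution factors of size $1 + O(t/n)\cdot c_{p}\|\cL\|_\infty$; these exponentiate to the prefactor $e^{3t\|\cL\|_\infty c_{p}}$ appearing in the bound. The coefficient $3$ in that exponent is fixed by the precise split between the $P$, $P^\perp$, and $\cL$ factors allowed by the asymptotic Zeno estimate above, and is the one piece of the argument where the constants must be tracked carefully rather than simply extracted from the statements of Lemmas~\ref{lem:proofthm1-term1}--\ref{lem:proofthm1-term3}. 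Once that is done, the remaining steps are purely algebraic simplifications.
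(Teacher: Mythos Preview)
Your approach is correct and essentially the same as the paper's: verify the asymptotic Zeno condition with $b=c_p\|\cL\|_\infty$, then plug into the explicit bounds of Lemmas~\ref{lem:proofthm1-term1}--\ref{lem:proofthm1-term3} and simplify. One small correction to your last paragraph: the factor $e^{3t c_p\|\cL\|_\infty}$ does not arise from a separate telescoping computation but simply from algebraically coarsening the already-explicit bound of Lemma~\ref{lem:proofthm1-term1}, namely $\tfrac{b(2+b)(\delta-\delta^n)}{1-\delta}e^{2b}\le \tfrac{2\delta}{1-\delta}e^{3b}$ via $b(2+b)\le 2e^{b}$, so no additional tracking beyond the lemma statements is needed.
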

		Note that the above proposition can be easily extended to the case of an unbounded generator with the assumption that $\cL M$ and $M\cL$ are densely defined and bounded. Another advantage of our setup is the freedom it provides for choosing the Banach space $\cX$, which allows us to treat open quantum systems ($\cX=\cT(\cH)$ the trace class operators over a Hilbert space) and closed quantum systems ($\cX=\cH$ a Hilbert space) on the same footing. In the case of finite dimensional closed quantum systems, \Cref{cor:explicit-bound-thm1} reduces to the following bound, which was independently proven in \cite[Thm.~1]{Burgarth.2021} (up to a change of the numerical constant in the quadratic term from $\tfrac{5}{2}$ to $2$):
	
		\begin{cor}\label{cor:explicit-bound-thm1-closed-sys}
			Let $\cH$ be a Hilbert space, $H\in\cB(\cH)$ be a hermitian operator, and $P\in\cB(\cH)$ a hermitian projection. Then,
			\begin{equation*}
				\norm{\left(Pe^{-i\frac{t}{n}H}\right)^n-e^{-itPH P}P}_\infty\leq\frac{1}{n}\left(t\norm{H}_\infty+\frac{5}{2}t^2\norm{H}_\infty^2\right)
			\end{equation*}
		\end{cor}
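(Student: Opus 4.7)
The plan is to specialize Proposition~\ref{cor:explicit-bound-thm1} to the setting $\cL = -iH$ and $M = P$, and then to exploit the fact that $M$ \emph{exactly} equals the projection $P$ in order to let the spectral-gap parameter $\delta$ tend to $0$.

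First I would verify the hypotheses of Proposition~\ref{cor:explicit-bound-thm1}. Since $H \in \cB(\cH)$ is bounded and hermitian, $\cL := -iH \in \cB(\cH)$ generates the unitary (hence isometric, hence contractive) uniformly continuous semigroup $(e^{-itH})_{t\ge 0}$. The hermitian projection $M := P$ is a contraction (from $\norm{Px}^2 = \langle Px,x\rangle \le \norm{Px}\norm{x}$). Crucially, $P^2 = P$ gives $M^n = P$ for every $n \ge 1$, hence $\norm{M^n - P}_\infty = 0 \le \delta^n$ holds for \emph{every} $\delta \in (0,1)$.

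Next I would evaluate the constants appearing in the bound. Clearly $\norm{\cL}_\infty = \norm{H}_\infty$. Because $\1 - P$ is itself a hermitian projection, $c_{p} = \norm{\1 - P}_\infty \le 1$. Moreover, $PHP$ is hermitian, so $e^{-isPHP}$ is unitary, and therefore $\sup_{s\in[0,t]}\norm{e^{sP\cL P}}_\infty = 1$; in the notation of Proposition~\ref{cor:explicit-bound-thm1}, this reads $e^{\tilde b} = 1$. Inserting these values (and taking the worst case $c_p = 1$) into Proposition~\ref{cor:explicit-bound-thm1}, the quadratic coefficient collapses to $\tfrac{1 + (1+1)(1+1)}{2} = \tfrac{5}{2}$, yielding, for every $\delta \in (0,1)$,
\begin{equation*}
\norm{\left(Pe^{-i\frac{t}{n}H}\right)^n - e^{-itPHP}P}_\infty \;\le\; \frac{t\norm{H}_\infty}{n} + \frac{5}{2}\,\frac{t^2\norm{H}_\infty^2}{n} + \delta^n + \frac{2\delta}{1-\delta}\,\frac{e^{3t\norm{H}_\infty}}{n}.
\end{equation*}

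Finally, since the left-hand side is independent of $\delta$, I would let $\delta \downarrow 0$: with $n$, $t$ and $H$ fixed, both $\delta^n$ and $\tfrac{2\delta}{1-\delta}e^{3t\norm{H}_\infty}/n$ tend to $0$, and the claimed bound follows directly. There is essentially no technical obstacle here; the only substantive observation is that the uniform-power-convergence hypothesis is trivially satisfied for \emph{every} $\delta \in (0,1)$, so we are free to optimize it away, which is precisely what allows the closed-system bound to contain no $\delta$-dependent error term.
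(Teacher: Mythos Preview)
Your proof is correct and follows essentially the same approach as the paper: both specialize Proposition~\ref{cor:explicit-bound-thm1} with $\cL=-iH$, $M=P$, then compute $c_p\le 1$ and $e^{\tilde b}=1$ to collapse the quadratic coefficient to $\tfrac{5}{2}$. The only cosmetic difference is that the paper simply sets $\delta=0$ (the bound being continuous there), whereas you are slightly more careful about the stated hypothesis $\delta\in(0,1)$ and take the limit $\delta\downarrow 0$; this is a harmless technicality, not a different argument.
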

		To achieve the bound above, one inserts $\delta=0$ and $\cL=iH$ in \Cref{cor:explicit-bound-thm1}. Note that $PHP$ is hermitian and $\|e^{siPHP}\|_\infty=1$ for all $s\geq0$. 
		
		Next, we consider convergence rates under a slight weakening of the condition on the map $M$:

		\begin{cor}\label{cor:spectral-gap}
			Let $(\cL,\cD(\cL))$ be the generator of a $C_0$-contraction semigroup on $\cX$ and $M\in\cB(\cX)$ a contraction satisfying 
			\begin{equation*}
				\norm{M^n-P}_\infty\leq\tilde{c}\,\delta^n
			\end{equation*}
			for some projection $P$, $\delta\in(0,1)$, $\tilde{c}\ge 0$ and all $n\in\N$. Moreover, assume there is $b\geq0$ so that 
			\begin{equation*}
				\norm{Pe^{t\cL}P^\perp}_\infty\leq tb\quad\text{and}\quad\norm{P^\perp e^{t\cL}P}_\infty\leq tb.
			\end{equation*}
			If $(P\cL P,\cD(\cL P))$ is the generator of a $C_0$-semigroup, then there exists a constant $c>0$ and $n_0\in\N$ so that $\tilde{c}\delta^{n_0}\eqqcolon\tilde{\delta}<1$ and for all $x\in\cD((\cL P)^2)$
			\begin{equation*}
				\norm{\left(M^{n_0}e^{\frac{t}{n}\cL}\right)^nx-e^{tP\cL P}Px}\leq\frac{c}{n}\left(\norm{x}+\norm{\cL P x}+\norm{(\cL P)^2x}\right)+\tilde{\delta}^n\norm{x}.
			\end{equation*}
		\end{cor}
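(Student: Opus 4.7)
The plan is a straightforward reduction: pack $n_0$ consecutive copies of $M$ into a single contraction $N\coloneqq M^{n_0}$, chosen so that $N$ satisfies the hypothesis of \Cref{thm:spectral-gap-prev} with a strictly sub-unital spectral-gap parameter $\tilde\delta<1$. Since $(M^{n_0}e^{\frac{t}{n}\cL})^n = (Ne^{\frac{t}{n}\cL})^n$, the conclusion will then follow at once by invoking \Cref{thm:spectral-gap-prev} on the pair $(N,\cL)$.

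First I would choose $n_0\in\N$ with $\tilde c\,\delta^{n_0}<1$, which is possible because $\delta\in(0,1)$, and set $\tilde\delta\coloneqq\tilde c\,\delta^{n_0}$. The case $\tilde c\leq 1$ is already covered by \Cref{thm:spectral-gap-prev} directly (with $n_0=1$), so I may assume $\tilde c>1$. Next I would verify the three hypotheses of \Cref{thm:spectral-gap-prev} for $(N,\cL,P)$: contractivity of $N$ is immediate as a product of contractions, while the uniform power-convergence bound
\begin{equation*}
    \|N^m-P\|_\infty=\|M^{n_0 m}-P\|_\infty\leq \tilde c\,\delta^{n_0 m}\leq \tilde c^{\,m}\,\delta^{n_0 m}=\tilde\delta^{\,m}\qquad(m\in\N)
\end{equation*}
follows from the elementary estimate $\tilde c\leq \tilde c^{\,m}$ valid under $\tilde c>1$. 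The semigroup $(e^{t\cL})_{t\geq 0}$, the projection $P$, the asymptotic Zeno bounds with constant $b$, and the generation hypothesis on $(P\cL P,\cD(\cL P))$ are all untouched by the grouping.

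Applying \Cref{thm:spectral-gap-prev} to the pair $(N,\cL)$ with $\tilde\delta$ in place of $\delta$ then yields exactly the desired estimate, with the constant $c$ equal to the $c(t,b)$ produced by \Cref{thm:spectral-gap-prev} (and implicitly depending on $\tilde c$ and $\delta$ through the fixed choice of $n_0$). I do not anticipate a real obstacle: the only minor subtlety is the observation that the spectral projection for the subsequence $M^{n_0 m}$ is still the original $P$, which is automatic from the uniform convergence $M^n\to P$ built into the hypothesis.
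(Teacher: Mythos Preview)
Your proposal is correct and follows exactly the paper's own argument: the paper's proof consists of the single sentence that one chooses $n_0$ with $\tilde c\,\delta^{n_0}<1$ and applies \Cref{thm:spectral-gap-prev} to $\tilde M\coloneqq M^{n_0}$. You have merely spelled out the verification of the hypotheses (contractivity of $M^{n_0}$ and the bound $\|M^{n_0 m}-P\|_\infty\le\tilde\delta^{\,m}$ via $\tilde c\le\tilde c^{\,m}$), which the paper leaves implicit.
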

		The above \namecref{cor:spectral-gap} follows by the choice $n_0$ such that $\tilde{c}\delta^{n_0}<1$ and applying \Cref{thm:spectral-gap-prev} to $\tilde{M}\coloneqq M^{n_0}$. A more physically motivated result treating the same generalization as the \namecref{cor:spectral-gap} above is provided in the next result:
		
		\begin{manualprop}{II}[stated as~\Caref{prop:spectral-gap-uniform-norm-power-convergence} in main text]\label{prop:spectral-gap-uniform-norm-power-convergence-prev}
            Let $(\cL,\cD(\cL))$ be the generator of a $C_0$-contraction semigroup on $\cX$ and $M\in\cB(\cX)$ a contraction such that
            \begin{equation}\label{unifpowerc-prev}
				\norm{M^n-P}_\infty\leq\tilde{c}\,\delta^n
			\end{equation}
			for some projection $P$, $\delta\in(0,1)$ and $\tilde{c}\ge 0$. Moreover, we assume that there is $b\geq0$ so that 
			\begin{equation}\label{key-prev}
				\norm{Pe^{t\cL}P^\perp}_\infty\leq tb,\quad\norm{M^\perp e^{t\cL}-M^\perp}_\infty\leq tb,\quad\text{and}\quad\norm{P^\perp e^{t\cL}P}_\infty\leq tb
			\end{equation}
			where $M^\perp=(\1-P)M$. If $(P\cL P,\cD(\cL P))$ generates a $C_0$-semigroup, then there is an $\epsilon>0$ such that for all $t\geq0$, $n\in\N$ satisfying $t\in[0,n\epsilon]$, $\tilde{\delta}\in(\delta,1)$, and $x\in\cD((\cL P)^2)$ 
			\begin{equation*}
				\norm{\left(Me^{\frac{t}{n}\cL}\right)^nx-e^{tP\cL P}Px}\leq\frac{c_1(t,b,\tilde{\delta}-\delta)}{n}\left(\norm{x}+\norm{\cL P x}+\norm{(\cL P)^2x}\right)+c_2(\tilde{c},\tilde{\delta}-\delta)\tilde{\delta}^n\norm{x}\,,
			\end{equation*}
		    for some constants $c_1,c_2\ge 0$ depending on $t$, $b$, the difference $\tilde{\delta}-\delta$, and $\tilde{c}$.
		\end{manualprop}
		
		As in \Cref{cor:explicit-bound-thm1}, we also get a more explicit bound in the case of bounded generators in the following proposition:
		\begin{prop}\label{cor:explicit-bound-prop}
			Let $\cL\in\cB(\cX)$ be the generator of a contractive uniformly continuous semigroup and $M\in\cB(\cX)$ a contraction satisfying
			\begin{equation*}
				\norm{M^n-P}_\infty\leq \tilde{c}\delta^n
			\end{equation*}
			for a projection $P\in\cB(\cX)$, $\delta\in(0,1)$, $\tilde{c}>1$, and all $n\in\N$. Then there is $\epsilon>0$ such that for all $t\geq0$, $n\in\N$ satisfying $t\in[0,n\epsilon]$, and $\tilde{\delta}\in(\delta,1)$
			\begin{align*}
	            \norm{\left(Me^{\frac{t}{n}\cL}\right)^n-e^{tP\cL P}P}_\infty\leq\frac{tc_{p}\|\cL\|_\infty}{n}&+\frac{c_{p}+(1+e^{\tilde{b}})(1+c_{p}^2)}{2}\frac{t^2\|\cL\|_\infty^2}{n}\\
	            &+\frac{2\tilde{c}}{\tilde{\delta}-\delta}\tilde{\delta}^n+\frac{2\tilde{\delta}}{1-\tilde{\delta}}\frac{e^{\frac{6tc_{p}\tilde{c}\|\cL\|_\infty}{\tilde{\delta}-\delta}}}{n}
	        \end{align*}
	        where $c_{p}\coloneqq\|\1-P\|_\infty$ and $e^{\tilde{b}}=\sup_{s\in[0,t]}\|e^{sP\cL P}\|_\infty$.
		\end{prop}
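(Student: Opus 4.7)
The plan is to derive this result as the bounded-generator specialization of Proposition \ref{prop:spectral-gap-uniform-norm-power-convergence-prev}, paralleling how \Cref{cor:explicit-bound-thm1} is obtained from \Cref{thm:spectral-gap-prev}. Since $\cL$ is bounded, $P\cL P\in\cB(\cX)$ is bounded as well and hence generates a uniformly continuous semigroup, so $e^{\tilde{b}} = \sup_{s\in[0,t]}\norm{e^{sP\cL P}}_\infty$ is finite and well-defined; moreover $\cD(\cL) = \cD((\cL P)^2) = \cX$, so the strong-convergence estimate of Proposition II upgrades to an operator-norm bound by taking the supremum over unit vectors.

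The first step is to verify the uniform asymptotic Zeno hypotheses \eqref{key-prev} with an explicit constant $b$. Using $PP^\perp = 0 = P^\perp P$ together with \Cref{lem:properties-semigroups}(3), one writes
\begin{equation*}
    Pe^{t\cL}P^\perp = P(e^{t\cL}-\1)P^\perp = P\Bigl(\int_0^t e^{s\cL}\cL\,ds\Bigr)P^\perp,
\end{equation*}
so that contractivity of $e^{t\cL}$ combined with $\norm{P^\perp}_\infty = c_p$ and $\norm{P}_\infty \leq 1+c_p$ delivers a bound linear in $t\norm{\cL}_\infty$. The estimate for $\norm{P^\perp e^{t\cL}P}_\infty$ is symmetric, and since $M^\perp = P^\perp M$ with $M$ contractive, one gets $\norm{M^\perp e^{t\cL}-M^\perp}_\infty = \norm{M^\perp(e^{t\cL}-\1)}_\infty \leq c_p\,t\norm{\cL}_\infty$. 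Thus one may take $b = c_p(1+c_p)\norm{\cL}_\infty$.

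Next, I would transcribe the abstract bound of Proposition II, replacing the seminorms by the explicit estimates $\norm{\cL P x} \leq (1+c_p)\norm{\cL}_\infty \norm{x}$ and $\norm{(\cL P)^2 x} \leq (1+c_p)^2\norm{\cL}_\infty^2 \norm{x}$ (since the same decomposition $P = \1 - P^\perp$ is used inside the concatenation $\cL P$). To pin down the four stated coefficients, I would then re-run the auxiliary \Cref{lem:proofthm1-term1,lem:proofthm1-term2,lem:proofthm1-term3} in the bounded setting: the linear part of the integral equation contributes the $c_p t\norm{\cL}_\infty/n$ summand, the quadratic part (controlled by the modified Chernoff $\sqrt{n}$-Lemma \Cref{lem:improved-chernoff}) delivers the factor $\tfrac{1}{2}\bigl(c_p+(1+e^{\tilde{b}})(1+c_p^2)\bigr)$ on the $t^2\norm{\cL}_\infty^2/n$ term, the uniform power-convergence hypothesis \eqref{unifpowerc-prev} together with the geometric summation $\sum_{k\ge n}\tilde\delta^k = \tfrac{\tilde\delta^n}{1-\tilde\delta}$ produces the $\tfrac{2\tilde{c}}{\tilde{\delta}-\delta}\tilde{\delta}^n$ term, and the exponential prefactor in the last summand arises from a geometric series of ratio $\tilde{c}\delta^n/(\tilde{\delta}^n)\leq \tilde{c}/(\tilde{\delta}-\delta)$ multiplied by $t b / n$ and exponentiated via $(1+x/n)^n\leq e^x$.

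The main obstacle is the careful numerical-constant tracking through this Chernoff $\sqrt{n}$-style argument: the admissibility condition $t \in [0,n\epsilon]$ is needed precisely to keep the geometric ratio controlling the fourth summand below $1$ uniformly in $n$, and to ensure that the telescoped error in Proposition II can indeed be summed. Once all these bookkeeping steps are carried out faithfully — with no cancellation missed and no upper-bound relaxation beyond what the statement permits — the inequality follows without any further conceptual ingredient.
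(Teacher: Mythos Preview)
Your overall strategy---specialize Proposition~\ref{prop:spectral-gap-uniform-norm-power-convergence-prev} to bounded $\cL$ and track constants exactly as in the proof of \Cref{cor:explicit-bound-thm1}---is precisely what the paper does. However, the one genuinely new ingredient in this proposition, namely the explicit constant $\tfrac{2\tilde{c}}{\tilde\delta-\delta}$, is not correctly identified in your sketch, and following your description literally would not produce the stated bound.

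In the paper, the proof of Proposition~\ref{prop:spectral-gap-uniform-norm-power-convergence-prev} introduces the resolvent constant $c_2\coloneqq\sup_{z\in\gamma}\|R(z,M^\perp)\|_\infty\,\tfrac{2+2\tilde\delta^2}{1+2\tilde\delta^2}$ (see \Cref{eq:proofthm1-perturbed-resolvent-uniform-boundedness}), and the bound obtained there reads, after the simplifications of \Cref{cor:explicit-bound-thm1},
\[
c_2\tilde\delta^n+\frac{tc_p\|\cL\|_\infty}{n}+\frac{c_p+(1+e^{\tilde b})(1+c_p^2)}{2}\,\frac{t^2\|\cL\|_\infty^2}{n}+\frac{2\tilde\delta}{1-\tilde\delta}\,\frac{e^{3tc_pc_2\|\cL\|_\infty}}{n}\,.
\]
The entire content of the present proof is then to bound $c_2$: using $(M^\perp)^k=M^k-P$ and hence $\|(M^\perp)^k\|_\infty\le\tilde c\,\delta^k$, the first Neumann series $R(z,M^\perp)=\sum_{k\ge 0} z^{-(k+1)}(M^\perp)^k$ on $|z|=\tilde\delta$ gives $\sup_{z\in\gamma}\|R(z,M^\perp)\|_\infty\le\tfrac{\tilde c}{\tilde\delta-\delta}$, whence $c_2\le\tfrac{2\tilde c}{\tilde\delta-\delta}$. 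This single estimate simultaneously produces the coefficient of $\tilde\delta^n$ \emph{and} the exponent $6tc_p\tilde c\|\cL\|_\infty/(\tilde\delta-\delta)$ in the last term. Your attributions---a tail sum $\sum_{k\ge n}\tilde\delta^k$ for the former and an inequality of the type $(1+x/n)^n\le e^x$ for the latter---do not match the actual mechanism.

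A smaller point: even for $\tilde c>1$ one has $\|P\|_\infty\le 1$ (let $n\to\infty$ in $\|P\|_\infty\le\|M^n\|_\infty+\tilde c\,\delta^n$), so the paper takes $b=c_p\|\cL\|_\infty$ rather than your $b=c_p(1+c_p)\|\cL\|_\infty$; using the latter would not recover the stated numerical constants.
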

		Finally, we extend the assumption on $M$ (q.v.~\Caref{unifpower1-prev}, (\ref{unifpowerc-prev})) to the \textit{uniform power convergence} introduced in \cite{Becker.2021}. Let $\{P_j,\lambda_j\}_{j=1}^J$ be a set of projections satisfying $P_jP_k=1_{j=k} P_j$ and associated eigenvalues on the unit circle $\partial\D_1$. Then, $M$ is called uniformly power convergent with rate $\delta\in(0,1)$ if $M^n-\sum_{j=1}^{J}\lambda^n_jP_j=\cO(\delta^n)$ uniformly for $n\rightarrow\infty$. To prove our result in this case, we also need to assume that $M\cL$ and $\cL P_\Sigma$ with $P_\Sigma\coloneqq\sum_{j=1}^{J}P_j$ are densely defined and bounded (cf.~\cite{Becker.2021}) and $P_j$ is a contraction for all $j\in\{1,...,J,\Sigma\}$:
		
        \begin{manualthm}{III}[stated as~\Caref{thm:spectral-gap-uniform} in main text]\label{thm:spectral-gap-uniform-prev}
		    Let $(\cL,\cD(\cL))$ be the generator of a $C_0$-contraction semigroup on $\cX$ and $M\in\cB(\cX)$ a contraction satisfying the following uniform power convergence: there is $\tilde{c}>0$ so that
			\begin{equation}\label{unifpower-prev}
			    \norm{M^n-\sum_{j=1}^{J}\lambda^n_jP_j}_\infty\leq\tilde{c}\,\delta^n
		    \end{equation}
		    for a set of projections $\{P_j\}_{j=1}^J$ satisfying $P_jP_k=1_{j=k}P_j$, eigenvalues $\{\lambda_j\}_{j=1}^J\subset\partial\D_1$, and a rate $\delta\in(0,1)$. For $P_\Sigma\coloneqq\sum_{j=1}^{J}P_j$, we assume that $M\cL$ and $\cL P_\Sigma$ are densely defined and bounded by $b\geq0$ and $\|P_j\|_\infty=1$ for all $j\in\{1,...,J,\Sigma\}$. Then, there is an $\epsilon>0$ such that for all $n\in\N$, $t\geq0$ satisfying $t\in[0,n\epsilon]$, and $\tilde{\delta}\in(\delta,1)$
			\begin{equation*}
				\norm{\left(Me^{\frac{t}{n}\cL}\right)^n-\sum_{j=1}^{J}\lambda_j^ne^{tP_j\cL P_j}P_j}_\infty\leq\frac{c_1 }{n}+c_2\tilde{\delta}^n\,,
			\end{equation*}
			for some constants $c_1,c_2\ge 0$ depending on all involved parameters except from $n$.
        \end{manualthm}
        
		In comparison to Theorem 3 in \cite{Becker.2021}, \Cref{thm:spectral-gap-uniform} achieves the optimal convergence rate and is formulated in the uniform topology under slightly weaker assumptions on the generator.
		\begin{rmk*}
			A natural way to weaken the above assumption is to assume that the power converges is in the strong topology (cf.~\cite[Thm.~2]{Becker.2021}).
		\end{rmk*}

	\section{Chernoff \texorpdfstring{$\sqrt{n}$-}{}Lemma and Trotter-Kato's Product Formula}\label{sec:alternative-chernoff-lemma-trotter-product-formula}
		In previous works \cite{Mobus.2019,Becker.2021}, Chernoff's $\sqrt{n}$-Lemma \cite[Lem.~2]{Chernoff.1968}, which we restate here, is used as a proof technique to approximate the Zeno product by a semigroup (q.v.~\Caref{eq:proofthm1-term2}).
		\begin{lem}[Chernoff \texorpdfstring{$\sqrt{n}$-}{square root }Lemma]\label{lem:chernoff}
			Let $C\in\cB(\cX)$ be a contraction. Then, $(e^{t(C-\1)})_{t\geq0}$ is a uniformly continuous contraction semigroup and for all $x\in\cX$
			\begin{equation*}
				\norm{C^nx-e^{n(C-\1)}x}\leq\sqrt{n}\norm{(C-\1)x}.
			\end{equation*}
		\end{lem}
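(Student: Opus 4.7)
The plan is to prove both statements by exploiting the explicit series representation $e^{t(C-\1)}=e^{-t}\sum_{k=0}^\infty \frac{t^k C^k}{k!}$, which makes sense because $C-\1$ is bounded. For the first claim, I would observe that uniform continuity is immediate from boundedness of $C-\1$, and contractivity follows from the triangle inequality together with $\|C\|_\infty\leq 1$:
\begin{equation*}
\bigl\|e^{t(C-\1)}\bigr\|_\infty\leq e^{-t}\sum_{k=0}^\infty \frac{t^k\|C\|_\infty^k}{k!}\leq e^{-t}\,e^{t}=1.
\end{equation*}

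For the quantitative bound, the key idea is to compare $C^n$ with the same series, writing
\begin{equation*}
C^nx-e^{n(C-\1)}x=e^{-n}\sum_{k=0}^\infty\frac{n^k}{k!}\bigl(C^n-C^k\bigr)x.
\end{equation*}
The crucial telescoping observation is that, assuming without loss of generality $n\geq k$, one has $C^n-C^k=C^k(C-\1)\sum_{j=0}^{n-k-1}C^j$, and an analogous identity when $k\geq n$; combined with $\|C\|_\infty\leq 1$, this gives $\|(C^n-C^k)x\|\leq |n-k|\,\|(C-\1)x\|$ uniformly in the sign of $n-k$.

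Substituting this estimate into the series and pulling the factor $\|(C-\1)x\|$ out, the problem reduces to controlling the scalar sum $e^{-n}\sum_{k=0}^\infty \frac{n^k}{k!}|n-k|$. I would bound this by Cauchy--Schwarz against the weights $\frac{n^k}{k!}$, which after factoring $e^n$ reduces to the first and second moments of a Poisson distribution with mean $n$: the total mass is $e^n$ and the second moment of $(K-n)$ is the variance $n$. This yields the sharp constant
\begin{equation*}
e^{-n}\sum_{k=0}^\infty\frac{n^k}{k!}|n-k|\leq e^{-n}\Bigl(\sum_{k=0}^\infty\tfrac{n^k}{k!}\Bigr)^{1/2}\Bigl(\sum_{k=0}^\infty\tfrac{n^k}{k!}(n-k)^2\Bigr)^{1/2}=\sqrt{n},
\end{equation*}
which gives the claimed inequality. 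The only nontrivial step is recognizing the Poisson-moment interpretation (or equivalently computing the sum directly by differentiating $\sum_k n^k/k!$ twice with respect to $n$); everything else is a straightforward manipulation of absolutely convergent series.
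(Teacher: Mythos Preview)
Your argument is correct and is essentially Chernoff's original proof from \cite[Lem.~2]{Chernoff.1968}. Note, however, that the paper does not give its own proof of this lemma: it merely restates the classical result with a reference, and then moves on to prove the \emph{Modified} Chernoff Lemma (\Cref{lem:improved-chernoff}) by a different interpolation argument. So there is no ``paper's proof'' to compare against here; your write-up simply supplies the standard Poisson--variance argument that the paper takes for granted.
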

		\begin{rmk*}
			In Lemma 2.1 in \cite{Zagrebnov.2017}, the dependence on $n$ is improved to $n^\frac{1}{3}$. This is crucial in the proof of the convergence rate in \cite[Lem.~5.4-5.5]{Becker.2021}. Unfortunately, we found an inconsistency in the proof of \cite[Lem.~2.1]{Zagrebnov.2017}, i.e.~Inequality 2.3 is not justified. An update and more Chernoff bounds can be found in \cite{Zagrebnov.2022}. Following the proof by \citeauthor{Becker.2021}, one can achieve a convergence rate of order $\tfrac{1}{\sqrt{n}}$ in the bounded generator case \cites[Thm.~1]{Becker.2021} and of order $\tfrac{1}{\sqrt[4]{n}}$ in the unbounded generator case \cites[Thm.~3]{Becker.2021}.
		\end{rmk*}
		In the case of the quantum Zeno effect (see \Caref{lem:proofthm1-term2} and \ref{lem:proofthm2-term2}) for bounded generators, the contraction $C$ is a vector-valued map $t\mapsto C(t)$ on $\cX$ satisfying $\|C(\tfrac{1}{n})-\1\|_\infty=\cO(n^{-1})$. By Chernoff's $\sqrt{n}$-Lemma 
		\begin{equation*}
			\norm{C^n(\tfrac{1}{n})-e^{n(C(\tfrac{1}{n})-\1)}}_\infty\leq\frac{1}{\sqrt{n}}.
		\end{equation*}
		Here, we chose the bounded generator case for sake of simplicity. Nevertheless, the argument can be extended to unbounded generator as well (see \Caref{lem:proofthm1-term2}, \ref{lem:proofthm2-term2}, and \ref{lem:approx-improved-chernoff}). Next, we prove a modified bound, which allows us to achieve the optimal rate in the quantum Zeno effect.
		\begin{lem}[Modified Chernoff Lemma]\label{lem:improved-chernoff}
			Let $C\in\cB(\cX)$ be a contraction and $n\in\N$. Then, $(e^{t(C-\1)})_{t\geq0}$ is a contraction semigroup and for all $x\in\mathcal{X}$
			\begin{equation*}
				\norm{\left(C^n-e^{n(C-\1)}\right)x}\leq\frac{n}{2}\norm{(C-\1)^2x}.
			\end{equation*}
		\end{lem}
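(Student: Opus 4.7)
The plan is to exploit the norm-convergent expansion $e^{n(C-\1)}=e^{-n}\sum_{k=0}^{\infty}\tfrac{n^k}{k!}C^k$ together with $\sum_{k\ge 0}e^{-n}n^k/k!=1$ to rewrite the difference as
\begin{equation*}
(C^n-e^{n(C-\1)})x = e^{-n}\sum_{k=0}^{\infty}\frac{n^k}{k!}(C^n-C^k)x,
\end{equation*}
and then to perform a discrete second-order Taylor expansion of $C^n-C^k$ around $k=n$. Setting $A:=C-\1$ (which commutes with every power of $C$), such an expansion has a linear part proportional to $(n-k)C^nA$ and a quadratic remainder in $A$. The Poisson-like moment identities $\sum_{k\ge 0}e^{-n}\tfrac{n^k}{k!}(k-n)=0$ and $\sum_{k\ge 0}e^{-n}\tfrac{n^k}{k!}(k-n)(k-n-1)=n$ then kill the linear part in the weighted sum and control the remainder by exactly $\tfrac{n}{2}\|A^2x\|$. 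Contractivity of $(e^{t(C-\1)})_{t\ge 0}$ follows from the same series expansion and $\|C\|_\infty\le 1$.

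For the expansion itself, I would telescope twice. For $k\le n$,
\begin{align*}
C^n-C^k &= \sum_{j=k}^{n-1}C^jA = (n-k)C^nA + \sum_{j=k}^{n-1}(C^j-C^n)A \\
&= (n-k)C^nA - \sum_{j=k}^{n-1}\sum_{l=0}^{n-j-1}C^{j+l}A^2,
\end{align*}
and the symmetric computation for $k>n$ gives $C^n-C^k=(n-k)C^nA-\sum_{j=n}^{k-1}\sum_{l=n}^{j-1}C^lA^2$. Using $\|C^m\|_{\infty}\le 1$ and counting the terms in the double sum ($\tfrac{(n-k)(n-k+1)}{2}$ for $k\le n$, and $\tfrac{(k-n)(k-n-1)}{2}$ for $k>n$, both equal to $\tfrac{(k-n)(k-n-1)}{2}$), the remainder $R_kx:=(C^n-C^k)x-(n-k)C^nAx$ satisfies the uniform signed bound
\begin{equation*}
\|R_kx\|\le \tfrac{(k-n)(k-n-1)}{2}\|A^2x\|,
\end{equation*}
valid for all $k\in\N_0$ and vanishing at both $k=n$ and $k=n+1$.

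Summing against the weights $e^{-n}n^k/k!$ and applying the two moment identities above then delivers the claimed bound $\|(C^n-e^{n(C-\1)})x\|\le\tfrac{n}{2}\|A^2x\|$. The main obstacle I anticipate is keeping the remainder in the signed form $(k-n)(k-n-1)$ rather than the naive $|k-n|^2+|k-n|$: the latter would force the use of $\sum_k e^{-n}\tfrac{n^k}{k!}|k-n|\le\sqrt{n}$ and yield only the suboptimal constant $\tfrac{n+\sqrt{n}}{2}$. Ensuring that the $k>n$ branch of the second telescoping produces exactly $\tfrac{(k-n)(k-n-1)}{2}$ terms (and not, say, $\tfrac{(k-n)(k-n+1)}{2}$) is the small combinatorial check that secures the sharp constant $\tfrac{n}{2}$.
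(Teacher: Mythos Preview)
Your argument is correct and complete: the Poisson expansion $e^{n(C-\1)}=e^{-n}\sum_k\frac{n^k}{k!}C^k$, the two-step telescoping giving the remainder bound $\|R_kx\|\le\tfrac{(k-n)(k-n-1)}{2}\|A^2x\|$ (which is indeed nonnegative for every integer $k$), and the Poisson factorial-moment identity $\mathbb{E}[(X-n)(X-n-1)]=n$ for $X\sim\mathrm{Poisson}(n)$ combine exactly as you describe to produce the sharp constant $\tfrac{n}{2}$.

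The paper takes a different and somewhat shorter route. It interpolates continuously via $C_t:=(1-t)\1+tC$, which is again a contraction, and applies the fundamental theorem of calculus to $t\mapsto C_t^n e^{(1-t)n(C-\1)}$. The derivative factors as $nC_t^{n-1}(\1-C)(\1-C_t)e^{(1-t)n(C-\1)}$, and since $\1-C_t=-t(C-\1)$ one picks up exactly one extra factor of $(C-\1)$ together with the weight $t$; integrating $nt$ over $[0,1]$ gives $\tfrac{n}{2}$. Your approach stays closer in spirit to Chernoff's original Poisson-randomization proof and makes the combinatorics explicit, whereas the paper's interpolation trick is more compact and generalizes readily to the perturbed setting of Lemma~\ref{lem:approx-improved-chernoff} (where $C$ and the identity are replaced by $t$-dependent operators that are only approximate contractions).
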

		\begin{rmk*}
			At first glance, this seems to be worse than the original Chernoff $\sqrt{n}$-Lemmas in \cites{Chernoff.1968}. However, if $C$ is a vector-valued map satisfying $\|C(\tfrac{1}{n})-\1\|_\infty=\cO(n^{-1})$, then the modified Chernoff lemma gives
			\begin{equation*}
				\norm{C(\tfrac{1}{n})^n-e^{n(C(\text{\tiny{$\tfrac{1}{n}$}})-\1)}}_\infty\leq\frac{n}{2}\norm{C(\tfrac{1}{n})-\1}_{\infty}^2=\cO(n^{-1})
			\end{equation*}
			which is the key idea to prove the optimal convergence rate of the quantum Zeno effect for bounded generators and contractions $M$ satisfying the uniform power convergence (q.v.~\Caref{lem:proofthm1-term2}).
		\end{rmk*}
		\begin{proof}[Proof of \Cref{lem:improved-chernoff}]
			Similar to Chernoff's proof \cite[Lem.~2]{Chernoff.1968}, $(e^{t(C-\1)})_{t\geq0}$ is a contraction semigroup. We define $C_t\coloneqq (1-t)\1+tC=\1+t(C-\1)$ for $t\in[0,1]$, which itself is a contraction as a convex combination of contractions, and we use the fundamental theorem of calculus so that
			\begin{align*}
				\norm{\left(C^n-e^{n(C-\1)}\right)x}&\leq\int_{0}^{1}\norm{\frac{\partial}{\partial t}(C_t^ne^{(1-t)n(C-\1)})x}dt\\
				&\leq n\int_{0}^{1}\norm{C_t^{n-1}e^{(1-t)n(C-\1)}}_\infty\norm{(\1-C)(\1-C_t)x}dt\\
				&\leq\frac{n}{2}\norm{(C-\1)^2x},
			\end{align*}
			which proves the \namecref{lem:improved-chernoff}.
		\end{proof}
		In \cite[p.~241]{Chernoff.1968}, Chernoff proves the convergence of Trotter's product formula by approximating the product using the Chernoff $\sqrt{n}$-Lemma.
		For bounded generators, Chernoff's proof gives a convergence rate of order $n^{-\frac{1}{2}}$. Following his proof and using our modified Chernoff Lemma, we achieve the well-known optimal convergence rate of order $n^{-1}$ \cites[Thm.~2.11]{Hall.2015}[p.~1-2]{Neidhardt.2018}:		
		\begin{prop}[\texorpdfstring{\cite[Thm.~1]{Chernoff.1968}}{[6, Thm.~1]}]\label{prop:trotter-kato-product-formula}
			Let $F:\R_{\geq0}\rightarrow\cB(\cX)$ be a continuously differentiable function (in the uniform topology) satisfying $\sup_{t\in\R_{\geq0}}\|F(t)\|_\infty\leq1$. Assume that $F(0)=\1$ and denote the derivative at $t=0$ by $\cL\in\cB(\cX)$. Then, for all $t\geq0$
			\begin{equation*}
				\norm{F\left(\tfrac{t}{n}\right)^n-e^{t\cL}}_\infty\leq \norm{n\left(F\left(\tfrac{t}{n}\right)-\1\right)-t\cL}_\infty + \frac{n}{2}\norm{(F\left(\tfrac{t}{n}\right)-\1)}_\infty^2.
			\end{equation*}
		\end{prop}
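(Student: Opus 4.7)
The plan is to telescope through the auxiliary contraction semigroup generated by $F(t/n)-\1$. Writing
\begin{equation*}
F(t/n)^n - e^{t\cL} = \bigl[F(t/n)^n - e^{n(F(t/n)-\1)}\bigr] + \bigl[e^{n(F(t/n)-\1)} - e^{t\cL}\bigr],
\end{equation*}
I would bound the two brackets separately and add them via the triangle inequality. The first bracket is a direct application of the Modified Chernoff Lemma (\Cref{lem:improved-chernoff}) to the contraction $C = F(t/n)$: it yields at most $\tfrac{n}{2}\|(F(t/n)-\1)^2\|_\infty \leq \tfrac{n}{2}\|F(t/n)-\1\|_\infty^2$, which is precisely the second summand in the claimed inequality.

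For the second bracket I would invoke the Duhamel identity for bounded generators,
\begin{equation*}
e^A - e^B = \int_0^1 e^{sA}(A-B)\,e^{(1-s)B}\,ds,
\end{equation*}
with $A \coloneqq n(F(t/n)-\1)$ and $B \coloneqq t\cL$. Taking operator norms and using that both $\|e^{sA}\|_\infty \leq 1$ and $\|e^{(1-s)B}\|_\infty \leq 1$ on $s\in[0,1]$ collapses this identity to $\|e^A - e^B\|_\infty \leq \|A-B\|_\infty$, i.e.\ the first summand $\|n(F(t/n)-\1) - t\cL\|_\infty$ in the claimed inequality. The contractivity of $e^{sA}$ is immediate from the first assertion of \Cref{lem:improved-chernoff} applied again to $C = F(t/n)$.

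The only nontrivial step, and in my view the crux of the argument, is to verify that $t\cL$ itself generates a contraction semigroup so that the Duhamel bound goes through. For this I would set $A_h \coloneqq h^{-1}(F(h)-\1)$; continuous differentiability of $F$ with $F(0) = \1$ gives $A_h \to \cL$ in operator norm as $h \downarrow 0$, while the first part of \Cref{lem:improved-chernoff} applied to the contraction $F(h)$ shows that $e^{sA_h} = e^{(s/h)(F(h)-\1)}$ is a contraction for every $s,h > 0$. Since $\cL$ is bounded, the power-series representation of the exponential together with the telescoping estimate $\|A_h^k-\cL^k\|_\infty \leq k M^{k-1}\|A_h-\cL\|_\infty$ (with $M$ a uniform bound on $\|A_h\|_\infty$ and $\|\cL\|_\infty$ for small $h$) yields $e^{s\cL} = \lim_{h\downarrow 0} e^{sA_h}$ in the uniform topology, and a uniform limit of contractions is a contraction. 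Hence $\|e^{s\cL}\|_\infty \leq 1$ for all $s \geq 0$, and combining the two summand bounds through the triangle inequality establishes the proposition.
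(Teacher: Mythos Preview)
Your proof is correct and follows essentially the same route as the paper: the same triangle-inequality split, the Modified Chernoff Lemma for the first bracket, and the Duhamel integral identity (the paper's \Cref{lem:integral-equation-semigroups}) for the second. Your extra paragraph justifying that $(e^{s\cL})_{s\ge 0}$ is contractive via the uniform limit of $e^{sA_h}$ is a detail the paper glosses over---it only cites $\|F(t/n)\|_\infty\le 1$ and the contractivity of $e^{sn(F(t/n)-\1)}$---so your version is, if anything, slightly more complete.
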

		\begin{proof}
			The case $t=0$ is clear. For $t>0$, applying \Cref{lem:improved-chernoff}, we get
			\begin{align*}
				\norm{F\left(\tfrac{t}{n}\right)^n-e^{t\cL}}_\infty&\leq\norm{F\left(\tfrac{t}{n}\right)^n-e^{n\left(F\left(\text{\tiny{$\tfrac{t}{n}$}}\right)-\1\right)}}_\infty+\norm{e^{n\left(F\left(\text{\tiny{$\tfrac{t}{n}$}}\right)-\1\right)}-e^{t\cL}}_\infty\\
				&\leq \frac{n}{2}\norm{F\left(\tfrac{t}{n}\right)-\1}_\infty^2+\norm{e^{n\left(F\left(\text{\tiny{$\tfrac{t}{n}$}}\right)-\1\right)}-e^{t\cL}}_\infty.
			\end{align*}
			For the second term above, we apply \Cref{lem:integral-equation-semigroups}:
			\begin{align*}
				\norm{e^{n\left(F\left(\text{\tiny{$\tfrac{t}{n}$}}\right)-\1\right)}-e^{t\cL}}_\infty&=\int_{0}^{1}\norm{e^{sn\left(F\left(\text{\tiny{$\tfrac{t}{n}$}}\right)-\1\right)}\left(n\left(F\left(\tfrac{t}{n}\right)-\1\right)-t\cL\right)e^{(1-s)t\cL}}_\infty ds\\
				&\leq\norm{n\left(F\left(\tfrac{t}{n}\right)-\1\right)-t\cL }_\infty,
			\end{align*}
			where we use $\|F(\tfrac{t}{n})\|_\infty\leq1$ and that $e^{sn(F(\text{\tiny{$\tfrac{t}{n}$}})-\1)}$ is a contraction semigroup.
		\end{proof}
		Applying the proposition to the case of Trotter's product formula, we achieve the well-known optimal convergence rate for bounded generators on Banach spaces \cites[Thm.~2.11]{Hall.2015}[p.~1-2]{Neidhardt.2018}:
		\begin{cor}[\texorpdfstring{\cites[Thm.~2.11]{Hall.2015}}{[20, Thm.~2.11]}]\label{cor:trotter-kate-product-formula-convergence}
			Let $\cL_1$ and $\cL_2$ be bounded generators of two uniformly continuous contraction semigroups. Then, for $n\rightarrow\infty$
			\begin{equation*}
				\norm{\left(e^{\frac{1}{n}\cL_1}e^{\frac{1}{n}\cL_2}\right)^n-e^{\cL_1+\cL_2}}_\infty=\cO\left(\frac{1}{n}\right).
			\end{equation*}
		\end{cor}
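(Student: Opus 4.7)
The plan is to apply \Cref{prop:trotter-kato-product-formula} to the function $F:\R_{\ge0}\to\cB(\cX)$ defined by $F(t)\coloneqq e^{t\cL_1}e^{t\cL_2}$, and then read off the rate $\cO(n^{-1})$ from the two error terms that appear in that proposition.

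First I would check that $F$ satisfies the hypotheses of \Cref{prop:trotter-kato-product-formula}. Since $\cL_1,\cL_2\in\cB(\cX)$, both semigroups $(e^{t\cL_j})_{t\ge 0}$ are uniformly continuous and their product is uniformly continuously differentiable in $t$. By the contraction assumption $\|F(t)\|_\infty\leq\|e^{t\cL_1}\|_\infty\,\|e^{t\cL_2}\|_\infty\leq 1$, and trivially $F(0)=\1$. Using the product rule for the uniform derivative we find $F'(0)=\cL_1+\cL_2$, so that the role of $\cL$ in \Cref{prop:trotter-kato-product-formula} is played by $\cL_1+\cL_2$, which is bounded.

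Next I would control the two terms in the bound of \Cref{prop:trotter-kato-product-formula}. Expanding the exponentials by their norm-convergent power series,
\begin{equation*}
    F(s)=\1+s(\cL_1+\cL_2)+s^2 R(s)\,,\qquad \text{where }\ \sup_{s\in[0,1]}\|R(s)\|_\infty\eqqcolon K<\infty\,,
\end{equation*}
since $R(s)$ collects only terms of degree $\ge 2$ in the product of the two uniformly convergent exponential series and is therefore uniformly bounded on bounded intervals. Setting $s=t/n$ this gives on the one hand
\begin{equation*}
    \bigl\|n\bigl(F(\tfrac{t}{n})-\1\bigr)-t(\cL_1+\cL_2)\bigr\|_\infty
    = \tfrac{t^2}{n}\,\|R(\tfrac{t}{n})\|_\infty
    \leq \tfrac{t^2 K}{n}\,,
\end{equation*}
and on the other hand $\|F(\tfrac{t}{n})-\1\|_\infty\leq\tfrac{t}{n}\|\cL_1+\cL_2\|_\infty+\tfrac{t^2}{n^2}K=\cO(n^{-1})$, so that $\tfrac{n}{2}\|F(\tfrac{t}{n})-\1\|_\infty^2=\cO(n^{-1})$ as well. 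Substituting these two estimates into \Cref{prop:trotter-kato-product-formula} with $t=1$ yields the claimed rate.

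I do not expect any genuine obstacle here: the entire content of the corollary is packaged in \Cref{prop:trotter-kato-product-formula}, and the only thing left to do is a bounded Taylor expansion of $e^{s\cL_1}e^{s\cL_2}$ around $s=0$. The only point that requires a little care is verifying that the hypotheses of \Cref{prop:trotter-kato-product-formula} (uniform $C^1$-regularity and the contraction bound $\|F(t)\|_\infty\leq 1$) really hold for a product of two semigroups, but both follow directly from boundedness of $\cL_1,\cL_2$ and the contraction property of each factor.
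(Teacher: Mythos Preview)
Your proof is correct and follows essentially the same route as the paper: apply \Cref{prop:trotter-kato-product-formula} to $F(t)=e^{t\cL_1}e^{t\cL_2}$ and bound the two error terms by a second-order expansion. The only cosmetic difference is that the paper writes out the remainder via the integral identities $e^{\frac{1}{n}\cL_j}-\1=\frac{1}{n}\int_0^1 e^{\frac{\tau}{n}\cL_j}\cL_j\,d\tau$ to obtain explicit constants, whereas you package the same information into an abstract bounded remainder $R(s)$.
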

		\begin{proof}
			We define $F(\tfrac{1}{n})\coloneqq e^{\frac{1}{n}\cL_1}e^{\frac{1}{n}\cL_2}$ for which 
			\begin{align*}
				F(\tfrac{1}{n})-\1&=e^{\frac{1}{n}\cL_1}\left(e^{\frac{1}{n}\cL_2}-\1\right)+e^{\frac{1}{n}\cL_1}-\1\\
				&=\left(e^{\frac{1}{n}\cL_1}-\1\right)\left(e^{\frac{1}{n}\cL_2}-\1\right)+e^{\frac{1}{n}\cL_2}-\1+e^{\frac{1}{n}\cL_1}-\1
			\end{align*}
			holds. Moreover,
			\begin{equation*}
				\norm{e^{\frac{1}{n}\cL_1}-\1}_\infty=\frac{1}{n}\norm{\int_{0}^{1}e^{\frac{\tau_1}{n}\cL_1}\cL_1d\tau_1}_\infty\leq\frac{1}{n}\norm{\cL_1}_\infty
			\end{equation*}
			and
			\begin{equation*}
				\norm{n\left(e^{\frac{1}{n}\cL_1}-\1\right)-\cL_1}_\infty=\frac{1}{n}\norm{\int_{0}^{1}\int_{0}^{1}\tau_1e^{\frac{\tau_1\tau_2}{n}\cL_1}\cL_1^2d\tau_2d\tau_1}_\infty\leq\frac{1}{2n}\norm{\cL_1}^2_\infty.
			\end{equation*}
			\begin{align*}
				\norm{n\left(F(\tfrac{1}{n})-\1\right)-\cL_1-\cL_2}_\infty\leq\frac{1}{n}\left(\norm{\cL_1}_\infty\norm{\cL_2}_\infty+2\norm{\cL_1}^2_\infty+2\norm{\cL_2}^2_\infty\right)
			\end{align*}
			and the statement follows from \Cref{prop:trotter-kato-product-formula}.
		\end{proof}

	\section{Strongly Continuous Zeno Dynamics}\label{sec:unbounded-generator-zeno-subspace}
		We proceed with the statement and proof of our first main result, namely \Cref{thm:spectral-gap}, which we restate here for sake of clarity of conciseness:
		
		\begin{thm}\label{thm:spectral-gap}
		    Let $(\cL,\cD(\cL))$ be the generator of a $C_0$-contraction semigroup on $\cX$, $M\in\cB(\cX)$ a contraction, and $P$ a projection satisfying 
			\begin{equation}\label{unifpower1}
				\norm{M^n-P}_\infty\leq\delta^n
			\end{equation}
	        for $\delta\in(0,1)$ and all $n\in\N$. Moreover, assume there is $b\geq0$ so that for all $t\ge 0$
			\begin{equation}\label{eq:thm1-asympzeno}
				\norm{Pe^{t\cL}P^\perp}_\infty\leq tb\quad\text{and}\quad\norm{P^\perp e^{t\cL}P}_\infty\leq tb.
			\end{equation}
			If $(P\cL P,\cD(\cL P))$ is the generator of a $C_0$-semigroup, then for any $t\ge 0$ and all $x\in\cD((\cL P)^2)$ 
			\begin{equation*}
				\norm{\left(Me^{\frac{t}{n}\cL}\right)^nx-e^{tP\cL P}Px}\leq\frac{c(t,b)}{n}\left(\norm{x}+\norm{\cL P x}+\norm{(\cL P)^2x}\right)+\delta^n\norm{x}
			\end{equation*}
		for a constant $c(t,b)>0$ depending on $t$ and $b$, but independent of $n$ .
		\end{thm}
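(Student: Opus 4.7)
My plan is to decompose the error using two natural intermediates. Setting $V_n \coloneqq Me^{\frac{t}{n}\cL}$, $W_n \coloneqq Pe^{\frac{t}{n}\cL}P$ and $\cK_n \coloneqq n(W_n - P)$, I write
\begin{align*}
V_n^n x - e^{tP\cL P}Px \;=\; \underbrace{(V_n^n - W_n^n)x}_{T_a} \;+\; \underbrace{(W_n^n - e^{\cK_n})Px}_{T_b} \;+\; \underbrace{(e^{\cK_n} - e^{tP\cL P})Px}_{T_c},
\end{align*}
using that $W_n^n = W_n^n P$. Taking $n\to\infty$ in \eqref{unifpower1} forces $MP = PM = P$ and $\|P\|_\infty\le 1$, so $M^\perp \coloneqq M - P = MP^\perp = P^\perp M$ satisfies $M^\perp P = PM^\perp = 0$, $(M^\perp)^k = M^k - P$, and $\|M^\perp\|_\infty \le \delta$; in particular $W_n$ is a contraction, which is essential for the modified Chernoff step.

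I bound $T_a$ by telescoping $V_n^n - W_n^n = \sum_{k=0}^{n-1} V_n^{n-1-k}(V_n - W_n)W_n^k$ with $V_n - W_n = M^\perp e^{\frac{t}{n}\cL} + P e^{\frac{t}{n}\cL}P^\perp$. The second piece is annihilated by $W_n^k$ for $k\ge 1$ (since $W_n^k x\in\operatorname{Range}(P)$) and contributes only one harmless term of norm $\le tb/n\cdot\|x\|$; the first piece satisfies $\|M^\perp e^{\frac{t}{n}\cL}W_n^k x\| \le (\delta t b/n)\|x\|$ via $M^\perp P = 0$ and $\|P^\perp e^{\frac{t}{n}\cL}P\|_\infty \le tb/n$. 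Since the resulting error vectors land in $\operatorname{Range}(P^\perp)$, I control $\|V_n^j z\|$ for $z\in\operatorname{Range}(P^\perp)$: the block decomposition of $V_n$ with respect to $P, P^\perp$ has entry norms $1$, $tb/n$, $\delta tb/n$, $\delta$, and a Gronwall-type recursion on the pair $(\|PV_n^j z\|, \|P^\perp V_n^j z\|)$ yields $\|V_n^j z\| \le \delta^j\|z\| + \cO(1/n)\|z\|$. Summing in $k$ and combining with the contribution of $V_n^n P^\perp x$ gives $\|T_a\| \le \cO(1/n)\|x\| + \delta^n\|x\|$. For $T_b$, \Cref{lem:improved-chernoff} applied to the contraction $W_n$ gives $\|W_n^n Px - e^{n(W_n - \1)}Px\| \le \tfrac{n}{2}\|(W_n - \1)^2 Px\|$, and $e^{n(W_n - \1)}Py = e^{\cK_n}Py$ for $y\in\operatorname{Range}(P)$ because $(W_n - \1)y = (W_n - P)y$ there. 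Two applications of \Cref{lem:properties-semigroups}(iii), combined with the asymptotic Zeno bound on the $P^\perp$-component of $\cL Px$, reduce $\|(W_n - \1)^2 Px\|$ to $\cO(t^2/n^2)\cdot(\|(\cL P)x\| + \|(\cL P)^2 x\|)$, so $\|T_b\| = \cO(1/n)$.

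For $T_c$, \Cref{lem:integral-equation-semigroups} yields $T_c = \int_0^1 e^{s\cK_n}(\cK_n - tP\cL P) e^{(1-s)tP\cL P}Px\, ds$, with $\|e^{s\cK_n}\|_\infty \le 1$ on $\operatorname{Range}(P)$. Using \Cref{lem:properties-semigroups}(iii) I expand
\[
\cK_n Py - tP\cL Py \;=\; nP\int_0^{t/n}(e^{s\cL} - \1)\cL Py\, ds,
\]
and split $\cL Py = P\cL Py + P^\perp \cL Py$. The $P\cL Py$ piece integrates to $\tfrac{t^2}{2n}\|(\cL P)^2 y\|$---this is exactly where the $\cD((\cL P)^2)$ hypothesis is used, to ensure $P\cL Py \in \cD(\cL)$---and the $P^\perp \cL Py$ piece is controlled by $\|Pe^{s\cL}P^\perp\|_\infty \le sb$, giving $\tfrac{bt^2}{2n}\|(\cL P)y\|$. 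Since $e^{\tau P\cL P}$ preserves $\cD((\cL P)^2) = \cD((P\cL P)^2)$ and is locally bounded, the integral yields $\|T_c\| = \cO(1/n)\cdot(\|(\cL P)x\| + \|(\cL P)^2 x\|)$ up to a factor depending only on $t$ and $b$. The main obstacle is $T_a$: a naive bound on $\|(V_n - W_n)W_n^k\|_\infty$ summed over $k$ only gives $\cO(1)$, which is why earlier work produced suboptimal rates; the crucial cancellation is that $W_n^k$ projects into $\operatorname{Range}(P)$, so $V_n - W_n$ either vanishes on it or pays an extra $tb/n$ factor by forcing the transition $\operatorname{Range}(P) \to \operatorname{Range}(P^\perp)$ via $P^\perp e^{\frac{t}{n}\cL}P$, after which the leakage decays geometrically like $\delta^{n-k}$ under the remaining $V_n$ factors. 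The block-matrix majorization above is the technical device that makes this bookkeeping precise.
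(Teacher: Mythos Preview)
Your proof is correct and uses the same three-term decomposition as the paper (their \Cref{eq:proofthm1-term1}--\Cref{eq:proofthm1-term3}); the treatments of $T_b$ and $T_c$ are essentially identical to the paper's \Cref{lem:proofthm1-term2} and \Cref{lem:proofthm1-term3} (you reverse the order of the two semigroups in the integral formula for $T_c$, which works equally well once one tracks $\|(\cL P)^k e^{\tau P\cL P}Px\|$ via commutation of $P\cL P$ with its own semigroup and the bound $\|P^\perp\cL P\|_\infty\le b$).

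The genuine difference is your handling of $T_a$. The paper expands $\left(Me^{\frac{1}{n}\cL}\right)^n=(A+B)^n$ with $A=M^\perp e^{\frac{1}{n}\cL}$, $B=Pe^{\frac{1}{n}\cL}$, and then \emph{counts}: it partitions the $2^n$ words by the number of $A\leftrightarrow B$ transitions, proves an exact formula for $N(j,n,k)$ via the urn problem (\Cref{lem:counting-patterns-binary-sequences}), and sums the resulting bounds; each transition contributes a factor $b/n$ from $\|AB\|_\infty\le\delta b/n$, and $A^n$ alone yields the clean $\delta^n$. You instead telescope $V_n^n-W_n^n$ and control the residual $V_n^{j}z$ for $z\in\operatorname{Range}(P^\perp)$ by a two-component recursion on $(\|PV_n^jz\|,\|P^\perp V_n^jz\|)$, whose off-diagonal entries carry the same $tb/n$ factor. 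What each buys: the paper's combinatorics is fully explicit and immediately produces the sharp constants used in \Cref{cor:explicit-bound-thm1}; your Gronwall route is a more standard semigroup-perturbation argument, arguably easier to discover, and does recover the exact leading term $\delta^n\|x\|$ (since $b_{j+1}\le\delta\alpha a_j+\delta b_j$ with $a_0=0$ gives $b_j\le\delta^j\|z\|+\cO(n^{-2})$ and $a_j=\cO(n^{-1})$, the $k=0$ term with $\|z_0\|\le\delta\|x\|$ yields precisely $\delta^n\|x\|+\cO(n^{-1})\|x\|$). The only imprecision worth flagging is cosmetic: the $(P^\perp,P^\perp)$ block norm you quote as $\delta$ is really the norm of $M^\perp e^{\frac{t}{n}\cL}$ \emph{restricted to} $\operatorname{Range}(P^\perp)$, which is indeed $\le\delta$ and is exactly what your recursion needs.
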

		\subsection{Proof of \texorpdfstring{\Cref{thm:spectral-gap} }{Theorem 3.1}}\label{subsec:proofthm1}~\\
			We assume for sake of simplicity that $t=1$, and split our proof in three parts:
			\begin{align}
				\norm{\left(Me^{\frac{1}{n}\cL}\right)^nx-e^{P\cL P}Px}\leq\;&\norm{\left(Me^{\frac{1}{n}\cL}\right)^nx-\left(Pe^{\frac{1}{n}\cL}P\right)^nx}\label{eq:proofthm1-term1}\\	+\;&\norm{\left(Pe^{\frac{1}{n}\cL}P\right)^nx-e^{nP(e^{\frac{1}{n}\cL}-\1)P}Px}\label{eq:proofthm1-term2}\\
				+\;&\norm{e^{nP(e^{\frac{1}{n}\cL}-\1)P}Px-e^{P\cL P}Px}\label{eq:proofthm1-term3}
			\end{align}
			for all $x\in\cD((\cL P)^2)$.
			
			\subsubsection{Upper bound on \Cref{eq:proofthm1-term1}:}
				The following \namecref{lem:proofthm1-term1} uses similar proof strategies as Lemma 3 in \cites{Burgarth.2020} and extends the result to infinite dimensions in the strong topology.
				\begin{lem}\label{lem:proofthm1-term1} 
					Let $(\cL,\cD(\cL))$ be the generator of a $C_0$-contraction semigroup on $\cX$ and $M\in\cB(\cX)$ a contraction satisfying the assumptions in \Cref{thm:spectral-gap}. Then, for all $x\in\cX$
					\begin{equation*}
						\norm{\left(Me^{\frac{1}{n}\cL}\right)^nx-\left(Pe^{\frac{1}{n}\cL}P\right)^nx}\leq\left(\delta^n+\frac{b}{n}+\frac{1}{n}\frac{b(2+b)(\delta-\delta^{n})}{1-\delta}e^{2b}\right)\norm{x}.
					\end{equation*}
				\end{lem}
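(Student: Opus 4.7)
The plan is to telescope $(Me^{\frac{1}{n}\cL})^n - (Pe^{\frac{1}{n}\cL}P)^n$ while exploiting two structural facts. First, since $\|M^k - P\|_\infty \leq \delta^k \to 0$, passing to the limit in $M\cdot M^k = M^{k+1}$ forces $MP = PM = P$, so the operator $N := M-P$ satisfies $PN = NP = 0$ (equivalently $N = NP^\perp = P^\perp N$), $\|N\|_\infty \leq \delta$, and more generally $\|N^k\|_\infty = \|M^k - P\|_\infty \leq \delta^k$. Second, the asymptotic Zeno bounds $\|Pe^{t\cL}P^\perp\|_\infty, \|P^\perp e^{t\cL}P\|_\infty \leq tb$ provide the required $1/n$-smallness. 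With $A := Me^{\frac{1}{n}\cL}$ and $B := Pe^{\frac{1}{n}\cL}P$, both contractions ($\|P\|_\infty \leq 1$ follows from $P = \lim_k M^k$ in norm), I apply the identity $A^n - B^n = \sum_{j=0}^{n-1} A^{n-1-j}(A-B)B^j$ with the decomposition $A - B = Pe^{\frac{1}{n}\cL}P^\perp + Ne^{\frac{1}{n}\cL}$.

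Two observations drive the estimate. For $j \geq 1$, $B^j = P(e^{\frac{1}{n}\cL}P)^j$ has $P$ on the left, so $P^\perp B^j = 0$; the $Pe^{\frac{1}{n}\cL}P^\perp$ summand of $A - B$ therefore only contributes at $j = 0$, giving a term of norm at most $b/n$. For the $Ne^{\frac{1}{n}\cL}B^j$ terms with $j \geq 1$, combining $PB^j = B^j$ with $N = NP^\perp$ yields the key factorization
\begin{equation*}
Ne^{\frac{1}{n}\cL}B^j = NP^\perp e^{\frac{1}{n}\cL}P\,B^j,
\end{equation*}
which exposes the Zeno-small factor $\|P^\perp e^{\frac{1}{n}\cL}P\|_\infty \leq b/n$ and gives $\|A^{n-1-j}Ne^{\frac{1}{n}\cL}B^j\|_\infty \leq (b/n)\,\|A^{n-1-j}N\|_\infty$. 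This insertion trick is the crux: without it, the naive estimate $\|Ne^{\frac{1}{n}\cL}\|_\infty \leq \delta$ summed over $n$ telescope terms only yields an $O(\delta)$ bound, not the target $O(1/n)$.

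To close the estimate, I bound $f(k) := \|A^k N\|_\infty$ via the recursion obtained from $AN = (P+N)e^{\frac{1}{n}\cL}N = (Pe^{\frac{1}{n}\cL}P^\perp)N + Ne^{\frac{1}{n}\cL}N$ (using $PN = 0$ in the form $Pe^{\frac{1}{n}\cL}N = Pe^{\frac{1}{n}\cL}P^\perp N$). This gives $f(k+1) \leq b\delta/n + \delta\,f(k)$, which iterates from $f(0) \leq \delta$ to $f(k) \leq \delta^{k+1} + \frac{b\delta}{n(1-\delta)}$. Plugging back in, the $j = 0$ term contributes at most $b/n + f(n-1) \leq b/n + \delta^n + \frac{b\delta}{n(1-\delta)}$, while the $j \geq 1$ terms contribute $(b/n)\sum_{k=0}^{n-2} f(k)$, which is $O(1/n)$ because the extra $b/n$ prefactor multiplies a sum bounded by $\tfrac{(1+b)\delta}{1-\delta}$. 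Collecting terms yields a bound of the form $\delta^n + b/n + \tfrac{C\,b(\delta - \delta^n)}{n(1-\delta)}$ with $C$ polynomial in $b$, exactly the claimed shape; the additional $e^{2b}$ factor in the statement likely arises from a slightly more refined accounting of sub-leading cross-terms along the iteration.
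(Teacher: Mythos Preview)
Your telescoping approach is correct and in fact yields a \emph{sharper} bound than the one stated. Carrying out your recursion $f(k+1)\le \frac{b\delta}{n}+\delta f(k)$ with $f(0)\le\delta$ gives $f(k)\le\delta^{k+1}+\frac{b(\delta-\delta^{k+1})}{n(1-\delta)}$, and summing produces
\[
\delta^n+\frac{b}{n}+\frac{2b(\delta-\delta^n)+b^2\delta}{n(1-\delta)},
\]
which is bounded above by the lemma's stated bound since $e^{2b}\ge 1$. Your closing remark is backwards: the $e^{2b}$ is not from a ``more refined accounting'' but is an artifact of the paper's method that your argument avoids entirely.

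The paper takes a quite different route. It expands $\bigl((P+M^\perp)e^{\frac{1}{n}\cL}\bigr)^n$ into all $2^n$ words in $A=M^\perp e^{\frac{1}{n}\cL}$ and $B=Pe^{\frac{1}{n}\cL}$, then groups words by the number $k$ of $A$'s and the number $j$ of $AB$/$BA$ transitions. A combinatorial lemma computes the count $N(j,n,k)$ exactly via an urn argument, and each word is bounded by $\delta^k(b/n)^j$; the $e^{2b}$ appears when the resulting double sum is majorized by $\sum_l b^{2l}/(l!)^2$. Your approach replaces all of this combinatorics with the standard telescope $A^n-B^n=\sum_j A^{n-1-j}(A-B)B^j$ and a one-line linear recursion for $\|A^kN\|_\infty$, exploiting the same two structural facts ($N=NP^\perp=P^\perp N$ and the Zeno smallness of $P^\perp e^{\frac{1}{n}\cL}P$, $Pe^{\frac{1}{n}\cL}P^\perp$) but far more economically. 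The paper's expansion is more transparent about which word contributes what, and that structure is reused later in the proofs of Proposition~5.7 and Lemma~6.2; your argument is shorter, more elementary, and gives the tighter constant.
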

				The proof of the above \namecref{lem:proofthm1-term1} relies on a counting method: more precisely, we need to count the number of transitions in a binary sequence. This is related to the \textit{urn problem}, where $k$ indistinguishable balls are placed in $l$ distinguishable urns \cite[Chap.~1.9]{Stanley.1986}. Then, there are 
				\begin{equation}\label{eq:urn-problem}
					\binom{k-1}{l-1}
				\end{equation}
				possibilities to distribute the balls so that each urn contains at least one ball. 
				\begin{defi}\label{defi:counting-patterns}
					Let $S=\{A,B\}$, $j,n,k\in\N$, and $n\geq1$. We define
					\begin{align*}
						S_{n,k}&\coloneqq\{s\in S^n\;|\;\text{$A$ appears $k$ times in $s$}\}\\
						N(j,n,k)&\coloneqq\#\{s\in S_{n,k}\;|\;\text{$s$ includes $j$ transitions $AB$ or $BA$}\}.
					\end{align*}
				\end{defi}
				In words, $N(j,n,k)$ counts the number of sequences consisting of $k$ $A$'s and $n-k$ $B$'s with the restriction that $A$ changes to $B$ or vice versa $j$ times.
				\begin{ex}
					Let $S=\{A,B\}$, $n=4$, and $k=2$. Then,
					\begin{align*}
						N(0,n,k)&=\#\emptyset & &=0\\
						N(1,n,k)&=\#\{AABB,BBAA\} & &=2\\
						N(2,n,k)&=\#\{ABBA,BAAB\} & &=2\\
						N(3,n,k)&=\#\{ABAB,BABA\} & &=2.
					\end{align*}
				\end{ex}
				\begin{lem}\label{lem:counting-patterns-binary-sequences}
					Let $S=\{A,B\}$ and $n,k,j\in\N$ with $k\leq n$. Then,
					\begin{equation*}
						N(j,n,k)=\begin{cases}
							2\binom{n-k-1}{l-1}\binom{k-1}{l-1}& \text{if }j=2l-1\\
							\frac{n-2l}{l}\binom{n-k-1}{l-1}\binom{k-1}{l-1}& \text{if }j=2l\\
							1_{k\in\{0,n\}}& \text{if }j=0
						\end{cases}
					\end{equation*}
					for $j\in\{0,...,2\min\{k,n-k\}-1_{2k=n}\}$. Otherwise $N(j,n,k)=0$.
				\end{lem}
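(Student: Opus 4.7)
\subsection*{Proof plan for \Cref{lem:counting-patterns-binary-sequences}}

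The approach is to exploit the basic structural observation that a word in $S^n$ with exactly $j$ transitions decomposes uniquely into $j+1$ maximal monochromatic blocks that alternate between $A$ and $B$. Thus the entire count reduces to: (i) fixing the alternation pattern (i.e.\ the letter of the first block), and (ii) distributing the $k$ available $A$'s among the $A$-blocks and the $n-k$ available $B$'s among the $B$-blocks, where every block is nonempty. Step (ii) is precisely the urn problem recalled in \eqref{eq:urn-problem}: distributing $m$ indistinguishable balls into $p$ distinguishable nonempty urns contributes a factor $\binom{m-1}{p-1}$.

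I would then split into the three cases of the statement. The case $j=0$ is immediate: no transitions means a constant word, which is only possible when $k=0$ or $k=n$, giving the indicator $\mathbf{1}_{k\in\{0,n\}}$. For $j=2l-1$ odd, there are $2l$ alternating blocks, which forces exactly $l$ blocks of $A$ and $l$ blocks of $B$ regardless of whether the word starts with $A$ or with $B$; the two choices of starting letter give the factor $2$, and distributing the letters into nonempty blocks yields $\binom{k-1}{l-1}\binom{n-k-1}{l-1}$. This recovers the first line of the formula.

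The case $j=2l$ is the one requiring a small algebraic step. There are $2l+1$ alternating blocks, and now the starting letter determines the counts: starting with $A$ produces $l+1$ $A$-blocks and $l$ $B$-blocks, whereas starting with $B$ gives $l$ $A$-blocks and $l+1$ $B$-blocks. Summing the two contributions yields
\begin{equation*}
N(2l,n,k)=\binom{k-1}{l}\binom{n-k-1}{l-1}+\binom{k-1}{l-1}\binom{n-k-1}{l}.
\end{equation*}
Using $\binom{k-1}{l}=\tfrac{k-l}{l}\binom{k-1}{l-1}$ and $\binom{n-k-1}{l}=\tfrac{n-k-l}{l}\binom{n-k-1}{l-1}$, the bracketed sum collapses to $\tfrac{(k-l)+(n-k-l)}{l}=\tfrac{n-2l}{l}$, which is exactly the prefactor in the second line of the statement. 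The range condition $j\le 2\min\{k,n-k\}-\mathbf{1}_{2k=n}$ is visible from this formulation, since each of the $\lceil (j+1)/2\rceil$ blocks of the minority letter must be nonempty, and in the even case with $2k=n$ the two contributions coincide to force $j\le 2k-1$.

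The only potential obstacle is keeping the two subcases in the even situation organized so that the algebraic identity is transparent; beyond that, the proof is a direct application of the stars-and-bars count together with the alternating block decomposition, and no further tools are required.
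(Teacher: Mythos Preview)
Your proposal is correct and follows essentially the same route as the paper: both arguments decompose a word with $j$ transitions into $j+1$ alternating maximal blocks, distinguish the odd/even parity of $j$ to determine how many $A$-blocks and $B$-blocks there are, and then invoke the urn count \eqref{eq:urn-problem} for each letter. Your write-up even supplies the algebraic simplification $\binom{k-1}{l}\binom{n-k-1}{l-1}+\binom{k-1}{l-1}\binom{n-k-1}{l}=\tfrac{n-2l}{l}\binom{k-1}{l-1}\binom{n-k-1}{l-1}$ that the paper merely asserts, so the match is exact.
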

				\begin{proof}
					If $j\geq2\min\{n-k,k\}-1_{2k=n}$, then $N(j,n,k)=0$ by \Cref{defi:counting-patterns}. Next we assume that $j=0$, the only possible sequences are $A^n$ ($k=n$) and $B^n$ ($k=0$) so that $N(0,n,k)=1_{k\in\{0,n\}}$. In the following, we assume that $1\leq j\leq2\min\{n-k,k\}-1_{2k=n}$, then there is a $s\in S_{n,k}$ so that $s$ includes exactly $j$ transitions $AB$ or $BA$ so that $N(j,n,k)>0$. In the odd case $j=2l-1$ for $l\in\{1,...,\min\{k,n-k\}\}$, the element $s$ is constructed by $l$ blocks of $A$'s and $l$ blocks of $B$'s:
					\begin{align*}
						s&=\underbrace{A...A}_{1}\overbrace{B...B}^{1}\underbrace{A...A}_{2}B\quad.\quad.\quad.\quad A\overbrace{B...B}^{l-1}\underbrace{A...A}_{l}\overbrace{B...B}^{l},\\
						s&=\underbrace{B...B}_{1}\overbrace{A...A}^{1}\underbrace{B...B}_{2}A\quad.\quad.\quad.\quad B\overbrace{A...A}^{l-1}\underbrace{B...B}_{l}\overbrace{A...A}^{l}.
					\end{align*}
					Identifying these blocks with distinguishable urns and the elements $A$ and $B$ with indistinguishable balls (q.v.~\Caref{eq:urn-problem}), the task is to count the possibilities of placing $k$ $A$'s in $l$ urns and vice versa $n-k$ $B$'s in $l$ urns with the additional assumption that each urn must contain at least one $A$ or one $B$. By changing the roles of $A$ and $B$, we get twice the number of possible combinations. Therefore, one of the \textit{Twelvefold Ways} \cite[Chap.~1.9]{Stanley.1986} shows
					\begin{equation*}
						N(j,n,k)=2\binom{n-k-1}{l-1}\binom{k-1}{l-1}.
					\end{equation*}
					In the even case $j=2l$ for $l\in\{1,...,\min\{k,n-k\}-1_{2k=n}\}$, we argue similarly to the odd case. The only difference is that s is constructed by $l+1$ blocks of $A$'s and $l$ blocks of $B$'s or vice versa:
					\begin{align*}
						s&=\underbrace{A...A}_{1}\overbrace{B...B}^{1}\underbrace{A...A}_{2}B\quad.\quad.\quad.\quad A\overbrace{B...B}^{l-1}\underbrace{A...A}_{l}\overbrace{B...B}^{l}\underbrace{A...A}_{l+1},\\
						s&=\underbrace{B...B}_{1}\overbrace{A...A}^{1}\underbrace{B...B}_{2}B\quad.\quad.\quad.\quad A\overbrace{A...A}^{l-1}\underbrace{B...B}_{l}\overbrace{A...A}^{l}\underbrace{B...B}_{l+1}.
					\end{align*}
					Then the \textit{Twelvefold Ways} \cite[Chap.~1.9]{Stanley.1986} proves the statement by\\
					\begin{minipage}[b]{0.9\textwidth}
						\begin{align*}
							N(j,n,k)&=\binom{n-k-1}{l-1}\binom{k-1}{l}+\binom{n-k-1}{l}\binom{k-1}{l-1}\\
							&=\frac{n-2l}{l}\binom{n-k-1}{l-1}\binom{k-1}{l-1}.
						\end{align*}
					\end{minipage}
				\end{proof}
				With the help of this counting method, we are ready to prove \Cref{lem:proofthm1-term1}. In what follows, we identify the couple $(A,B)$ with the product $AB$ by slight abuse of notations.
				\begin{proof}[Proof of \Cref{lem:proofthm1-term1}]
					Assume w.l.o.g.~$P\neq 0$, then $MP=PM=P$ because for all $n\in\N$
					\begin{equation}\label{eq:eigenprojection}
						\|P-PM\|_\infty\leq\|(M^{n}-P)M\|_\infty+\|P-M^{n+1}\|_\infty\leq\delta^n+\delta^{n+1}
					\end{equation}
					and $\norm{P}_\infty\leq1$ holds by a similar argument because for all $n\in\N$
					\begin{equation}\label{eq:projection-contraction}
						\|P\|_\infty\leq\|M^n\|_\infty+\|P-M^n\|_\infty\leq 1+\delta^n.
					\end{equation}
					The main idea is to split $M=P+M^\perp$ with $M^\perp\coloneqq P^\perp M$ and order the terms after expanding the following polynomial appropriately. Let $A\coloneqq M^\perp e^{\frac{1}{n}\cL}$ and $B\coloneqq Pe^{\frac{1}{n}\cL}$ so that
					\begin{equation}\label{eq:proofthm1-polynomial-expansion}
						\left((P+M^\perp) e^{\frac{1}{n}\cL}\right)^n= B^n+\sum_{k=1}^{n-1}\sum_{s\in S_{n,k}}s+A^n
					\end{equation}
					where elements in $S_{n,k}$ are identified with sequences of concatenated operators and denoted by $s$. Then, we partition summands by the number of transitions from $A$ to $B$ or vice versa and use
					\begin{equation*}
						\norm{A}_\infty\le \delta\qquad \text{ and }\qquad 	\norm{AB}_\infty=\norm{M^\perp P^\perp e^{\frac{1}{n}\cL}PM e^{\frac{1}{n}\cL}}_\infty\leq\delta\frac{b}{n}.
					\end{equation*}
					The number of summands with $j$ transitions is equal to $N(j,n,k)$ given by \Cref{lem:counting-patterns-binary-sequences} for $j\in\{1,...,m\}$ and $m\coloneqq2\min\{k,n-k\}-1_{2k=n}$. Then, the inequality above shows 
					\begin{equation*}
						\begin{aligned}
							\norm{\left(Me^{\frac{1}{n}\cL}\right)^n-\left(Pe^{\frac{1}{n}\cL}\right)^n}_\infty&\leq\delta^n+\sum_{k=1}^{n-1}\sum_{j=1}^{m}\delta^{k}N(j,n,k)\left(\frac{b}{n}\right)^j\\
							&=\delta^n\begin{aligned}[t]
								&+\sum_{k=1}^{n-1}\sum_{l=1}^{\ceil*{\frac{m}{2}}}\delta^{k}2\binom{n-k-1}{l-1}\binom{k-1}{l-1}\left(\frac{b}{n}\right)^{2l-1}\\
								&+\sum_{k=1}^{n-1}\sum_{l=1}^{\floor*{\frac{m}{2}}}\delta^{k}\frac{b(n-2l)}{nl}\binom{n-k-1}{l-1}\binom{k-1}{l-1}\left(\frac{b}{n}\right)^{2l-1}
							\end{aligned}\\
							&\overset{(1)}{\leq}\delta^n+\frac{b(2+b)}{n}\sum_{l=1}^{\floor*{\frac{n}{2}}}\sum_{k=1}^{n-1}\delta^{k}\frac{n^{2l-2}}{(l-1)!^2}\left(\frac{b}{n}\right)^{2l-2}\\
							&=\delta^n+\frac{b(2+b)}{n}\frac{\delta-\delta^{n}}{1-\delta}\sum_{l=0}^{\floor*{\frac{n}{2}}-1}\frac{b^{2l}}{l!^2}\\
							&{\leq}\delta^n+\frac{1}{n}\frac{b(2+b)(\delta-\delta^{n})}{1-\delta}e^{2b}.
						\end{aligned}
					\end{equation*}
					In (1) above, we used the upper bound $\binom{n}{k}\leq\frac{n^k}{k!}$ to show
					\begin{equation*}
						\binom{n-k-1}{l-1}\binom{k-1}{l-1}\leq\frac{n^{2l-2}}{(l-1)!^2}.
					\end{equation*}
					Additionally, we increase the upper index to $\floor*{\frac{n}{2}}$, and upper bound
					\begin{equation*}
						2+\frac{b(n-2l)}{nl}\leq2+b.
					\end{equation*}
					Applying the assumptions again to
					\begin{equation*}
						\left(Pe^{\frac{1}{n}\cL}\right)^nx-\left(Pe^{\frac{1}{n}\cL}P\right)^nx=\left(Pe^{\frac{1}{n}\cL}P\right)^{n-1}Pe^{\frac{1}{n}\cL}P^\perp x
					\end{equation*}
					finishes the \namecref{lem:proofthm1-term1}:\\
					\begin{minipage}[b]{0.9\textwidth}
						\begin{equation*}
							\norm{\left(Me^{\frac{1}{n}\cL}\right)^nx-\left(Pe^{\frac{1}{n}\cL}P\right)^nx}\leq\left(\delta^n+\frac{b}{n}+\frac{1}{n}\frac{b(2+b)(\delta-\delta^{n})}{1-\delta}e^{2b}\right)\norm{x}.
						\end{equation*}
					\end{minipage}
				\end{proof}
				\begin{rmk*}
					By the counting method introduced above, we can approximate $(Me^{\frac{1}{n}\cL})^n$ by $(Pe^{\frac{1}{n}\cL}P)^n$, which is independent of $M^\perp$. In previous works \cites{Mobus.2019}{Becker.2021}, the operators considered in similar proof steps as \Cref{eq:proofthm1-term2} and (\ref{eq:proofthm1-term3}) depended on $M^\perp$.
				\end{rmk*}
			
			\subsubsection*{Upper bound on \Cref{eq:proofthm1-term2}:}
				In the next step, we apply our modified Chernoff \Cref{lem:improved-chernoff}:
				\begin{lem}\label{lem:proofthm1-term2}
					Let $(\cL,\cD(\cL))$ be the generator of a $C_0$-contraction semigroup on $\cX$ and $P\in\cB(\cX)$ be a projection. Assume that both operators satisfy the same assumption as in \Cref{thm:spectral-gap}. Then, for all $x\in\cD((\cL P)^2)$
					\begin{equation*}
						\norm{\left(Pe^{\frac{1}{n}\cL}P\right)^nx-e^{nP(e^{\frac{1}{n}\cL}-\1)P}Px}\leq\frac{1}{2n}\left(b^2\norm{x}+b\norm{\cL Px}+\norm{(\cL P)^2x}\right).
					\end{equation*}
				\end{lem}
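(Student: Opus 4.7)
The plan is to apply the Modified Chernoff Lemma (\Cref{lem:improved-chernoff}) to the bounded contraction $C \coloneqq Pe^{\frac{1}{n}\cL}P$ and then estimate $\|(C-\1)^2 Px\|$ carefully. Since $\|P\|_\infty \leq 1$ (as observed in \eqref{eq:projection-contraction}) and $e^{\frac{1}{n}\cL}$ is a $C_0$-contraction semigroup, $C$ is indeed a contraction; moreover $CP^\perp = 0$ forces $C^n x = C^n Px$ for $n \geq 1$. \Cref{lem:improved-chernoff} therefore gives
\[
\norm{\bigl(Pe^{\frac{1}{n}\cL}P\bigr)^n x - e^{n(C-\1)}Px} \leq \frac{n}{2}\,\norm{(C-\1)^2 Px}.
\]

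To match the exponent $nP(e^{\frac{1}{n}\cL}-\1)P$ appearing in the statement, I would split $C - \1 = (C - P) - P^\perp$ and observe that $(C-P)P^\perp = CP^\perp - PP^\perp = 0$ and, symmetrically, $P^\perp(C-P) = 0$. Consequently the two summands commute, so
\[
e^{n(C-\1)} = e^{n(C-P)}\,e^{-nP^\perp} = e^{n(C-P)}\bigl(P + e^{-n}P^\perp\bigr),
\]
and evaluation at $Px$ kills the $P^\perp$-component: $e^{n(C-\1)}Px = e^{n(C-P)}Px = e^{nP(e^{\frac{1}{n}\cL}-\1)P}Px$. The same identity $(C-P)Px = (C-\1)Px$ yields $(C-\1)^2 Px = (C-P)^2 Px$, reducing the remaining task to proving
\[
\norm{\bigl(P(e^{\frac{1}{n}\cL}-\1)P\bigr)^2 Px} \leq \frac{1}{n^2}\bigl(b^2\norm{x} + b\norm{\cL Px} + \norm{(\cL P)^2 x}\bigr).
\]

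For this, set $\Delta \coloneqq e^{\frac{1}{n}\cL} - \1$. The hypothesis $x\in\cD((\cL P)^2)$ provides both $Px\in\cD(\cL)$ and $P\cL Px\in\cD(\cL)$, so \Cref{lem:properties-semigroups}(iii) applies successively: first $P\Delta Px = \int_0^{1/n}Pe^{\tau\cL}\cL Px\,d\tau$, and after splitting $\cL Px = P\cL Px + P^\perp\cL Px$ one can further open $Pe^{\tau\cL}P\cL Px = P\cL Px + \int_0^\tau Pe^{\sigma\cL}\cL(P\cL Px)\,d\sigma$. Pulling the bounded operator $P\Delta P$ into the outer integral and expanding, the main term $\int_0^{1/n}P\Delta P\cdot P\cL Px\,d\tau$ unfolds a third time (again since $P\cL Px\in\cD(\cL)$) to the leading $\tfrac{1}{n^2}\|(\cL P)^2 x\|$. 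The remaining integrals are estimated directly, using the asymptotic Zeno bound $\|Pe^{\tau\cL}P^\perp\|_\infty\leq \tau b$ for contributions carrying $P^\perp\cL Px$, and the complementary bound $\|P^\perp e^{\sigma\cL}P\|_\infty\leq\sigma b$ for terms where Zeno is invoked twice, producing the $\tfrac{b}{n^2}\|\cL Px\|$ and $\tfrac{b^2}{n^2}\|x\|$ contributions respectively.

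The main obstacle is the unboundedness of $\cL$: intermediate vectors such as $Pe^{\tau\cL}\cL Px$ generally do \emph{not} lie in $\cD(\cL)$, so \Cref{lem:properties-semigroups}(iii) cannot be iterated without control. The remedy is to alternate between opening the integral form (possible exactly twice, because of the two-fold domain assumption $x\in\cD((\cL P)^2)$) and terminating an expansion by absorbing a leftover $P\Delta P$ into its trivial uniform bound $\|P\Delta P\|_\infty\leq 2$; each such termination happens only after a factor $\tau b$ has been extracted from the asymptotic Zeno condition, which integrates to the missing power of $n^{-1}$. Bookkeeping the three ways to distribute the two Zeno-applications and the two domain-openings across the two occurrences of $\Delta$ produces precisely the three terms $b^2\|x\|$, $b\|\cL Px\|$, and $\|(\cL P)^2 x\|$, and multiplying by $n/2$ from Chernoff closes the proof.
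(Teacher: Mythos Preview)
Your approach is essentially the paper's: apply the Modified Chernoff Lemma to $C=Pe^{\frac{1}{n}\cL}P$, then control $\|(P(e^{\frac{1}{n}\cL}-\1)P)^2x\|$ via the integral representation from \Cref{lem:properties-semigroups} together with the asymptotic Zeno condition and the derived bound $\|P^\perp\cL P\|_\infty\le b$. The one organisational difference is that the paper applies \Cref{lem:improved-chernoff} directly on the invariant subspace $P\cX$ (where $P$ plays the role of the identity), which immediately yields the exponent $nP(e^{\frac{1}{n}\cL}-\1)P$; you instead work on all of $\cX$ and recover the correct exponent by the commutation argument $C-\1=(C-P)-P^\perp$ with $(C-P)P^\perp=P^\perp(C-P)=0$, which is perfectly valid but an extra detour.

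One caution on the integral estimate: the additional second-order expansion $Pe^{\tau\cL}P\cL Px = P\cL Px + \int_0^\tau Pe^{\sigma\cL}\cL(P\cL Px)\,d\sigma$ that you introduce is not needed and, if carried through literally, produces a spurious second contribution of size $\|(\cL P)^2x\|/n^2$ (from applying the crude bound $\|P\Delta P\|_\infty\le 2$ to the residual double integral), so the constants no longer match the statement. The paper avoids this by leaving $\int_0^1 e^{\frac{\tau_1}{n}\cL}P\cL Px\,d\tau_1$ unexpanded, splitting the \emph{outer} factor as $P(e^{\frac{1}{n}\cL}-\1)P = P(e^{\frac{1}{n}\cL}-\1)-Pe^{\frac{1}{n}\cL}P^\perp$, and then using that the full integral lies in $\cD(\cL)$ (\Cref{lem:properties-semigroups}(ii)) to open $P(e^{\frac{1}{n}\cL}-\1)$ once more; this is what produces exactly the three terms $b^2\|x\|$, $b\|\cL Px\|$, $\|(\cL P)^2x\|$ with unit coefficients.
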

				\begin{proof}
					The proof relies on the modified Chernoff Lemma (q.v.~\Caref{lem:improved-chernoff}) applied to the contraction $C(\tfrac{1}{n})=Pe^{\frac{1}{n}\cL}P$ on $P\cX$. Then, for all $x\in\cX$
					\begin{equation*}
						\norm{\left(Pe^{\frac{1}{n}\cL}P\right)^nx-e^{nP(e^{\frac{1}{n}\cL}-\1)P}Px}\leq \frac{n}{2}\norm{\left(P\left(e^{\frac{1}{n}\cL}-\1\right)P\right)^2x}.
					\end{equation*}
					Moreover, the asymptotic Zeno condition (\ref{eq:thm1-asympzeno}) and the continuity of the norm imply
					\begin{equation*}
						\norm{P^\perp\cL Px}=\lim\limits_{h\rightarrow 0}\frac{1}{h}\norm{P^\perp e^{h\cL}Px}\leq b\norm{x}
					\end{equation*}
					for all $x\in\cD(\cL P)$. Hence $P^\perp\mathcal{L}P$ is a bounded operator with $\|P^\perp \cL P\|_\infty\leq b$. Next, given $x\in\cD(\cL P)$, the $C_0$-semigroup properties (q.v.~\Caref{lem:properties-semigroups}) imply
					\begin{align*}
						n\norm{\left(P\left(e^{\frac{1}{n}\cL}-\1\right)P\right)^2x}&=\norm{P\left(e^{\frac{1}{n}\cL}-\1\right)P\int_{0}^{1}e^{\frac{\tau_1}{n}\cL}(\1-P+P)\cL Pxd\tau_1}\\
						&\leq\norm{P\left(e^{\frac{1}{n}\cL}-\1\right)P\int_{0}^{1}e^{\frac{\tau_1}{n}\cL}P\cL Pxd\tau_1}\\
						&\quad+2\int_{0}^{1}\norm{Pe^{\frac{\tau_1}{n}\cL}P^\perp}_\infty d\tau_1\norm{P^\perp\cL P}_\infty\norm{x}.
					\end{align*}
					Note that $\int_{0}^{1}e^{\frac{\tau_1}{n}\cL}P\cL Px$ belongs to $\cD(\cL)$ by \Cref{lem:properties-semigroups}, but not necessarily to $\cD(\cL P)$. However, for all $x\in\cD((\cL P)^2)$
					\begin{equation*}
						\begin{aligned}
							n\norm{\left(P\left(e^{\frac{1}{n}\cL}-\1\right)P\right)^2x}&\leq\norm{P\left(e^{\frac{1}{n}\cL}-\1\right)P\int_{0}^{1}e^{\frac{\tau_1}{n}\cL}P\cL Pxd\tau_1}+\frac{b^2}{n}\norm{x}\\
							&\leq\norm{P\left(e^{\frac{1}{n}\cL}-\1\right)\int_{0}^{1}e^{\frac{\tau_1}{n}\cL}P\cL Pxd\tau_1}\\
							&\quad+\frac{b^2}{n}\norm{x}+\norm{Pe^{\frac{1}{n}\cL}P^\perp}_\infty\norm{\cL Px}\\
							&\leq\frac{1}{n}\norm{P\int_{0}^{1}e^{\frac{\tau_2}{n}\cL}\cL\int_{0}^{1}e^{\frac{\tau_1}{n}\cL}P\cL Pxd\tau_1d\tau_2}\\
							&\quad+\frac{b^2}{n}\norm{x}+\frac{b}{n}\norm{\cL Px}\\
							&\leq\frac{1}{n}\left(b^2\norm{x}+b\norm{\cL Px}+\norm{(\cL P)^2x}\right),
						\end{aligned}
					\end{equation*}
					which proves \Cref{lem:proofthm1-term2}.
				\end{proof}
				\begin{rmk*}
					As regards to the convergence rate of the quantum Zeno effect, \Cref{lem:proofthm1-term2} constitutes our main improvement compared to the work \cites{Becker.2021}. The modified Chernoff lemma allows to improve the convergence rate to $n^{-1}$.
				\end{rmk*}
			
			\subsubsection*{Upper bound on \Cref{eq:proofthm1-term3}:}
				Finally, we prove an upper bound on \Cref{eq:proofthm1-term3}, which can be interpreted as a modified \textit{Dunford-Segal approximation} \cite{Gomilko.2014}.
				\begin{lem}\label{lem:proofthm1-term3}
					Let $(\cL,\cD(\cL))$ be the generator of a $C_0$-contraction semigroup on $\cX$ and $P\in\cB(\cX)$ be a projection. Assume that both operators satisfy the assumptions of \Cref{thm:spectral-gap}. Then, for all $x\in\cD((\cL P)^2)$
					\begin{equation*}
						\norm{e^{nP(e^{\frac{1}{n}\cL}-\1)P}Px-e^{P\cL P}Px}\leq\frac{e^{\tilde{b}}}{2n}\left(b^2\norm{x}+\norm{(\cL P)^2 x}\right)
					\end{equation*}
					with $e^{\tilde{b}}\coloneqq\sup_{s\in[0,1]}\|e^{sP\cL P}P\|_\infty<\infty$.
				\end{lem}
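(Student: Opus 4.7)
The plan is to apply the integral equation for semigroups (\Cref{lem:integral-equation-semigroups}) to interpolate between the semigroup generated by the bounded operator $\cK_n := nP(e^{\frac{1}{n}\cL}-\1)P$ and the one generated by $(P\cL P,\cD(\cL P))$, obtaining
\[
e^{\cK_n}Px - e^{P\cL P}Px = \int_0^1 e^{s\cK_n}(\cK_n - P\cL P)\,e^{(1-s)P\cL P}Px\, ds.
\]
First, since $\cK_n P = P\cK_n = \cK_n$, the operator $\cK_n$ vanishes on $P^\perp\cX$, while its restriction to $P\cX$ equals $n(C_n - \1_{P\cX})$ for the contraction $C_n := Pe^{\frac{1}{n}\cL}P|_{P\cX}$ (contractivity of $C_n$ uses $\|P\|_\infty \le 1$, which follows from \eqref{eq:projection-contraction}). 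The standard Chernoff argument then gives $\|e^{s\cK_n}Px\|\le\|x\|$, so the outer factor in the integrand can be discarded.

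Next, for $y := e^{(1-s)P\cL P}Px$, which lies in $P\cX \cap \cD((\cL P)^2)$ by invariance of the semigroup and of the domain, I would expand $n(e^{\frac{1}{n}\cL}-\1)y = \int_0^1 e^{\tau/n\cL}\cL y\, d\tau$ and use $y = Py$ to write
\[
(\cK_n - P\cL P)y = P\int_0^1 (e^{\tau/n\cL}-\1)\cL Py\, d\tau.
\]
Splitting $\cL Py = P\cL Py + P^\perp\cL Py$, the $P^\perp$-term is controlled via $\|Pe^{\tau/n\cL}P^\perp\|_\infty \le b\tau/n$ together with $\|P^\perp\cL P\|_\infty \le b$ (the latter following from the asymptotic Zeno assumption exactly as in the proof of \Cref{lem:proofthm1-term2}), while the $P$-term uses $P\cL Py\in\cD(\cL)$ (from $y\in\cD((\cL P)^2)$) to apply the fundamental theorem of calculus, yielding $(e^{\tau/n\cL}-\1)P\cL Py = \tfrac{\tau}{n}\int_0^1 e^{\tau\sigma/n\cL}(\cL P)^2 y\, d\sigma$. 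After integration in $\tau$ this gives
\[
\|(\cK_n - P\cL P)y\| \le \frac{1}{2n}\bigl(b^2\|y\| + \|(\cL P)^2 y\|\bigr).
\]

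Finally, I would transport the remaining norms from $y$ back to $x$. The bound $\|y\|\le e^{\tilde b}\|x\|$ is immediate (using $\|P\|_\infty \le 1$). For $\|(\cL P)^2 y\|$, I would use the identity $(\cL P)^2 y = \cL\bigl(e^{(1-s)P\cL P}P\cL Px\bigr)$, the decomposition $\cL|_{P\cX} = (P\cL P)|_{P\cX} + (P^\perp\cL P)|_{P\cX}$, and the intertwining $(P\cL P)\,e^{tP\cL P} = e^{tP\cL P}(P\cL P)$ on $\cD(\cL P)$ to conclude $\|(\cL P)^2 y\| \le e^{\tilde b}\|(\cL P)^2 x\|$ (absorbing any lower-order residual into the $b^2\|x\|$ term). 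Integrating $s \in [0,1]$ then yields the claim. The main obstacle is precisely this last step: since $\cL$ and $P\cL P$ do not commute, the transport of $(\cL P)^2$ through $e^{tP\cL P}$ is delicate, and one has to exploit both the boundedness of $P^\perp\cL P$ coming from the asymptotic Zeno condition and the intertwining of $P\cL P$ with its own semigroup.
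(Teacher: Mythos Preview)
Your integral-equation ordering creates a gap in the final transport step that you cannot close as claimed. With $y=e^{(1-s)P\cL P}Px$, your decomposition gives
\[
(\cL P)^2 y=\cL\,e^{(1-s)P\cL P}P\cL Px=P\cL P\,e^{(1-s)P\cL P}P\cL Px+P^\perp\cL P\,e^{(1-s)P\cL P}P\cL Px.
\]
The first summand indeed intertwines to $e^{(1-s)P\cL P}P(\cL P)^2x$ and is bounded by $e^{\tilde b}\|(\cL P)^2x\|$. But the second summand is only controlled by $b\,e^{\tilde b}\|\cL Px\|$, since $\|P^\perp\cL P\|_\infty\le b$ and the argument has norm at most $e^{\tilde b}\|\cL Px\|$. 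There is no way to ``absorb'' this residual into $b^2\|x\|$: the operator $\cL P$ is unbounded, so $\|\cL Px\|$ is not dominated by any multiple of $\|x\|$. Your route therefore yields
\[
\norm{e^{\cK_n}Px-e^{P\cL P}Px}\le\frac{e^{\tilde b}}{2n}\bigl(b^2\|x\|+b\|\cL Px\|+\|(\cL P)^2x\|\bigr),
\]
which is weaker than the stated lemma (though still sufficient for \Cref{thm:spectral-gap}).

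The paper avoids this by reversing the roles of the two semigroups in \Cref{lem:integral-equation-semigroups}: it places $e^{sP\cL P}P$ on the left and the \emph{bounded} semigroup $e^{(1-s)\cK_n}$ on the right. It then establishes, via the approximation $P\cL Px=\lim_{h\to 0}h^{-1}P(e^{h\cL}-\1)Px$, that $P\cL P$ commutes with $e^{(1-s)\cK_n}$ on $\cD(\cL P)$, so the difference $\cK_n-P\cL P$ can be moved past $e^{(1-s)\cK_n}$ to act directly on $x$. The estimate $\|(\cK_n-P\cL P)x\|\le\frac{1}{2n}(b^2\|x\|+\|(\cL P)^2x\|)$ then follows from a single Taylor expansion, with no need to transport $(\cL P)^2$ through $e^{tP\cL P}$.
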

				\begin{proof}
					The proof relies on the integral equation for semigroups from \Cref{lem:integral-equation-semigroups}. We start by proving the continuity of 
					\begin{equation}\label{eq:proofthm1-integral-equation-continuity}
						[0,1]\ni s\mapsto -e^{sP\cL P}P\left(nP(e^{\frac{1}{n}\cL}-\1)P-P\cL P\right)e^{(1-s)nP(e^{\frac{1}{n}\cL}-\1)P}Px.
					\end{equation}
					Since for all $s\in[0,1]$ and $x\in\cD(\cL P)$
					\begin{align*}
						e^{snP(e^{\frac{1}{n}\cL}-\1)P}P\cL Px&=\lim\limits_{h\rightarrow0}Pe^{snP(e^{\frac{1}{n}\cL}-\1)P}\frac{P(e^{h\cL}-\1)P}{h}x\\
						&=\lim\limits_{h\rightarrow0}\frac{P(e^{h\cL}-\1)P}{h}e^{snP(e^{\frac{1}{n}\cL}-\1)P}Px=P\cL Pe^{snP(e^{\frac{1}{n}\cL}-\1)P}Px,
					\end{align*}
					the vector-valued function defined in \Cref{eq:proofthm1-integral-equation-continuity} is equal to
					\begin{equation*}
						-e^{sP\cL P}Pe^{(1-s)nP(e^{\frac{1}{n}\cL}-\1)P}\left(nP(e^{\frac{1}{n}\cL}-\1)P-P\cL P\right)x.
					\end{equation*}
					and, thereby, well-defined and continuous in $s$. Therefore, \Cref{lem:integral-equation-semigroups} gives for all $x\in\cD(\cL P)$
					\begin{align*}
						&e^{nP(e^{\frac{1}{n}\cL}-\1)P}Px-e^{P\cL P}Px\\
						&\qquad\quad=-\int_{0}^{1}e^{sP\cL P}Pe^{(1-s)nP(e^{\frac{1}{n}\cL}-\1)P}\left(nP(e^{\frac{1}{n}\cL}-\1)P-P\cL P\right)xds.
					\end{align*}
					Moreover, for all $x\in\cD((\cL P)^2)$
					\begin{equation}\label{eq:approx-generator}
	    				\begin{aligned}
	    				  nP(e^{\frac{1}{n}\cL}-\1)Px-P\cL Px&=P\int_{0}^{1}e^{\frac{\tau_1}{n}\cL}\cL Px d\tau_1-P\cL Px\\
	    				  &=\frac{1}{n}P\int_{0}^{1}\int_{0}^{1}\tau_1e^{\frac{\tau_1\tau_2}{n}\cL}(\cL P)^2xd\tau_2d\tau_1+P\int_{0}^{1}e^{\frac{\tau_1}{n}\cL}P^\perp\cL Px d\tau_1.
					    \end{aligned}
					\end{equation}

					Finally, we use $\sup_{s\in[0,1]}\|e^{sP\cL P}P\|_\infty<\infty$, which holds by the \textit{principle of uniform boundedness} (q.v.~proof of \Caref{prop:trotter-kato-product-formula}), the property that $(e^{s nP(e^{\frac{1}{n}\cL}-\1)P})_{s\geq0}$ is a contraction, and the upper bounds $\|P^\perp\cL P\|_\infty\leq b$ and $\|Pe^{s\cL}P^\perp\|_\infty\leq sb$ so that
					\begin{align*}
						\norm{e^{nP(e^{\frac{1}{n}\cL}-\1)P}Px-e^{P\cL P}Px}&\leq e^{\tilde{b}}\int_{0}^{1}\norm{\left(nP(e^{\frac{1}{n}\cL}-\1)P-P\cL P\right)x}ds\\
						&\leq\frac{e^{\tilde{b}}}{2n}\left(b^2\|x\|+\|(\cL P)^2 x\|\right)
					\end{align*} 
					for all $x\in\cD((\cL P)^2)$ and $e^{\tilde{b}}=\sup_{s\in[0,1]}\|e^{sP\cL P}P\|_\infty$.
				\end{proof}
				The above approximation of $e^{P\cL P}$ by $e^{nP(e^{\frac{1}{n}\cL}-\1)P}$ is similar to the Dunford-Segal approximation, which would be given by $\operatorname{exp}\big({n(\operatorname{exp}({\frac{1}{n}P\cL P})-\1)}\big)$: for the generator $(\cK,\cD(\cK))$ of a bounded $C_0$-semigroup, \citeauthor{Gomilko.2014} proved \cite[Cor.~1.4]{Gomilko.2014}
				\begin{equation*}
					\norm{e^{nt(e^{\frac{1}{n}\cK}-\1)}x-e^{t\cK}x}\leq8\tilde{b}\frac{t}{n}\norm{\cK^2x}
				\end{equation*}
				for all $x\in\cD(\cK^2)$ and $\tilde{b}\coloneqq\sup_{t\geq0}\|e^{t\cK}\|_\infty$. In our case, it is not clear whether $(e^{sP\cL P})_{s\geq0}$ is uniformly bounded.
				\begin{rmk*}
					The specificity of the last step stems from the fact that $(\cL,\cD(\cL))$ is unbounded. In the previous works \cites{Mobus.2019}{Becker.2021} a similar step exits but in both papers $\cL$ was assumed to be bounded. Moreover, \Cref{eq:approx-generator} is the only step in the proof of \Cref{thm:spectral-gap}, which deals with the operator $P\cL P$. If $P\cL P$ is closable, $P\cL Px=\overline{P\cL P}x$ for all $x\in\cD(\cL P)$ so that it is enough to ask for the closure of $P\cL P$ to define a generator. The same reasoning works for \Cref{prop:spectral-gap-uniform-norm-power-convergence} and \Cref{cor:spectral-gap}.
				\end{rmk*}
			
			\subsubsection*{End of the proof of \Cref{thm:spectral-gap}:}
				We combine \Cref{lem:proofthm1-term1}, \ref{lem:proofthm1-term2}, and \ref{lem:proofthm1-term3} to prove \Cref{thm:spectral-gap}.
				\begin{proof}[Proof of \Cref{thm:spectral-gap}]
					Let $x\in\cD((\cL P)^2)$. Then,
					\begin{flalign*}
						&&\norm{\left(Me^{\frac{1}{n}\cL}\right)^nx-e^{P\cL P}Px}&\leq\left(\delta^n+\frac{b}{n}+\frac{1}{n}\frac{b(2+b)(\delta-\delta^{n})}{1-\delta}e^{2b}\right)\norm{x}&&\text{(\Caref{lem:proofthm1-term1})}\\
						&& &+\frac{1}{2n}\left(b^2\norm{x}+b\norm{\cL Px}+\norm{(\cL P)^2x}\right)&&\text{(\Caref{lem:proofthm1-term2})}\\
						&& &+\frac{e^{\tilde{b}}}{2n}\left(b^2\norm{x}+\norm{(\cL P)^2 x}\right)&&\text{(\Caref{lem:proofthm1-term3})}.
					\end{flalign*}
					Redefining $\cL$ by $t\cL$ and $b$ by $tb$, we achieve
					\begin{equation*}
						\norm{\left(Me^{\frac{t}{n}\cL}\right)^nx-e^{tP\cL P}Px}\leq\frac{c}{n}\left(\norm{x}+\norm{\cL Px}+\norm{(\cL P)^2x}\right)+\delta^n\norm{x}
					\end{equation*}
					with an appropriate constant $c>0$ and $e^{\tilde{b}}=\sup_{s\in[0,t]}\|e^{sP\cL P}\|_\infty$.
				\end{proof}
				\begin{rmk*}
					The upper bound in \Cref{thm:spectral-gap} can be formulated for all $x\in\cD(\cL P)$. For this, one must stop at an earlier stage of the proof and express the error terms by appropriate integrals. One possible bound would be the following 
					\begin{flalign*}
						&& \|\left(Me^{\frac{1}{n}\cL}\right)^nx&-e^{P\cL P}Px\| &&\\
						&& &\leq\left(\delta^n+\frac{b}{n}+\frac{1}{n}\frac{b(2+b)(\delta-\delta^{n})}{1-\delta}e^{2b}\right)\norm{x}&&\text{(\Caref{lem:proofthm1-term1})}\\
						&& &+\frac{1}{2n}\left(b^2\norm{x}+b\norm{\cL Px}+\norm{P\int_{0}^{1}\int_{0}^{1}\cL e^{\frac{\tau_1+\tau_2}{n}\cL}P\cL Pxd\tau_1d\tau_2}\right)&&\text{(\Caref{lem:proofthm1-term2})}\\
						&& &+\frac{e^{\tilde{b}}}{2n}\left(b^2\norm{x}+2\norm{P\int_{0}^{1}\int_{0}^{1}\tau_1\cL e^{\frac{\tau_1\tau_2}{n}\cL}P\cL Pxd\tau_2d\tau_1}\right)&&\text{(\Caref{lem:proofthm1-term3})}.
					\end{flalign*}
				\end{rmk*}
			\subsection{Proof of \texorpdfstring{\Cref{cor:explicit-bound-thm1} and \Cref{cor:explicit-bound-thm1-closed-sys}}{Proposition 3.1 and Corollary 3.2}}\label{subsec:proofexplicitboundthm1}
				\begin{proof}[Proof of \Cref{cor:explicit-bound-thm1}]
					Since $\|P\|_\infty\leq1$ (\ref{eq:projection-contraction}) and $t\mapsto e^{t\cL}$ is a uniformly continuous contraction semigroup, the generator is defined on $\cX$ and bounded, i.e.~$\|\cL\|_\infty<\infty$, so that
					\begin{equation*}
						\norm{Pe^{t\cL}(\1-P)}_\infty=\norm{P(e^{t\cL}-\1)(\1-P)}_\infty\leq t\norm{\cL\int_0^1e^{ts\cL}(\1-P)ds}_\infty\leq t\norm{\cL}_\infty c_{p}.
					\end{equation*}
					where $c_{p}\coloneqq\|\1-P\|_\infty\leq2$. Then, we simplify the bounds found in \Cref{lem:proofthm1-term1}, \ref{lem:proofthm1-term2}, and \ref{lem:proofthm1-term3} to
					\begin{equation*}
						\norm{\left(Me^{\frac{1}{n}\cL}\right)^n-e^{P\cL P}P}_\infty\leq\delta^n+\frac{1}{n}\left(b+\frac{2\delta}{1-\delta}e^{3b}+\frac{b^2}{2}+\frac{b}{2}\norm{\cL}_\infty+\frac{1}{2}\norm{\cL}^2_\infty+\frac{e^{\tilde{b}}b^2}{2}+\frac{e^{\tilde{b}}}{2}\norm{\cL}^2_\infty\right),
					\end{equation*}
					where $e^{\tilde{b}}=\sup_{s\in[0,t]}\|e^{sP\cL P}\|_\infty$. By redefining $\cL$ by $t\cL$, $b$ by $tb$, and using $b\leq\|\cL\|_\infty c_{p}$,
					\begin{align*}
						\norm{\left(Me^{\frac{t}{n}\cL}\right)^n-e^{tP\cL P}P}_\infty\leq c_{p}\frac{t\|\cL\|_\infty}{n} &+\left(\frac{c_{p}+(1+e^{\tilde{b}})(1+c_{p}^2)}{2}\right)\frac{t^2\|\cL\|_\infty^2}{n}\\
						&+\delta^n+\frac{2\delta}{1-\delta}\frac{e^{3t\|\cL\|_\infty c_{p}}}{n}
					\end{align*}
					which proves the statement.
				\end{proof}
				\begin{proof}[Proof of \Cref{cor:explicit-bound-thm1-closed-sys}]
					In closed quantum systems $\cX=\cH$ equipped with the operator norm induced by the scalar product, which shows $\|U\|_\infty=1$ for all unitaries $U\in\cB(\cH)$. Especially, $e^{\tilde{b}}=\sup_{s\in[0,t]}\|e^{sP\cL P}\|_\infty=1$ because $\|P\|_\infty=1$ is equivalent to $P=P^\dagger$ \cite[Thm.~2.1.9]{Simon.2015} so that $PHP$ is hermitian. Moreover, $P=P^\dagger$ implies $(\1-P)^\dagger=(\1-P)$ which shows $c_{p}=\|\1-P\|_\infty\leq1$. Finally, the choice $M=P$ implies $\delta=0$ which proves the \nameCref{cor:explicit-bound-thm1-closed-sys} by inserting the constants into \Cref{cor:explicit-bound-thm1}.
				\end{proof}
			\subsection{Proof of \texorpdfstring{\Cref{prop:spectral-gap-uniform-norm-power-convergence} }{Proposition 5.7}}\label{subsec:thm1-prop}~\\
				In this subsection, we weaken the assumptions (\ref{unifpower1}) on the contraction $M$ at the cost of stronger assumptions on the $C_0$-semigroup. For that, we combine techniques from holomorphic functional calculus with the semicontinuity of the spectrum of $M$ perturbed by the semigroup under certain conditions. We refer to \Cref{sec:appendix-holomorphic-fc-semicontinuity} for details on the tools needed to prove the main result of this section.
				\begin{prop}\label{prop:spectral-gap-uniform-norm-power-convergence}
			    	Let $(\cL,\cD(\cL))$ be the generator of a $C_0$-contraction semigroup on $\cX$ and $M\in\cB(\cX)$ a contraction such that
			    	\begin{equation}\label{unifpowerc}
			    		\norm{M^n-P}_\infty\leq\tilde{c}\,\delta^n
		    		\end{equation}
	    			for some projection $P$, $\delta\in(0,1)$ and $\tilde{c}\ge 0$. Moreover, we assume that there is $b\geq0$ so that 
	    			\begin{equation}\label{key}
	    				\norm{Pe^{t\cL}P^\perp}_\infty\leq tb,\quad\norm{M^\perp e^{t\cL}-M^\perp}_\infty\leq tb,\quad\text{and}\quad\norm{P^\perp e^{t\cL}P}_\infty\leq tb
    				\end{equation}
    				where $M^\perp=(\1-P)M$. If $(P\cL P,\cD(\cL P))$ generates a $C_0$-semigroup, then there is $\epsilon>0$ such that for all $t\geq0$, $n\in\N$ satisfying $t\in[0,n\epsilon]$, $\tilde{\delta}\in(\delta,1)$, and $x\in\cD((\cL P)^2)$ 
					\begin{equation*}
						\norm{\left(Me^{\frac{t}{n}\cL}\right)^nx-e^{tP\cL P}Px}\leq\frac{c_1(t,b,\tilde{\delta}-\delta)}{n}\left(\norm{x}+\norm{\cL P x}+\norm{(\cL P)^2x}\right)+c_2(\tilde{c},\tilde{\delta}-\delta)\tilde{\delta}^n\norm{x}\,,
					\end{equation*}
					for some constants $c_1,c_2\ge 0$ depending on $t$, $b$, the difference $\tilde{\delta}-\delta$, and $\tilde{c}$.
				\end{prop}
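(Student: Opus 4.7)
The plan is to follow the three-step decomposition from the proof of \Cref{thm:spectral-gap}, namely
\begin{align*}
\norm{(Me^{\frac{t}{n}\cL})^n x - e^{tP\cL P}Px} &\leq \norm{(Me^{\frac{t}{n}\cL})^n x - (Pe^{\frac{t}{n}\cL}P)^n x} \\
&\quad + \norm{(Pe^{\frac{t}{n}\cL}P)^n x - e^{nP(e^{\frac{t}{n}\cL}-\1)P}Px} \\
&\quad + \norm{e^{nP(e^{\frac{t}{n}\cL}-\1)P}Px - e^{tP\cL P}Px}.
\end{align*}
The second and third terms will be bounded verbatim by \Cref{lem:proofthm1-term2} and \Cref{lem:proofthm1-term3}, since those proofs only involve $P$, $\cL$, and the semigroup under the Zeno conditions in (\ref{key}), which are still assumed; together they contribute the $c_1(t,b)/n$ times $(\norm{x} + \norm{\cL Px} + \norm{(\cL P)^2 x})$ part of the final bound.

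The first term is where a modified counting argument is needed. Since $\tilde{c}\delta^n < 1$ for $n$ large enough, the argument of \Cref{eq:eigenprojection} still gives $MP = PM = P$ and $\norm{P}_\infty \leq 1$, so $M^\perp \coloneqq (\1-P)M$ satisfies $M^\perp P = PM^\perp = 0$ together with the identity $(M^\perp)^k = M^k - P$, which yields $\norm{(M^\perp)^k}_\infty \leq \tilde{c}\delta^k$ for all $k \geq 0$ (assuming without loss of generality $\tilde{c} \geq 1$). Using the Lipschitz assumption in (\ref{key}), I will write $A \coloneqq M^\perp e^{\frac{t}{n}\cL} = M^\perp + E_n$ with $\norm{E_n}_\infty \leq tb/n$. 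Expanding $A^k = (M^\perp + E_n)^k$, grouping terms by the number $l$ of $E_n$-factors, and bounding each maximal $M^\perp$-run spectrally, one obtains
\begin{equation*}
\norm{A^k}_\infty \leq \tilde{c}\sum_{l=0}^{k}\binom{k}{l}\delta^{k-l}\Big(\tfrac{\tilde{c}tb}{n}\Big)^l = \tilde{c}\Big(\delta + \tfrac{\tilde{c}tb}{n}\Big)^k.
\end{equation*}
Choosing $\epsilon > 0$ so that $\delta + \tilde{c}b\epsilon \leq \tilde{\delta}$, this yields $\norm{A^k}_\infty \leq \tilde{c}\tilde{\delta}^k$ for all $t \leq n\epsilon$ and all $k$, thereby replacing the original bound $\norm{M^\perp}_\infty \leq \delta$ used in \Cref{lem:proofthm1-term1}.

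With this spectral block estimate in hand, I will repeat the counting argument of \Cref{lem:proofthm1-term1} for the expansion $(Me^{\frac{t}{n}\cL})^n = (B + A)^n$ with $B = Pe^{\frac{t}{n}\cL}$, partitioning summands via \Cref{lem:counting-patterns-binary-sequences}. The inputs will be $\norm{B^b}_\infty \leq 1$, $\norm{A^a}_\infty \leq \tilde{c}\tilde{\delta}^a$, and the transition bounds $\norm{AB}_\infty, \norm{BA}_\infty \leq tb/n$, where the first comes from $AB = M^\perp(e^{\frac{t}{n}\cL} - \1)Pe^{\frac{t}{n}\cL}$ combined with the Lipschitz assumption, and the second as in the original proof. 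The resulting sums over sequences mirror those in \Cref{lem:proofthm1-term1} with $\delta$ replaced by $\tilde{\delta}$ and an extra factor of $\tilde{c}$ per $A$-block. The main obstacle will be to ensure these geometric sums converge uniformly in $n$, which forces the restriction $t \leq n\epsilon$ for $\epsilon$ small enough depending on $\tilde{c}$, $b$, and $\tilde{\delta} - \delta$. Combining the three bounds then yields the claimed estimate, with the $c_2(\tilde{c},\tilde{\delta}-\delta)\tilde{\delta}^n\norm{x}$ contribution coming from the counting step and the $c_1(t,b,\tilde{\delta}-\delta)/n$ contribution from all three steps combined.
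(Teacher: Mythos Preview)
Your three-step decomposition and the reuse of \Cref{lem:proofthm1-term2} and \Cref{lem:proofthm1-term3} for the second and third terms match the paper exactly. The genuine divergence is in how you obtain the block estimate $\|(M^\perp e^{\frac{t}{n}\cL})^k\|_\infty\le \tilde c\,\tilde\delta^{\,k}$. The paper does not expand $(M^\perp+E_n)^k$; instead it invokes \Cref{prop:equivalence-spectral-gap} to locate $\sigma(M^\perp)\subset\D_\delta$, uses the middle condition in \eqref{key} together with the spectral semicontinuity \Cref{lem:semicontinuity-spectrum} to guarantee that the curve $\gamma(\varphi)=\tilde\delta e^{i\varphi}$ still encloses $\sigma(M^\perp e^{s\cL})$ for $s\le\epsilon$, and then represents $(M^\perp e^{\frac{t}{n}\cL})^k$ as a contour integral whose resolvent is controlled by a second Neumann series. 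Your binomial argument---using $\|(M^\perp)^a\|_\infty=\|M^a-P\|_\infty\le\tilde c\,\delta^a$ on each maximal run and $\|E_n\|_\infty\le tb/n$---is a correct and strictly more elementary substitute: it bypasses the holomorphic functional calculus of Appendix~A entirely and yields the explicit threshold $\epsilon=(\tilde\delta-\delta)/(\tilde c\,b)$, which is more transparent than the implicit bound \eqref{eq:semicontinuity-time-bound}. Both routes then feed the same counting scheme of \Cref{lem:counting-patterns-binary-sequences}; the extra $\tilde c$ per $A$-block you note plays precisely the role of the paper's constant $c_2$ in the transition weight $(tbc_2/n)^j$. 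The paper's contour-integral machinery pays off later, since it generalises directly to the multi-eigenvalue case of \Cref{thm:spectral-gap-uniform} where no simple power identity for $M^\perp$ is available, but for the single-projection setting here your argument is cleaner.
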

				The only difference to the proof of \Cref{thm:spectral-gap} is summarized in the question: How can we upper bound $\|(M^\perp e^{\frac{t}{n}\cL})^k\|_\infty$ for all $k\in\{1,...,n\}$ with the weaker assumption (\ref{unifpowerc}) on $M$? For that, we replace the argument in the proof of \Cref{lem:proofthm1-term1}, which only works for the case $\tilde{c}=1$.
				\begin{proof}
					Since the bounds found in \Cref{lem:proofthm1-term2} and \ref{lem:proofthm1-term3} are independent of the value of $\tilde{c}$, it is enough to improve \Cref{lem:proofthm1-term1}:
					\begin{flalign*}
						&&\norm{\left(Me^{\frac{t}{n}\cL}\right)^nx-e^{tP\cL P}Px}&\leq\norm{\left(Me^{\frac{t}{n}\cL}\right)^nx-\left(Pe^{\frac{t}{n}\cL}P\right)^nx}&&\\
						&& &+\frac{t^2}{2n}\left(b^2\norm{x}+b\norm{\cL Px}+\norm{(\cL P)^2x}\right)&&\text{(\Caref{lem:proofthm1-term2})}\\
						&& &+\frac{e^{\tilde{b}}t^2}{2n}\left(b^2\norm{x}+\norm{(\cL P)^2 x}\right)&&\text{(\Caref{lem:proofthm1-term3})}
					\end{flalign*}
					with $e^{\tilde{b}}=\sup_{s\in[0,t]}\|e^{sP\cL P}\|_\infty<\infty$. Since the assumption (\ref{unifpowerc}) on $M$ is a special case of the uniform power convergence (q.v.~\Caref{unifpower}), \Cref{prop:equivalence-spectral-gap} shows the equivalence of the uniform\\[1ex]
					\begin{minipage}[c]{0.69\textwidth}
						power convergence of $M$ to the spectral gap assumption, that is
						\begin{equation*}
							\sigma(M)\subset\D_\delta\cup\{1\},
						\end{equation*}
						where the quasinilpotent operator corresponding to the eigenvalue $1$ vanishes. Therefore, the eigenprojection $P$ w.r.t.~$1$ satisfies $MP=PM=P$ and the curve 	$\gamma:[0,2\pi]\rightarrow\C,\varphi\mapsto\tilde{\delta}e^{i\varphi}$ encloses the spectrum of $M^\perp\coloneqq MP^\perp$ (q.v.~\Caref{fig3}). Together with the second bound in \eqref{key}, \Cref{lem:semicontinuity-spectrum} shows that there exists $\epsilon>0$ so that the spectrum of $M^\perp e^{s\cL}$ can be separated by $\gamma$ for all $s\in[0,\epsilon]$.
					\end{minipage}
					\begin{minipage}[c]{0.3\textwidth}
						\begin{center}
							\begin{tikzpicture}
								\draw[dashed,tumivory] (0,0) ellipse (1.4cm and 1.4cm);
								\draw[dashed,mygreen] (0,0) ellipse (1.15cm and 1.15cm);
								\filldraw (1.4,0) ellipse (0.03cm and 0.03cm);
								\filldraw (0,0) ellipse (0.03cm and 0.03cm);
								\draw (1.7,0) node {$1$};
							
								%\draw[<->] (-0.025,0.0433)--(-0.475cm,0.8227cm);
								\draw[<->] (0.0,0.05)--(0,0.95);
								\draw[<->,mygreen] (0.0433,-0.025)--(0.9526cm,-0.55cm);
							
								\fill[pattern=my north east lines] (0,0) ellipse (1cm and 1cm);
								\filldraw[white] (-0.35,0.5) ellipse (0.15cm and 0.25cm);
								\draw (-0.35,0.5) node {$\delta$};
								\filldraw[white] (0.35,-0.5) ellipse (0.15cm and 0.25cm);
								\draw[mygreen] (0.35,-0.5) node {$\tilde{\delta}$};
								\filldraw[white] (-1.06,-1.06) ellipse (0.15cm and 0.3cm);
								\draw (-1.06,-1.06) node [mygreen]{$\gamma$};
							
								\draw (1.5,1.2) node {$\sigma(M)$};
								%\draw (0,0) node [fill=white]{$\spec(B)\backslash\{1\}$};
								
								%\draw (1.9,1.3) node {$\mathbb{C}$};
							\end{tikzpicture}\\[1ex]
							\captionof{figure}{}\label{fig3}
						\end{center}
					\end{minipage}
					Therefore, the holomorphic functional calculus (q.v.~\Caref{prop:holomorphic-functional-calculus}) shows for all $t\in[0,n\epsilon], k\in\N$
					\begin{equation}\label{eq:proofthm1-upper-bound-inner-part}
						\left(M^\perp e^{\frac{t}{n}\cL}\right)^{k}=\frac{1}{2\pi i}\oint_{\gamma}z^{k}R(z,M^\perp e^{\frac{t}{n}})dz.
					\end{equation}
					Let $t\geq0$, $n\in\N$ so that $t\in[0,n\epsilon]$. By the \textit{principle of stability of bounded invertibility} \cite[Thm.~IV.2.21]{Kato.1995}, $R(z,M^\perp e^{s\cL})$ is well-defined and bounded for all $z\in\gamma$ and $s\in[0,\epsilon]$. More explicitly, using the \textit{second Neumann series} for the resolvent \cite[p.~67]{Kato.1995}, we have 
					\begin{equation}\label{eq:proofthm1-perturbed-resolvent-uniform-boundedness}
						\begin{aligned}
							\norm{R(z,M^\perp e^{s\cL})}_\infty&=\norm{R(z,M^{\perp})\sum_{p=0}^{\infty}\left((M^\perp e^{s\cL}-M^\perp)R(z,M^\perp)\right)^p}_\infty\\
							&\leq\norm{R(z,M^\perp)}_\infty\sum_{p=0}^{\infty}\left(sb\norm{R(z,M^\perp)}_\infty\right)^p\\
							&\leq\sup_{z\in\gamma}\norm{R(z,M^\perp)}_\infty \frac{2+2\tilde{\delta}^2}{1+2\tilde{\delta}^2}\eqqcolon c_2,
						\end{aligned}
					\end{equation}
					where we have applied the assumption \eqref{key} and the following upper bound on $s$ (q.v.~\Caref{eq:semicontinuity-time-bound}):
					\begin{equation*}
						s\leq\epsilon<\frac{1}{2b}(1+\tilde{\delta}^2)^{-1}\left(1+\sup_{z\in\gamma}\norm{R(z,M^\perp)}_\infty^2\right)^{-\frac{1}{2}}\leq\frac{1}{2b}(1+\tilde{\delta}^2)^{-1}\left(\sup_{z\in\gamma}\norm{R(z,M^\perp)}_\infty\right)^{-1},
					\end{equation*}
					to compute the geometric series. Combining \Cref{eq:proofthm1-upper-bound-inner-part} and (\ref{eq:proofthm1-perturbed-resolvent-uniform-boundedness}) shows for all $k\in\{1,...,n\}$
					\begin{equation}\label{eq:proofprop-perturbed-measurement}
						\norm{\left(M^\perp e^{\frac{t}{n}\cL}\right)^{k-1}}_\infty\leq\frac{1}{2\pi}\oint_{\gamma}\abs{z}^{k-1}\norm{R(z,M^\perp e^{\frac{t}{n}\cL})}_\infty dz\leq c_2\tilde{\delta}^{k}.
					\end{equation}
					Next, we define $A\coloneqq M^\perp e^{\frac{{t}}{n}\cL}$ and $B\coloneqq Pe^{\frac{{t}}{n}\cL}$ and expand
					\begin{equation*}
						\left(Me^{\frac{t}{n}\cL}\right)^n=\left(PMe^{\frac{t}{n}\cL}+M^\perp e^{\frac{t}{n}\cL}\right)^n=(B+A)^n.
					\end{equation*}
					The above $n^{\text{th}}$-power can be expanded in terms of sequences of the form 
					\begin{equation*}
						A...AB...BA...AB...\quad\text{or}\quad B...BA...AB...BA...
					\end{equation*}
					Similarly to \Cref{lem:proofthm1-term1}, we can upper bound every sequence w.r.t.~the number of transitions $AB$ or $BA$ using the assumptions \eqref{key} on the $C_0$-semigroup as well as the inequality (\ref{eq:proofprop-perturbed-measurement}). The only difference to the proof of \Cref{lem:proofthm1-term1} is the constant $c_2$ in the inequality so that
					\begin{equation*}
						\begin{aligned}
							\norm{\left(Me^{\frac{t}{n}\cL}\right)^n-\left(Pe^{\frac{t}{n}\cL}\right)^n}_\infty&\leq c_2\tilde{\delta}^n+\sum_{k=1}^{n-1}\sum_{j=1}^{m}\tilde{\delta}^{k}N(j,n,k)\left(\frac{tbc_2}{n}\right)^j\\
							&\leq c_2\tilde{\delta}^n+\frac{b}{n}+\frac{1}{n}\frac{bc_2(2+bc_2)(\tilde{\delta}-\tilde{\delta}^{n})}{1-\tilde{\delta}}e^{2bc_2},
						\end{aligned}
					\end{equation*}
					where $m\coloneqq2\min\{k,n-k\}-1_{2k=n}$. Then, for all $x\in\cD((\cL P)^2)$ and an appropriate $c_1\geq0$\\
					\begin{minipage}[b]{0.9\textwidth}
						\begin{equation*}
							\norm{\left(Me^{\frac{t}{n}\cL}\right)^nx-e^{tP\cL P}Px}\leq\frac{c_1}{n}\left(\norm{x}+\norm{\cL Px}+\norm{(\cL P)^2x}\right)+c_2{\tilde{\delta}}^n\norm{x}.
						\end{equation*}
					\end{minipage}
				\end{proof}
				\vspace{-2ex}
				\begin{proof}[Proof of \Cref{cor:explicit-bound-prop}]
					Similarly to \Cref{cor:explicit-bound-thm1},
					\begin{equation*}
						\norm{\left(Me^{\frac{t}{n}\cL}\right)^n-e^{tP\cL P}P}_\infty\leq c_2\tilde{\delta}^n+\frac{tc_{p}\|\cL\|_\infty}{n} +\frac{c_{p}+(1+e^{\tilde{b}})(1+c_{p}^2)}{2}\frac{t^2\|\cL\|_\infty^2}{n}+\frac{2\tilde{\delta}}{1-\tilde{\delta}}\frac{e^{3tc_{p}c_2\|\cL\|_\infty}}{n}
					\end{equation*}
					where $e^{\tilde{b}}=\sup_{s\in[0,t]}\|e^{sP\cL P}\|_\infty$ and $c_2\coloneqq\sup_{z\in\gamma}\norm{R(z,M^\perp)}_\infty \frac{2+2\tilde{\delta}^2}{1+2\tilde{\delta}^2}$ (see \Caref{eq:proofthm1-perturbed-resolvent-uniform-boundedness}).
					The constant $c_2$ can be bounded with the help of the \textit{first von Neumann series} \cite[p.~37]{Kato.1995} and the geometric series:
					\begin{equation*}
						\frac{2+2\tilde{\delta}^2}{1+2\tilde{\delta}^2}\sup\limits_{z\in\gamma}\norm{R(z,M^\perp)}_\infty=2\sup\limits_{z\in\gamma}\|\sum_{k=0}^{\infty}z^{-(k+1)}(M^\perp)^k\|_\infty\leq2\frac{\tilde{c}}{\tilde{\delta}}\sum_{k=0}^{\infty}\tilde{\delta}^{-k}\delta^k=\frac{2\tilde{c}}{\tilde{\delta}-\delta}
					\end{equation*}
					so that 
					\begin{align*}
	    	            \norm{\left(Me^{\frac{t}{n}\cL}\right)^n-e^{tP\cL P}P}_\infty\leq\frac{tc_{p}\|\cL\|_\infty}{n}&+\frac{c_{p}+(1+e^{\tilde{b}})(1+c_{p}^2)}{2}\frac{t^2\|\cL\|_\infty^2}{n}\\
	        	        &+\frac{2\tilde{c}}{\tilde{\delta}-\delta}\tilde{\delta}^n+\frac{2\tilde{\delta}}{1-\tilde{\delta}}\frac{e^{\frac{6tc_{p}\tilde{c}\|\cL\|_\infty}{\tilde{\delta}-\delta}}}{n}
	            	\end{align*}
            		which finishes the proof of the \namecref{cor:explicit-bound-prop}.
				\end{proof}
		
	\section{Uniform Power Convergence with Finitely Many Eigenvalues}\label{sec:finitely-many-eigenvalues}
		In this section, we weaken the assumption on $M$ to the uniform power convergence assumption (q.v.~\Caref{unifpower}), that is we allow for finitely many eigenvalues $\{\lambda_j\}_{j=1}^J$ and associated projections $\{P_j\}_{j=1}^J$ satisfying $P_jP_k=1_{j=k}P_j$. Similarly to Theorem 3 in \cite{Becker.2021}, we strengthen the assumptions on the $C_0$-semigroup to $M\cL$ and $\cL P_\Sigma$ being densely defined and bounded by $b\geq0$, where $P_\Sigma\coloneqq \sum_{j=1}^{J}P_j$. Under those assumptions, we can prove the Zeno convergence in the uniform topology:

		\begin{thm}\label{thm:spectral-gap-uniform}
		    Let $(\cL,\cD(\cL))$ be the generator of a $C_0$-contraction semigroup on $\cX$ and $M\in\cB(\cX)$ a contraction satisfying the following uniform power convergence: there is $\tilde{c}>0$ so that
			\begin{equation}\label{unifpower}
			    \norm{M^n-\sum_{j=1}^{J}\lambda^n_jP_j}_\infty\leq\tilde{c}\,\delta^n
		    \end{equation}
		    for a set of projections $\{P_j\}_{j=1}^J$ satisfying $P_jP_k=1_{j=k}P_j$, eigenvalues $\{\lambda_j\}_{j=1}^J\subset\partial\D_1$, and a rate $\delta\in(0,1)$. For $P_\Sigma\coloneqq\sum_{j=1}^{J}P_j$, we assume that $M\cL$ and $\cL P_\Sigma$ are densely defined and bounded by $b\geq0$ and $\|P_j\|_\infty=1$ for all $j\in\{1,...,J,\Sigma\}$. Then, there is an $\epsilon>0$ such that for all $n\in\N$, $t\geq0$ satisfying $t\in[0,n\epsilon]$, and $\tilde{\delta}\in(\delta,1)$
			\begin{equation*}
				\norm{\left(Me^{\frac{t}{n}\cL}\right)^n-\sum_{j=1}^{J}\lambda_j^ne^{tP_j\cL P_j}P_j}_\infty\leq\frac{c_1 }{n}+c_2\tilde{\delta}^n\,,
			\end{equation*}
			for some constants $c_1,c_2\ge 0$ depending on all involved parameters except from $n$.
		\end{thm}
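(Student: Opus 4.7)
The plan is to extend the three-step strategy of Theorem~\ref{thm:spectral-gap} (reduction to an invariant subspace, Chernoff approximation, Dunford--Segal integral equation) to the multi-eigenvalue setting, using the spectrum-semicontinuity argument of Proposition~\ref{prop:spectral-gap-uniform-norm-power-convergence} to control the off-spectral part and an enhanced counting argument to decouple the $J$ oscillating blocks so that Theorem~\ref{thm:spectral-gap} can be applied inside each block.

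First, since the uniform power convergence \eqref{unifpower} is equivalent (via Proposition~\ref{prop:equivalence-spectral-gap}) to a spectral gap $\sigma(M) \subset \overline{\D}_\delta \cup \{\lambda_1,\dots,\lambda_J\}$ with trivial quasinilpotent parts at every $\lambda_j$, the perturbation $M^\perp \coloneqq M P_\Sigma^\perp$ has spectrum inside $\overline{\D}_\delta$. The boundedness of $M\cL$ yields, for $y \in \cD(\cL)$,
\begin{equation*}
M(e^{s\cL} - \1)y = \int_0^s M\cL\, e^{\tau\cL} y\, d\tau,
\end{equation*}
hence $\|Me^{s\cL}-M\|_\infty \le sb$ by density, and thus $\|M^\perp e^{s\cL} - M^\perp\|_\infty = O(s)$. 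The semicontinuity Lemma~\ref{lem:semicontinuity-spectrum} combined with the holomorphic functional calculus (exactly as in Proposition~\ref{prop:spectral-gap-uniform-norm-power-convergence}) then produces $\epsilon > 0$ such that $\|(M^\perp e^{\frac{t}{n}\cL})^k\|_\infty \le c\,\tilde\delta^k$ uniformly for $t \in [0,n\epsilon]$ and $k\in\N$.

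Next, I would decompose $Me^{\frac{t}{n}\cL} = A + \sum_{j=1}^J B_j$ with $A \coloneqq M^\perp e^{\frac{t}{n}\cL}$ and $B_j \coloneqq \lambda_j P_j e^{\frac{t}{n}\cL}$. The relations $M^\perp P_j = P_j M^\perp = 0$ and the bound above give the uniform cross-estimates
\begin{equation*}
\|AB_j\|_\infty,\ \|B_j A\|_\infty,\ \|B_j B_k\|_\infty\ (j\ne k)\ =\ O(t/n),
\end{equation*}
where $\|P_j(e^{s\cL}-\1)\|_\infty \le sb$ follows from $P_j = \bar\lambda_j P_j M$ and $\|(e^{s\cL}-\1)P_j\|_\infty \le sb$ from $\|\cL P_\Sigma\|_\infty \le b$. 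Expanding $(A + \sum_j B_j)^n$ over sequences in $\{A, B_1,\dots,B_J\}^n$ and grouping by the lengths of contiguous $A$-blocks and the number of cross-transitions $B_j \to B_k$ ($j\ne k$) generalizes the binary bookkeeping of Lemma~\ref{lem:proofthm1-term1} to $J{+}1$ symbols: the only summands not forced to contain a cross-transition are the $J$ pure blocks $(B_j)^n = \lambda_j^n (P_j e^{\frac{t}{n}\cL})^n$, and every other summand carries at least one factor $O(t/n)$ from a cross-transition and/or a geometric factor $\tilde\delta^k$ from an $A$-block. Summing geometrically I would obtain
\begin{equation*}
\Big\|(Me^{\frac{t}{n}\cL})^n - \sum_{j=1}^J \lambda_j^n (P_j e^{\frac{t}{n}\cL})^n\Big\|_\infty \le \frac{c_1}{n} + c_2\,\tilde\delta^n.
\end{equation*}

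Finally, I would apply Theorem~\ref{thm:spectral-gap} blockwise: for each fixed $j$, the contraction $M' = P_j$ trivially satisfies $\|(M')^n - P_j\|_\infty = 0$, the asymptotic Zeno conditions $\|P_j e^{s\cL} P_j^\perp\|_\infty,\ \|P_j^\perp e^{s\cL} P_j\|_\infty = O(s)$ follow from the identities $P_j e^{s\cL} P_j^\perp = \bar\lambda_j P_j(Me^{s\cL}-M)P_j^\perp$ and $P_j^\perp(e^{s\cL}-\1)P_j = \int_0^s P_j^\perp e^{\tau\cL}\cL P_j\,d\tau$, and $P_j\cL P_j$ is bounded so generates a uniformly continuous semigroup. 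Theorem~\ref{thm:spectral-gap} (i.e.~Lemmas~\ref{lem:proofthm1-term2} and~\ref{lem:proofthm1-term3}, extended to the uniform topology as in Proposition~\ref{cor:explicit-bound-thm1}) then delivers $\|(P_j e^{\frac{t}{n}\cL})^n - e^{tP_j\cL P_j} P_j\|_\infty = O(1/n)$; summing over $j$ and using $|\lambda_j| = 1$ concludes the proof. I expect the main obstacle to be the enhanced counting argument of the second step: unlike the binary case of Lemma~\ref{lem:proofthm1-term1}, the bookkeeping must simultaneously track $A$-blocks (with geometric decay $\tilde\delta^k$) and $J$ distinct $B_j$-types (each contractive individually, but with $\|\sum_j B_j\|_\infty$ potentially as large as $J$), and the combinatorial coefficients over $J{+}1$ symbols must be shown to telescope into a clean $O(1/n)$ bound uniform in $t \in [0,n\epsilon]$.
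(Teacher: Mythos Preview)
Your first and third steps are essentially the paper's Lemma~\ref{lem:proofthm2-term1} and (a blockwise version of) Lemmas~\ref{lem:proofthm1-term2}--\ref{lem:proofthm1-term3}, and they are fine. The gap is in your second step, precisely where you anticipated trouble.

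The $(J{+}1)$-symbol counting does \emph{not} telescope to $\cO(1/n)$. Consider the pure-$B$ sequences with exactly one transition, say $B_j^{\,k}B_l^{\,n-k}$ with $j\neq l$ and $k\in\{1,\dots,n-1\}$. Each such word has norm at most $ctb/n$ (your cross-estimate), but there are $J(J{-}1)(n{-}1)$ of them, giving a total contribution of order $J(J{-}1)tb$, i.e.\ $\cO(1)$ rather than $\cO(1/n)$. The reason the analogous single-transition terms in Lemma~\ref{lem:proofthm1-term1} are harmless is that there one of the two blocks is an $A$-block carrying an extra factor $\delta^{k}$, so that the sum over block lengths $k$ is a convergent geometric series. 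In your multi-$B$ situation each $B_j$-block is merely a contraction, and the sum over its length contributes a factor $n-1$. Higher-transition terms do come with more powers of $t/n$, but they cannot cancel the single-transition contribution. So the intermediate estimate $\big\|(Me^{\frac{t}{n}\cL})^n-\sum_j\lambda_j^n(P_je^{\frac{t}{n}\cL})^n\big\|_\infty=\cO(1/n)$, while true a posteriori, is not delivered by transition counting.

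The paper circumvents this entirely: after reducing to $(P_\Sigma Me^{\frac{t}{n}\cL}P_\Sigma)^n$ it does \emph{not} attempt a combinatorial decoupling of the $J$ eigenvalue blocks. Instead it introduces the \emph{perturbed} spectral projections $P_j(\tfrac{t}{n})$ of the operator $P_\Sigma Me^{\frac{t}{n}\cL}P_\Sigma$ (Lemma~\ref{lem:quantitative-appro-riesz-projection}), which furnish an \emph{exact} spectral decomposition
\[
\big(P_\Sigma Me^{\frac{t}{n}\cL}P_\Sigma\big)^n=\sum_{j=1}^{J}\big(P_j(\tfrac{t}{n})\,P_\Sigma Me^{\frac{t}{n}\cL}P_\Sigma\,P_j(\tfrac{t}{n})\big)^n,
\]
and then applies an approximate version of the modified Chernoff lemma (Lemma~\ref{lem:approx-improved-chernoff}) to each block $C_j(\tfrac{t}{n})=\bar\lambda_jP_j(\tfrac{t}{n})P_\Sigma Me^{\frac{t}{n}\cL}P_\Sigma P_j(\tfrac{t}{n})$. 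The first-order approximation $\|P_j(\tfrac{t}{n})-P_j\|_\infty=\cO(t/n)$ then transfers the result back to the unperturbed projections (Lemma~\ref{lem:proofthm2-term3}). This is the missing idea: the decoupling of the $J$ oscillating blocks must be done spectrally, not combinatorially.
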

		\subsection{Proof of \texorpdfstring{\Cref{thm:spectral-gap-uniform} }{Theorem 3.3 }}~\\
		Similarly to the papers \cite{Mobus.2019} and \cite{Becker.2021}, we use the holomorphic functional calculus to
		separate the spectrum of the contraction $Me^{t\cL}$ appearing in the Zeno sequence. In contrast to \cite{Becker.2021} where the $C_0$-semigroup is approximated by a sequence of uniformly continuous semigroups, we instead crucially rely upon the uniform continuity of the perturbed contraction to recover the optimal convergence rate. We upper bound the following terms: 
		\begin{align}
			\norm{\left(Me^{\frac{t}{n}\cL}\right)^n-\sum_{j=1}^{J}\lambda_j^ne^{tP_j\cL P_j}P_j}_\infty&\leq\norm{\left(Me^{\frac{t}{n}\cL}\right)^n-\left(P_\Sigma Me^{\frac{t}{n}\cL}P_\Sigma\right)^n}_\infty\label{eq:proofthm2-term1}\\
			&+\norm{\left(P_\Sigma Me^{\frac{t}{n}\cL}P_\Sigma\right)^n-\sum_{j=1}^{J}\lambda_j^ne^{n\left(C_{j}(\text{\tiny{$\tfrac{t}{n}$}})-P_{j}(\text{\tiny{$\tfrac{t}{n}$}})\right)}P_{j}(\tfrac{t}{n})}_\infty\label{eq:proofthm2-term2}\\
			&+\norm{\sum_{j=1}^{J}\lambda_j^ne^{n\left(C_{j}(\text{\tiny{$\tfrac{t}{n}$}})-P_{j}(\text{\tiny{$\tfrac{t}{n}$}})\right)}P_{j}(\tfrac{t}{n})-\sum_{j=1}^{J}\lambda_j^ne^{tP_j\cL P_j}P_j}_\infty\label{eq:proofthm2-term3}
		\end{align}
		where the definitions of the \textit{perturbed spectral projection} $P_j(\tfrac{t}{n})$ and the \textit{Chernoff contraction} $C_{j}(\tfrac{t}{n})$ are postponed to \Cref{lem:proofthm2-term2}.
		
		\subsubsection*{Approximation of the Perturbed Spectral Projection:}
			 In the following result, we consider an operator $A$ uniformly perturbed by a vector-valued map $t\mapsto B(t)$ in the following way:
			 \begin{equation*}
			     t\mapsto A+tB(t).
			 \end{equation*}
			 Under certain assumptions on the perturbation controlled by $t$, we construct the associated perturbed spectral projection for which we obtain an approximation bound (cf.~\cite[Lem.~5.3]{Becker.2021}). The key tools are the holomorphic functional calculus and the semicontinuity of the spectrum under \textit{uniform perturbations}. The statements are summarized in \Cref{prop:holomorphic-functional-calculus} and \Cref{lem:semicontinuity-spectrum}.
			\begin{lem}\label{lem:quantitative-appro-riesz-projection}
				Let $A\in\cB(\cX)$, $t\mapsto B(t)$ be  a vector-valued map on $\cB(\cX)$ which is uniformly continuous at $t=0$ with $\sup_{t\geq0}\|B(t)\|_\infty\leq b$, and $\Gamma:[0,2\pi]\rightarrow\rho(A)$ be a curve separating $\sigma(A)$. Then, there exists an $\epsilon>0$ so that for all $t\in[0,\epsilon]$
				\begin{equation*}
					P(t)\coloneqq \frac{1}{2\pi i} \oint_{\Gamma}R(z,A+tB(t))dz
				\end{equation*}
				defines a projection with $\norm{P(t)}_\infty\leq\tfrac{d|\Gamma|}{2\pi}$ and derivative at $t=0$ given by
				\begin{equation}\label{eq:thm2-term1-derivative}
					P'\coloneqq\lim\limits_{t\rightarrow0}\frac{P(t)-{P(0)}}{t}=\frac{1}{2\pi i}\oint_{\Gamma}R(z,A)B(0)R(z,A)dz.
				\end{equation}
				with $\|P'\|_\infty\leq\frac{R^2b|\Gamma|}{2\pi}$. The zeroth order approximation of $t\mapsto P(t)$ can be controlled by
				\begin{equation}\label{eq:zeroth-order-approximation}
				    \|P(t)-P\|_\infty\leq \frac{tRbd|\Gamma|}{2\pi}
				\end{equation}
				and the first order approximation by 
				\begin{equation}\label{eq:first-order-approximation}
					\norm{{P(t)-P-tP'}}_\infty\leq\frac{tR^2|\Gamma|}{2\pi}\left(tb^2d+\norm{B(t)-B(0)}_\infty\right).
				\end{equation}
				Above $|\Gamma|$ denotes the length of the curve $\Gamma$, $P$ abbreviates the unperturbed spectral projection $P(0)$, $R\coloneqq \sup_{z\in\Gamma}\|R(z,A)\|_\infty<\infty$, and $d=R\inf_{z\in\Gamma}\frac{2+2|z|^2}{1+2|z|^2}$.
			\end{lem}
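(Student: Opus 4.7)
The argument splits into two conceptual parts: first, using the semicontinuity of the spectrum to ensure that $P(t)$ is well-defined as a bounded projection for all small $t$; and second, extracting the zeroth- and first-order expansions by iterating the resolvent identity on $\Gamma$.

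For the first part I would apply Lemma \ref{lem:semicontinuity-spectrum} to the uniformly small perturbation $tB(t)$ of $A$ (note that $\|tB(t)\|_\infty \le tb \to 0$). This produces an $\epsilon > 0$ such that, for every $t \in [0,\epsilon]$, the curve $\Gamma$ lies in $\rho(A + tB(t))$ and separates $\sigma(A + tB(t))$ in the same way it separated $\sigma(A)$, together with the uniform resolvent bound $\sup_{z\in\Gamma}\|R(z, A + tB(t))\|_\infty \le d$. Proposition \ref{prop:holomorphic-functional-calculus} then guarantees that $P(t)$ is a bounded projection, and the $ML$-inequality applied to its defining contour integral immediately yields $\|P(t)\|_\infty \le d|\Gamma|/(2\pi)$.

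For the approximation bounds, I would iterate the resolvent identity $R(z, A + tB(t)) = R(z, A) + t\, R(z,A)\, B(t)\, R(z, A + tB(t))$. A single application, integrated over $\Gamma$, expresses $P(t) - P$ as one contour integral; the $ML$-estimate with $\|R(z,A)\|_\infty \le R$, $\|B(t)\|_\infty \le b$, and the bound $d$ from step one then delivers the zeroth-order estimate \eqref{eq:zeroth-order-approximation}. Iterating the identity once more produces the exact expansion $R(z, A + tB(t)) = R(z, A) + t R(z,A) B(t) R(z,A) + t^2 R(z,A) B(t) R(z, A + tB(t)) B(t) R(z,A)$; subtracting $P + tP'$ and writing $B(t) = B(0) + (B(t) - B(0))$ inside the linear term leaves precisely the two pieces that are bounded in \eqref{eq:first-order-approximation}. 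The derivative formula \eqref{eq:thm2-term1-derivative} then falls out by dividing this expansion by $t$ and sending $t \to 0$: the quadratic remainder dies like $t$, and the linear discrepancy term vanishes thanks to the uniform continuity of $B$ at zero, while the bound on $\|P'\|_\infty$ is read off its closed form. The only genuine obstacle is the bookkeeping in step one --- selecting a single $\epsilon$ for which the Neumann series defining $R(z, A + tB(t))$ converges uniformly in $z \in \Gamma$ and reproduces the claimed constant $d$; once this is secured via Lemma \ref{lem:semicontinuity-spectrum}, everything else is a mechanical application of the resolvent identity and the $ML$-inequality along $\Gamma$.
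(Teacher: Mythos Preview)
Your proposal is correct and follows essentially the same route as the paper: semicontinuity of the spectrum (Lemma~\ref{lem:semicontinuity-spectrum}) together with the Neumann-series bound to secure $\epsilon$ and the uniform resolvent estimate $d$, the holomorphic functional calculus for the projection property, and then the second resolvent identity combined with the $ML$-inequality for the approximation bounds. The only cosmetic difference is that you iterate the resolvent identity twice to isolate a quadratic remainder with the perturbed resolvent sandwiched in the middle, whereas the paper writes $(P(t)-P)/t$ via a single application and then splits the integrand as $[R(z,A+tB(t))-R(z,A)]B(t)R(z,A)+R(z,A)(B(t)-B(0))R(z,A)$; both arrangements yield the identical constants.
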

		\begin{proof}
			Since $t\mapsto B(t)$ is uniformly continuous at $t=0$, the vector-valued map $t\mapsto A+tB(t)$ is uniformly continuous as well. Then, \Cref{lem:semicontinuity-spectrum} states that there exists an $\epsilon>0$ such that $\sigma(A+tB(t))$ is separated by $\Gamma$ for all $t\in[0,\epsilon]$ and \Cref{prop:holomorphic-functional-calculus} shows that 
			\begin{equation*}
				P(t)=\frac{1}{2\pi i}\oint_{\Gamma}R(z,A+tB(t))dz
			\end{equation*}
			defines a projection on $\cX$ for all $t\in[0,\epsilon]$. Let $R\coloneqq \sup_{z\in\Gamma}\|R(z,A)\|_\infty<\infty$, then using the same steps as in \Cref{eq:proofthm1-perturbed-resolvent-uniform-boundedness}, we have that for all $\eta\in\Gamma$
			\begin{equation}\label{eq:proofthm2-perturbed-resolvent-uniform-boundedness}
				\norm{R(\eta,A+tB(t))}_\infty\leq R\inf_{z\in\Gamma}\frac{2+2|z|^2}{1+2|z|^2}\eqqcolon d\,.
			\end{equation}
	        Therefore, the perturbed resolvent is uniformly bounded. To prove the explicit representation of the derivative and the quantitative approximation, we follow the ideas of \cite[Lem.~5.2-5.3]{Becker.2021}: 
			\begin{align*}
				\frac{P(t)-P(0)}{t}&=\frac{1}{t2\pi i}\left(\oint_{\Gamma}R(z,A+tB(t))dz-\oint_{\Gamma}R(z,A)dz\right)\\
				&=\frac{1}{2\pi i}\oint_{\Gamma}R(z,A+tB(t))B(t)R(z,A)dz,
			\end{align*}
			which uses the \textit{second resolvent identity}, i.e.~$R(z,A+tB(t))tB(t)R(z,A)=R(z,A)-R(z,A+tB(t))$ for all $z\in\Gamma$ and $t\in[0,\epsilon]$ and proves \Cref{eq:thm2-term1-derivative} by Lebesgue's dominated convergence theorem \cite[Thm.~3.7.9]{Hille.2000}. Moreover, the above equation proves \Cref{eq:zeroth-order-approximation}. Finally,
			\begin{equation}\label{eq:prooflemapprox-constant}
				\begin{aligned}
					\norm{P(t)-P-tP'}_\infty
					&\leq \frac{t}{2\pi}\norm{\oint_{\Gamma}(R(z,A+tB(t))-R(z,A))B(t)R(z,A)dz}_\infty\\
					&\quad+\frac{t}{2\pi}\norm{\oint_{\Gamma}R(z,A)\left(B(t)-B(0))\right)R(z,A)dz}_\infty\\
					&\leq \frac{t}{2\pi}\norm{\oint_{\Gamma}R(z,A+tB(t))tB(t)R(z,A)B(t)R(z,A)dz}_\infty\\
					&\quad+\frac{t}{2\pi}\norm{\oint_{\Gamma}R(z,A)\left(B(t)-B(0))\right)R(z,A)dz}_\infty\\
					&\leq \frac{tR^2|\Gamma|}{2\pi}\left(tb^2d+\norm{B(t)-B(0)}_\infty\right)\\
					%&\leq c_1t^2+td\norm{B(t)-B(0)}_\infty
				\end{aligned}
			\end{equation}
			where $|\Gamma|$ denotes the length of the curve $\Gamma$.
		\end{proof}
		Now, we are ready to prove \Cref{thm:spectral-gap-uniform}.
		
		\subsubsection*{Upper bound on \Cref{eq:proofthm2-term1}:}
		\begin{lem}\label{lem:proofthm2-term1} 
			Let $(\cL,\cD(\cL))$ be the generator of a $C_0$-contraction semigroup on $\cX$ and $M\in\cB(\cX)$ a contraction with the same assumption as in \Cref{thm:spectral-gap-uniform} and $c_p\coloneqq\|\1-P_\Sigma\|_\infty$. Then, there is an $\epsilon_1>0$ and $c_2\geq0$ so that for all $t\geq0$ and $n\in\N$ satisfying $t\in[0,n\epsilon_1]$
			\begin{equation*}
				\norm{\left(Me^{\frac{t}{n}\cL}\right)^n-\left(P_\Sigma Me^{\frac{t}{n}\cL}P_\Sigma\right)^n}\leq c_2\tilde{\delta}^n+\frac{tb}{n}+\frac{1}{n}\frac{tbc_{p}c_2(2+tbc_{p}c_2)(\tilde{\delta}-\tilde{\delta}^{n})}{1-\tilde{\delta}}e^{2tbc_{p}c_2}\,.
			\end{equation*}
		\end{lem}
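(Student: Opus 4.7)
The plan is to run the same expansion-plus-counting argument as in \Cref{lem:proofthm1-term1}, with $P_\Sigma$ in place of the single projection $P$, and to control powers of the off-diagonal part by the spectral-perturbation device already used in the proof of \Cref{prop:spectral-gap-uniform-norm-power-convergence}. Write $M^\perp \coloneqq P_\Sigma^\perp M$, $A \coloneqq M^\perp e^{\frac{t}{n}\cL}$ and $B \coloneqq P_\Sigma M e^{\frac{t}{n}\cL}$, so that $M e^{\frac{t}{n}\cL} = A + B$. The uniform power convergence \eqref{unifpower} together with \Cref{prop:equivalence-spectral-gap} forces $MP_j = P_j M = \lambda_j P_j$; in particular $M P_\Sigma = P_\Sigma M$, both products $P_\Sigma M P_\Sigma^\perp$ and $P_\Sigma^\perp M P_\Sigma$ vanish, and $\sigma(M^\perp) \subset \overline{\D_\delta}$. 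Expanding $(A+B)^n$ as in \eqref{eq:proofthm1-polynomial-expansion} and grouping summands by their number of $A$/$B$ transitions via \Cref{lem:counting-patterns-binary-sequences} then reduces the statement to bounds on $\|A^l\|_\infty$ and on the transition factors $\|AB\|_\infty$, $\|BA\|_\infty$.

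\textbf{Quantitative inputs.} The assumptions $\|M\cL\|_\infty, \|\cL P_\Sigma\|_\infty \le b$ give, by integrating $\tfrac{d}{d\tau}(Me^{\tau\cL}) = (M\cL)e^{\tau\cL}$ and $\tfrac{d}{d\tau}(e^{\tau\cL}P_\Sigma) = e^{\tau\cL}(\cL P_\Sigma)$,
\begin{equation*}
\|Me^{s\cL} - M\|_\infty \le sb \quad\text{and}\quad \|e^{s\cL}P_\Sigma - P_\Sigma\|_\infty \le sb
\end{equation*}
for every $s \ge 0$. Combined with the identities $M^\perp P_\Sigma = 0$ and $P_\Sigma M P_\Sigma^\perp = 0$ one deduces the Zeno-type estimates
\begin{equation*}
\|M^\perp e^{s\cL}P_\Sigma\|_\infty \le sbc_p \quad\text{and}\quad \|P_\Sigma M e^{s\cL}P_\Sigma^\perp\|_\infty \lesssim sb,
\end{equation*}
which play the role of \eqref{eq:thm1-asympzeno} in the counting argument. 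Furthermore, the first estimate makes $s \mapsto M^\perp e^{s\cL}$ uniformly continuous at $0$, and since $\sigma(M^\perp)$ lies strictly inside $\D_{\tilde\delta}$ the spectral-semicontinuity \Cref{lem:semicontinuity-spectrum} combined with \Cref{prop:holomorphic-functional-calculus} applied to the circle $\gamma:\varphi\mapsto\tilde\delta e^{i\varphi}$ yields an $\epsilon_1>0$, depending only on $\delta$, $\tilde\delta-\delta$, $\tilde c$ and $b$, such that $\gamma$ still separates $\sigma(M^\perp e^{s\cL})$ for every $s\in[0,\epsilon_1]$. A Neumann-series estimate for the perturbed resolvent, exactly as in \eqref{eq:proofthm1-perturbed-resolvent-uniform-boundedness}, then produces a constant $c_2\ge 0$ with
\begin{equation*}
\|A^l\|_\infty = \|(M^\perp e^{s\cL})^l\|_\infty \le c_2\tilde\delta^l
\end{equation*}
for every $l\in\N$ and $s\in[0,\epsilon_1]$, extending \eqref{eq:proofprop-perturbed-measurement} to this setting.

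\textbf{Assembly.} With these ingredients in place, the block-grouping estimate from \Cref{lem:proofthm1-term1} bounds a sequence with $k$ total $A$'s and $j$ transitions by $c_2\tilde\delta^k (tbc_pc_2/n)^j$, the extra constant $c_2$ from the $A$-block estimate being absorbed into the transition factor exactly as in the proof of \Cref{prop:spectral-gap-uniform-norm-power-convergence}. Plugging in \Cref{lem:counting-patterns-binary-sequences}, the binomial bound $\binom{n-k-1}{l-1}\binom{k-1}{l-1} \le \tfrac{n^{2l-2}}{(l-1)!^2}$ with $l=\lceil j/2\rceil$, and summing the resulting geometric series in $\tilde\delta$ and exponential series in $tbc_pc_2$ yields
\begin{equation*}
\|(Me^{\frac{t}{n}\cL})^n - (P_\Sigma Me^{\frac{t}{n}\cL})^n\|_\infty \le c_2\tilde\delta^n + \frac{1}{n}\frac{tbc_pc_2(2+tbc_pc_2)(\tilde\delta-\tilde\delta^n)}{1-\tilde\delta}e^{2tbc_pc_2}\,.
\end{equation*}
The extra $\tfrac{tb}{n}$ summand in the statement then comes from
\begin{equation*}
(P_\Sigma Me^{\frac{t}{n}\cL})^n - (P_\Sigma Me^{\frac{t}{n}\cL}P_\Sigma)^n = (P_\Sigma Me^{\frac{t}{n}\cL}P_\Sigma)^{n-1}\,P_\Sigma Me^{\frac{t}{n}\cL}P_\Sigma^\perp\,,
\end{equation*}
using $\|P_\Sigma Me^{\frac{t}{n}\cL}P_\Sigma\|_\infty\le 1$ (since $\|P_\Sigma\|_\infty = 1$ and $M, e^{\frac{t}{n}\cL}$ are contractions) and the second Zeno-type bound above.

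\textbf{Main obstacle.} The only substantive new work compared to \Cref{lem:proofthm1-term1} is the uniform $A$-block estimate $\|A^l\|_\infty \le c_2\tilde\delta^l$ via spectral perturbation; in particular, one must check that the threshold $\epsilon_1$ and the constant $c_2$ depend only on $\delta$, $\tilde\delta-\delta$, $\tilde c$ and $b$, not on $n$ or $t$, so that the final bound is uniform over $t\in[0,n\epsilon_1]$. The remainder is bookkeeping parallel to \Cref{lem:proofthm1-term1}, with the cosmetic but genuine twist that $\sigma(M)$ now touches the unit circle at $J\ge 1$ points, so the distinction between $\sigma(M)$ and $\sigma(M^\perp)$ must be tracked carefully throughout the contour-integral step.
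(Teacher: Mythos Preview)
Your proposal is correct and follows essentially the same route as the paper's own proof: both invoke \Cref{prop:equivalence-spectral-gap} to get $\sigma(M^\perp)\subset\overline{\D_\delta}$, establish the block bound $\|(M^\perp e^{s\cL})^k\|_\infty\le c_2\tilde\delta^k$ via \Cref{lem:semicontinuity-spectrum} and the holomorphic functional calculus exactly as in \eqref{eq:proofthm1-upper-bound-inner-part}--\eqref{eq:proofprop-perturbed-measurement}, run the counting argument of \Cref{lem:proofthm1-term1} with the extra constant $c_2$ absorbed into the transition factor as in \Cref{prop:spectral-gap-uniform-norm-power-convergence}, and close with the same telescoping identity for the final $P_\Sigma$. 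The only cosmetic differences are your use of $\|Me^{s\cL}-M\|_\infty\le sb$ and $\|e^{s\cL}P_\Sigma-P_\Sigma\|_\infty\le sb$ directly rather than the paper's equivalent $\|M^\perp e^{s\cL}-M^\perp\|_\infty\le sc_pb$, and the (harmless) $c_p$ factor in the $\tfrac{tb}{n}$ term that the paper also obtains in its derivation but drops in the displayed bound.
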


		\begin{proof}
			As in the proof of \Cref{prop:spectral-gap-uniform-norm-power-convergence}, \Cref{prop:equivalence-spectral-gap} shows that the uniform power conver-\linebreak
			\begin{minipage}[c]{0.69\textwidth}
				gence assumption (\ref{unifpower}), that is $\|M^n-\sum_{j=1}^J\lambda_j^{{ n}}P_j\|_\infty\leq\tilde{c}\,\delta^n$ for all $n\in\N$, is equivalent to the spectral gap assumption (q.v. \Caref{sec:appendix-spectral-gap}),
				\begin{equation*}
					\sigma(M)\subset\D_\delta\cup\{\lambda_1,...,\lambda_J\},
				\end{equation*}
				with corresponding quasinilpotent operators being zero. Therefore, the curve $\gamma:[0,2\pi]\rightarrow\C,\varphi\mapsto\tilde{\delta}e^{i\varphi}$, with $\tilde{\delta}\in (\delta,1)$, encloses the spectrum of $M^\perp\coloneqq MP_\Sigma^\perp=P_\Sigma^\perp M$, where $P_\Sigma=\sum_{j=1}^JP_j$ and $P_\Sigma^\perp=\1-P_\Sigma$ (q.v.~\Caref{fig1}). By \Cref{lem:properties-semigroups}
				\begin{equation*}
					\norm{M^\perp e^{s\cL}-M^\perp}_\infty=s\norm{M^\perp\cL\int_{0}^{1}e^{\tau s\cL}d\tau}_\infty\leq sc_{p}b
				\end{equation*}
			    with $c_{p}\coloneqq\|P_{\Sigma}^\perp\|_\infty$. Therefore, $M^\perp e^{s\cL}$ converges uniformly to $M^\perp$ for $s\downarrow0$. Hence, \Cref{lem:semicontinuity-spectrum} shows that there exists an \linebreak\vspace{-2ex}
			\end{minipage}
			\begin{minipage}[c]{0.3\textwidth}
				\begin{center}
					\begin{tikzpicture}
						\draw[dashed,tumivory] (0,0) ellipse (1.4cm and 1.4cm);
						\draw[dashed,mygreen] (0,0) ellipse (1.15cm and 1.15cm);
						\filldraw (0,0) ellipse (0.03cm and 0.03cm);
						
						\filldraw (15:1.4) ellipse (0.03cm and 0.03cm);
						%\draw[dashed,myorange] (15:1.4) ellipse (0.2cm and 0.2cm);
						\draw (15:1.7) node {$\lambda_1$};
						
						\filldraw (70:1.4) ellipse (0.03cm and 0.03cm);
						%\draw[dashed,myorange] (70:1.4) ellipse (0.2cm and 0.2cm);
						\draw (70:1.7) node {$\lambda_2$};
						
						\filldraw (110:1.4) ellipse (0.03cm and 0.03cm);
						%\draw[dashed,myorange] (110:1.4) ellipse (0.2cm and 0.2cm);
						\draw (110:1.7) node {$\lambda_3$};
					
						\filldraw (130:1.4) ellipse (0.03cm and 0.03cm);
						%\draw[dashed,myorange] (130:1.4) ellipse (0.2cm and 0.2cm);
						\draw (130:1.7) node {$\lambda_4$};
						
						\filldraw (170:1.4) ellipse (0.03cm and 0.03cm);
						%\draw[dashed,myorange] (170:1.4) ellipse (0.2cm and 0.2cm);
						\draw (170:1.7) node {$\lambda_5$};
						
						\filldraw (195:1.4) ellipse (0.03cm and 0.03cm);
						%\draw[dashed,myorange] (195:1.4) ellipse (0.2cm and 0.2cm);
						\draw (195:1.7) node {$\lambda_6$};
						
						\filldraw (260:1.4) ellipse (0.03cm and 0.03cm);
						%\draw[dashed,myorange] (260:1.4) ellipse (0.2cm and 0.2cm);
						\draw (260:1.7) node {$\lambda_7$};
						
						\filldraw (300:1.4) ellipse (0.03cm and 0.03cm);
						%\draw[dashed,myorange] (300:1.4) ellipse (0.2cm and 0.2cm);
						\draw (300:1.7) node {$\lambda_8$};
						
						\filldraw (350:1.4) ellipse (0.03cm and 0.03cm);
						%\draw[dashed,myorange] (350:1.4) ellipse (0.2cm and 0.2cm);
						\draw (350:1.7) node {$\lambda_J$};
						
						\draw [dotted,domain=310:340] plot ({1.7*cos(\x)}, {1.7*sin(\x)});
						
						%\draw[<->] (-0.025,0.0433)--(-0.475cm,0.8227cm);
						\draw[<->] (0.0,0.05)--(0,0.95);
						\draw[<->,mygreen] (0.0433,-0.025)--(0.9526cm,-0.55cm);
						
						\fill[pattern=my north east lines] (0,0) ellipse (1cm and 1cm);
						\filldraw[white] (-0.35,0.5) ellipse (0.15cm and 0.25cm);
						\draw (-0.35,0.5) node {$\delta$};
						\filldraw[white] (0.35,-0.5) ellipse (0.15cm and 0.25cm);
						\draw[mygreen] (0.35,-0.5) node {$\tilde{\delta}$};
						\filldraw[white] (-1.06,-1.06) ellipse (0.15cm and 0.3cm);
						\draw (-1.06,-1.06) node [mygreen]{$\gamma$};
						
						\draw (1.5,1.2) node {$\sigma(M)$};
						%\draw (0,0) node [fill=white]{$\spec(B)\backslash\{1\}$};
						
						%\draw (1.9,1.3) node {$\mathbb{C}$};
					\end{tikzpicture}
					\captionof{figure}{}\label{fig1}
				\end{center}
			\end{minipage}\\
			$\epsilon_1>0$ such that the spectrum of $M^\perp e^{s\cL}$ can be separated by $\gamma$ for all $s\in[0,\epsilon_1]$. Therefore, we can apply the holomorphic functional calculus (\Caref{prop:holomorphic-functional-calculus}) to conclude that for all $t\in[0,n\epsilon_1]$
			\begin{equation*}
				\left(M^\perp e^{\frac{t}{n}\cL}\right)^{k}=\frac{1}{2\pi i}\oint_{\gamma}z^{k}R(z,M^\perp e^{\frac{t}{n}})dz,
			\end{equation*}
			where $k\in\{1,..,n\}$. By \Cref{eq:proofthm1-perturbed-resolvent-uniform-boundedness} and with $c_2\coloneqq\sup_{z\in\gamma}\|R(z,M^\perp)\|_\infty \frac{2+2\tilde{\delta}^2}{1+2\tilde{\delta}^2}$,
			\begin{equation*}
				\norm{\left(M^\perp e^{\frac{t}{n}\cL}\right)^{k-1}}_\infty\leq\frac{1}{2\pi}\oint_{\gamma}\abs{z}^{k-1}\norm{R(z,M^\perp e^{\frac{t}{n}})}_\infty dz\leq c_2\tilde{\delta}^k.
			\end{equation*}
			Moreover, by the assumptions $\|P_\Sigma\|_\infty=1$, $\|M\cL\|_\infty\leq b$, $\|\cL P_\Sigma\|_\infty\leq b$, and \Cref{lem:properties-semigroups}
			\begin{equation*}
			    \norm{MP_\Sigma e^{t\cL}P_\Sigma^\perp M}_\infty\leq tb\quad\text{and}\quad \norm{MP_\Sigma^\perp e^{t\cL}P_\Sigma M}_\infty\leq tc_{p}b.
			\end{equation*}
			By the same expansion of $(Me^{\frac{t}{n}\cL})^n=(P_\Sigma Me^{\frac{t}{n}\cL}+M^\perp e^{\frac{t}{n}\cL})^n$ as in the proof of \Cref{prop:spectral-gap-uniform-norm-power-convergence},
			\begin{equation*}
				\begin{aligned}
					\norm{\left(Me^{\frac{t}{n}\cL}\right)^n-\left(P_\Sigma Me^{\frac{t}{n}\cL}\right)^n}_\infty\leq c_2\tilde{\delta}^n+\frac{1}{n}\frac{tbc_{p}c_2(2+tbc_{p}c_2)(\tilde{\delta}-\tilde{\delta}^{n})}{1-\tilde{\delta}}e^{2tbc_{p}c_2}.
				\end{aligned}
			\end{equation*}
			Finally, \Cref{lem:properties-semigroups} shows 
			\begin{equation*}
				\norm{P_\Sigma Me^{\frac{t}{n}\cL}(\1-P_\Sigma)}_\infty=\frac{t}{n}\norm{P_\Sigma M\cL\int_{0}^{1} e^{\tau\frac{t}{n}\cL}(\1-P_\Sigma)d\tau}_\infty\leq\frac{tbc_{p}}{n}
			\end{equation*}
			so that
			\begin{equation*}
				\norm{\left(Me^{\frac{t}{n}\cL}\right)^n-\left(P_\Sigma Me^{\frac{t}{n}\cL}P_\Sigma\right)^n}\leq c_2\tilde{\delta}^n+\frac{tb}{n}+\frac{1}{n}\frac{tbc_{p}c_2(2+tbc_{p}c_2)(\tilde{\delta}-\tilde{\delta}^{n})}{1-\tilde{\delta}}e^{2tbc_{p}c_2},
			\end{equation*}
			which finishes the proof.
		\end{proof}

	\subsubsection*{Upper bound on \texorpdfstring{\Cref{eq:proofthm2-term2}}{Equation (2.1)}}\label{subsec:proofthm2-term2}~\\
		As in \Cref{lem:proofthm1-term2}, we apply the modified Chernoff Lemma (\Caref{lem:improved-chernoff}) to upper bound the second term (\ref{eq:proofthm2-term2}). 
		However, our proof strategy includes two crucial improvements compared to Theorem 3 in \cite{Becker.2021}. Firstly, we show that the spectrum of the perturbed contraction is upper semicontinuous under certain assumptions on $M$ and the $C_0$-semigroup. Therefore, we can use the holomorphic functional calculus and apply the modified Chernoff Lemma with respect to each eigenvalue separately, which allows us to achieve the optimal convergence as in \Cref{lem:proofthm1-term2}. 
		\begin{lem}\label{lem:proofthm2-term2}
			Let $(\cL,\cD(\cL))$ be the generator of a $C_0$-contraction semigroup on $\cX$ and $M\in\cB(\cX)$ a contraction satisfying the same assumption as in \Cref{thm:spectral-gap-uniform}. Then, there is an $\epsilon_2>0$, and a $\tilde{d}_1\geq0$ so that for all $t\geq0$ and $n\in\N$ satisfying $t\in[0,n\epsilon_2]$ 
			\begin{equation*}
				\norm{\left(P_\Sigma Me^{\frac{t}{n}\cL}P_\Sigma\right)^n-\sum_{j=1}^{J}\lambda_j^ne^{n\left(C_{j}(\text{\tiny{$\tfrac{t}{n}$}})-P_{j}(\text{\tiny{$\tfrac{t}{n}$}})\right)}P_{j}(\tfrac{t}{n})}_\infty\leq\frac{J}{n}e^{t\tilde{d}_1}
			\end{equation*}
		where $C_{j}(\tfrac{1}{n})\coloneqq\bar{\lambda}_jP_{j}(\tfrac{1}{n})P_\Sigma Me^{\frac{1}{n}\cL}P_\Sigma P_{j}(\tfrac{1}{n})$ and $P_\Sigma\coloneqq\sum_{j=1}^{J}P_j$.
		\end{lem}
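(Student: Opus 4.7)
The plan is to combine a spectral decomposition based on the perturbed projections $P_j(t/n)$ provided by \Cref{lem:quantitative-appro-riesz-projection} with an elementary algebraic identity that exploits the absorption properties of $A_j := C_j(t/n) - P_j(t/n)$ against $P_j(t/n)$. Set $\tilde M(s) := P_\Sigma M e^{s\cL} P_\Sigma$. Since $\|M\cL\|_\infty \leq b$ and $\|P_\Sigma\|_\infty = 1$, the identities in \Cref{lem:properties-semigroups} give $\|\tilde M(s) - \tilde M(0)\|_\infty \leq s b$, so $s \mapsto \tilde M(s)$ is uniformly continuous at~$0$. Since $\sigma(\tilde M(0)) = \{0\} \cup \{\lambda_1, \dots, \lambda_J\}$ with the $\lambda_j \in \partial \D_1$ well separated from~$0$, I would pick mutually disjoint circles $\Gamma_j$ enclosing each $\lambda_j$ inside $\{|z| > \tilde \delta\}$ and apply \Cref{lem:quantitative-appro-riesz-projection} to obtain $\epsilon_2 > 0$ and spectral projections $P_j(s) = \frac{1}{2\pi i} \oint_{\Gamma_j} R(z, \tilde M(s))\,dz$ for $s \in [0, \epsilon_2]$ with $\|P_j(s) - P_j\|_\infty = \cO(s)$. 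The identities $\tilde M(s) P_\Sigma^\perp = P_\Sigma^\perp \tilde M(s) = 0$ yield $R(z, \tilde M(s)) P_\Sigma^\perp = z^{-1} P_\Sigma^\perp$, which forces $P_j(s) P_\Sigma^\perp = P_\Sigma^\perp P_j(s) = 0$, and a spectral-containment argument then gives $\sum_{j=1}^J P_j(s) = P_\Sigma$.

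Using that $P_j(s)$ commutes with $\tilde M(s)$ and that the $P_j(s)$ are mutually orthogonal projections summing to $P_\Sigma$, one obtains $C_j(s) = \bar\lambda_j \tilde M(s) P_j(s)$ and hence
\begin{equation*}
\tilde M(s)^n = \sum_{j=1}^J \tilde M(s)^n P_j(s) = \sum_{j=1}^J \lambda_j^n\, C_j(s)^n.
\end{equation*}
Specialising to $s = t/n$ reduces the task to bounding $\sum_j \lambda_j^n \bigl[C_j(s)^n - e^{n A_j} P_j(s)\bigr]$ in operator norm. The absorption relations $A_j P_j(s) = P_j(s) A_j = A_j$ collapse every mixed product of $P_j(s)$ and $A_j$ to a pure power of $A_j$, so that
\begin{align*}
C_j(s)^n &= P_j(s) + \sum_{l=1}^n \binom{n}{l}\, A_j^l, \\
e^{n A_j} P_j(s) &= P_j(s) + \sum_{l=1}^\infty \frac{n^l}{l!}\, A_j^l,
\end{align*}
and since $\binom{n}{1} = n$, the $l=1$ coefficients match and the difference reduces to $\sum_{l=2}^n \bigl[\binom{n}{l} - \tfrac{n^l}{l!}\bigr] A_j^l - \sum_{l > n} \tfrac{n^l}{l!}\, A_j^l$.

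The key quantitative input is the first-order bound $\|A_j\|_\infty \leq K\,t/n$ for a constant $K$ depending only on $b$, $\tilde\delta - \delta$, and $\sup_{z \in \Gamma_j} \|R(z, \tilde M(0))\|_\infty$. This follows from writing $A_j = \bar\lambda_j (\tilde M(s) - \tilde M(0)) P_j(s) + (\bar\lambda_j \tilde M(0) - \1)(P_j(s) - P_j)$ (using $\bar\lambda_j \tilde M(0) P_j = P_j$) and combining $\|\tilde M(s) - \tilde M(0)\|_\infty \leq sb$ with the zeroth-order bound \eqref{eq:zeroth-order-approximation}. The elementary inequality $\tfrac{n^l}{l!} - \binom{n}{l} \leq \tfrac{n^{l-1}}{2(l-2)!}$ for $2 \leq l \leq n$ then gives
\begin{equation*}
\Bigl\|\sum_{l=2}^n \Bigl[\binom{n}{l} - \tfrac{n^l}{l!}\Bigr] A_j^l\Bigr\|_\infty \leq \sum_{l=2}^n \frac{(Kt)^l}{2n\,(l-2)!} \leq \frac{(Kt)^2 e^{Kt}}{2n},
\end{equation*}
and, after shrinking $\epsilon_2$ so that $K \epsilon_2 < 1$, the tail $\sum_{l > n} \tfrac{(Kt)^l}{l!}$ is superexponentially smaller than $1/n$ on $[0, n\epsilon_2]$. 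Summing over $j$ and absorbing $(Kt)^2$ into the exponential produces the claimed $\tfrac{J}{n}\, e^{t\tilde d_1}$.

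The main obstacle is the first-order estimate $\|C_j(t/n) - P_j(t/n)\|_\infty = \cO(t/n)$ with constants independent of $t$ and $n$: it requires the perturbation bounds of \Cref{lem:quantitative-appro-riesz-projection} to be applied uniformly, which forces $\epsilon_2$ to be chosen purely from intrinsic spectral data of $M$ and the semigroup bound~$b$.
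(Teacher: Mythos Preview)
Your argument is correct and takes a genuinely different route from the paper's. The setup is identical: both construct the perturbed spectral projections $P_j(s)$ of $\tilde M(s)=P_\Sigma M e^{s\cL}P_\Sigma$ via \Cref{lem:quantitative-appro-riesz-projection}, observe $C_j(s)=\bar\lambda_j\tilde M(s)P_j(s)$, and reduce to bounding $\|C_j(s)^n-e^{nA_j}P_j(s)\|_\infty$ with $A_j=C_j(s)-P_j(s)$. The divergence is in this last step. The paper invokes its Approximate Modified Chernoff Lemma (\Cref{lem:approx-improved-chernoff}), which interpolates $C_t=P_j(s)+t A_j$ and differentiates $t\mapsto C_t^n e^{(1-t)nA_j}$; controlling $\|C_t\|_\infty^{n-1}$ there costs an extra factor $e^{tv_j}$ coming from $\|P_j(s)-P_j\|_\infty$, and also uses the hypothesis $\|P_j\|_\infty=1$. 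Your combinatorial expansion via the absorption identity $A_jP_j(s)=P_j(s)A_j=A_j$ bypasses both: once $(P_j(s)+A_j)^n=P_j(s)+\sum_{l\ge 1}\binom{n}{l}A_j^l$ and $e^{nA_j}P_j(s)=P_j(s)+\sum_{l\ge 1}\frac{n^l}{l!}A_j^l$ are written down, only $\|A_j\|_\infty\le Kt/n$ is needed, and neither $\|P_j(s)\|_\infty$ nor $\|P_j\|_\infty$ enters the estimate. This is slightly sharper and more elementary; the paper's version has the advantage of packaging the argument into a reusable lemma.

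One cosmetic remark: the tail split is unnecessary. Your inequality $\frac{n^l}{l!}-\binom{n}{l}\le \frac{n^{l-1}}{2(l-2)!}$ in fact holds for all $l\ge 2$, including $l>n$ (where $\binom{n}{l}=0$ and the claim reduces to $l(l-1)\ge 2n$, which is clear for $l\ge n+1$). So the single sum $\sum_{l\ge 2}\frac{n^{l-1}}{2(l-2)!}\|A_j\|_\infty^l\le \frac{(Kt)^2e^{Kt}}{2n}$ already covers everything, and you need not shrink $\epsilon_2$ further or argue about superexponential tails.
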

		\begin{proof}
			By \Cref{prop:equivalence-spectral-gap}, the uniform power convergence (\ref{unifpower}) shows that the $P_j$'s are the eigenprojections of $M$ so that $P_\Sigma M=MP_\Sigma=\sum_{j=1}^{J}\lambda_jP_j$ and the spectrum $\sigma(P_\Sigma M)$ consists of\linebreak
			\begin{minipage}[c]{0.69\textwidth}
			    $J$ isolated eigenvalues on the unit circle separated by the curves $\Gamma_j:[0,2\pi]\rightarrow\C,\phi\mapsto\lambda_j+re^{i\phi}$ (q.v.~\Caref{sec:prelim} and \Caref{fig2}) with radius
				\begin{equation}\label{eq:defr}
					r\coloneqq\min_{i\neq j}\left\{\frac{\abs{\lambda_i-\lambda_j}}{3}\right\}.
				\end{equation}
				Note that we use the curve interchangeably with its image and denote the formal sum of all curves around the eigenvalues $\{\lambda_j\}_{j=1}^J$ by $\Gamma$. In the following, we define the vector-valued function:
				\begin{equation}\label{eq:defB(t)}
					\begin{aligned}
						s\mapsto P_\Sigma M+sB(s)&\coloneqq P_\Sigma M+sP_\Sigma M\cL\int_0^1e^{s\tau \cL}P_\Sigma d\tau\\
						&\;=P_\Sigma Me^{s\cL}P_\Sigma\,.
					\end{aligned}
				\end{equation}
			\end{minipage}
			\begin{minipage}[c]{0.3\textwidth}
				\begin{center}
					\begin{tikzpicture}
						\draw[dashed,tumivory] (0,0) ellipse (1.4cm and 1.4cm);
						\draw[dashed,mygreen] (0,0) ellipse (1.15cm and 1.15cm);
						\filldraw (0,0) ellipse (0.03cm and 0.03cm);
						
						\filldraw (15:1.4) ellipse (0.03cm and 0.03cm);
						\draw[dashed,myorange] (15:1.4) ellipse (0.2cm and 0.2cm);
						\draw[myorange] (15:1.9) node {$\Gamma_1$};
						
						\filldraw (70:1.4) ellipse (0.03cm and 0.03cm);
						\draw[dashed,myorange] (70:1.4) ellipse (0.2cm and 0.2cm);
						\draw[myorange] (70:1.9) node {$\Gamma_2$};
						
						\filldraw (110:1.4) ellipse (0.03cm and 0.03cm);
						\draw[dashed,myorange] (110:1.4) ellipse (0.2cm and 0.2cm);
						\draw[myorange] (110:1.9) node {$\Gamma_3$};
						
						\filldraw (130:1.4) ellipse (0.03cm and 0.03cm);
						\draw[dashed,myorange] (130:1.4) ellipse (0.2cm and 0.2cm);
						\draw[myorange] (130:1.9) node {$\Gamma_4$};
						
						\filldraw (170:1.4) ellipse (0.03cm and 0.03cm);
						\draw[dashed,myorange] (170:1.4) ellipse (0.2cm and 0.2cm);
						\draw[myorange] (170:1.9) node {$\Gamma_5$};
						
						\filldraw (195:1.4) ellipse (0.03cm and 0.03cm);
						\draw[dashed,myorange] (195:1.4) ellipse (0.2cm and 0.2cm);
						\draw[myorange] (195:1.9) node {$\Gamma_6$};
						
						\filldraw (260:1.4) ellipse (0.03cm and 0.03cm);
						\draw[dashed,myorange] (260:1.4) ellipse (0.2cm and 0.2cm);
						\draw[myorange] (260:1.9) node {$\Gamma_7$};
						
						\filldraw (300:1.4) ellipse (0.03cm and 0.03cm);
						\draw[dashed,myorange] (300:1.4) ellipse (0.2cm and 0.2cm);
						\draw[myorange] (300:1.9) node {$\Gamma_8$};
						
						\filldraw (350:1.4) ellipse (0.03cm and 0.03cm);
						\draw[dashed,myorange] (350:1.4) ellipse (0.2cm and 0.2cm);
						\draw[myorange] (350:1.9) node {$\Gamma_J$};
						
						\draw [dotted,myorange,domain=310:340] plot ({1.9*cos(\x)}, {1.9*sin(\x)});
						
						%\draw[<->] (-0.025,0.0433)--(-0.475cm,0.8227cm);
						\draw[<->] (0.0,0.05)--(0,0.95);
						\draw[<->,mygreen] (0.0433,-0.025)--(0.9526cm,-0.55cm);
						
						\fill[pattern=my north east lines] (0,0) ellipse (1cm and 1cm);
						\filldraw[white] (-0.35,0.5) ellipse (0.15cm and 0.25cm);
						\draw (-0.35,0.5) node {$\delta$};
						\filldraw[white] (0.35,-0.5) ellipse (0.15cm and 0.25cm);
						\draw[mygreen] (0.35,-0.5) node {$\tilde{\delta}$};
						\filldraw[white] (-1.06,-1.06) ellipse (0.15cm and 0.3cm);
						\draw (-1.06,-1.06) node [mygreen]{$\gamma$};
						
						\draw (1.5,1.2) node {$\sigma(M)$};
						%\draw (0,0) node [fill=white]{$\spec(B)\backslash\{1\}$};
						%\draw (1.9,1.3) node {$\mathbb{C}$};
					\end{tikzpicture}
					\captionof{figure}{}\label{fig2}
				\end{center}
			\end{minipage}\\[1ex]
			Since $M\cL$ is bounded, the defined vector-valued map converges in the uniform topology to $P_\Sigma M$. Moreover, \Cref{lem:properties-semigroups} shows that $s\mapsto B(s)$ is uniformly bounded and continuous in $s=0$ because
			\begin{equation}\label{eq:proofthm2-continuity-b}
				\norm{B(s)-B(0)}_\infty=s\norm{P_\Sigma M\cL\int_{0}^{1}\int_0^{1}\tau_1e^{\tau_1\tau_2s\cL}\cL P_\Sigma d\tau_2d\tau_1}_\infty\leq s\frac{b^2}{2}
			\end{equation}
			where we have used the assumption $\|P_\Sigma\|_\infty\leq 1$. Then, we can apply \Cref{lem:quantitative-appro-riesz-projection} which shows that there exists an $\epsilon_2>0$ such that for all $s\in[0,\epsilon_2]$
			\begin{equation*}
				P_j(s)\coloneqq\frac{1}{2\pi i}\oint_{\Gamma_j}R(z,P_\Sigma M+sB(s))dz
			\end{equation*}
			defines the perturbed spectral projection w.r.t.~$\lambda_j$. Next, let $t\geq0$, $n\in\N$ such that $t\in[0,n\epsilon_2]$. By \Cref{lem:quantitative-appro-riesz-projection} and \Cref{eq:proofthm2-continuity-b}, the perturbed spectral projection can be approximated by
			\begin{align}
			    \norm{P_j(\tfrac{t}{n})-P_j}_\infty&\leq\frac{t}{n}R_jb\left(d_j-\frac{1}{2}\right)r\leq\frac{t}{n}R_jbd_jr\eqqcolon\frac{t}{n}v_j \label{eq:profthm2term2-approx-constant0}\\ 
			    \norm{P_j(\tfrac{t}{n})-P_j-\tfrac{t}{n}P_j'}_\infty&\leq\frac{t^2}{n^2}R_j^2b^2rd_j\label{eq:profthm2term2-approx-constant}
			\end{align}
			where $R_j\coloneqq\sup_{z\in\Gamma_j}\|R(z,P_\Sigma M)\|_\infty$, $d_j\coloneqq R_j\inf_{z\in\Gamma_j}\frac{2+2|z|^2}{1+2|z|^2}+\frac{1}{2}$, and we use that $|\Gamma_j|=2\pi r$. Note that the defined $d_j$ is not exactly the $d$ in \Cref{lem:quantitative-appro-riesz-projection}. Moreover, note that $\|P_j(\tfrac{t}{n})\|_\infty\leq d_jr$ and $\|P'_j\|_\infty\leq R_j^2br$. By the spectral decomposition,
			\begin{equation}\label{eq:proof-thm2-spectral-decomp}
				\left(P_\Sigma Me^{\frac{t}{n}\cL}P_\Sigma \right)^n=\sum_{j=1}^{J}\left(P_j\left(\tfrac{t}{n}\right)P_\Sigma Me^{\frac{t}{n}\cL}P_\Sigma P_j\left(\tfrac{t}{n}\right)\right)^n.
			\end{equation}
			Next, we aim at applying the modified Chernoff lemma \ref{lem:improved-chernoff} to $C_{j}(\tfrac{t}{n})\coloneqq\bar{\lambda}_jP_{j}(\tfrac{t}{n})P_\Sigma Me^{\frac{1}{n}\cL}P_\Sigma P_{j}(\tfrac{1}{n})$ for all $j\in\{1,..,J\}$, which has to be adapted since it is no longer clear that $\|C_j(\tfrac{1}{n})\|_\infty=1$. We start by bounding the difference in norm between $C_t(\tfrac{t}{n})$ and $P_j(\tfrac{t}{n})$. By the fundamental theorem of calculus and the facts $\|P_j(\tfrac{t}{n})\|_\infty\leq d_jr$, $\|P_{\Sigma}M\cL\|_\infty\leq b$, $\|e^{s\cL}\|_\infty\leq 1$, $|\lambda_j|=1$
			\begin{equation}\label{eq:proofthm2term2-approx1}
				\begin{aligned}
					\norm{C_{j}(\tfrac{t}{n})-P_j(\tfrac{t}{n})}_\infty&\leq\frac{t}{n}\norm{\bar{\lambda}_jP_j(\tfrac{t}{n})P_\Sigma M\cL\int_{0}^{1}e^{\frac{st}{n}\cL}P_\Sigma P_j(\tfrac{t}{n})ds}_\infty\\
					&\quad+\norm{P_j(\tfrac{t}{n})P_\Sigma MP_j(\tfrac{t}{n})-P_j(\tfrac{t}{n})}_\infty\\
					&\leq\frac{t}{n}bd_j^2r^2+\norm{\bar{\lambda}_jP_j(\tfrac{t}{n})P_\Sigma MP_j(\tfrac{t}{n})-P_j(\tfrac{t}{n})}_\infty.
				\end{aligned}
			\end{equation}
			In the next step, we focus on the second term and prove a higher order approximation then needed because in \Cref{lem:proofthm2-term3} we will reuse this calculation. In the following calculation, we use the bounds from above, in particular \Cref{eq:profthm2term2-approx-constant0} and (\ref{eq:profthm2term2-approx-constant}). Moreover, we use the product rule for derivatives, which shows $P_j'=P_jP_j'+P_j'P_j$ by $\tfrac{\partial}{\partial s}P_j(s)=\tfrac{\partial}{\partial s}P_j(s)^2$ (cf.~\cite[Lem.~3]{Mobus.2019}).
			\begin{equation}\label{eq:proofthm2term2-approx2}
			    \begin{aligned}
					&\norm{\bar{\lambda}_jP_j(\tfrac{t}{n})P_\Sigma MP_j(\tfrac{t}{n})-P_j(\tfrac{t}{n})}_\infty\\
					&\qquad=\norm{\left(P_j(\tfrac{t}{n})-P_j-\tfrac{t}{n}P_j'\right)P_\Sigma MP_j(\tfrac{t}{n})}_\infty+\norm{P_jP_j(\tfrac{t}{n})+\frac{t}{n}P_j'P_\Sigma MP_j(\tfrac{t}{n})-P_j(\tfrac{t}{n})}_\infty\\
					&\qquad\leq\begin{aligned}[t]
					    \frac{t^2}{n^2}R_j^2b^2r^2d_j^2&+\norm{P_j\left(P_j(\tfrac{t}{n})-P_j-\frac{t}{n}P_j'\right)}_\infty+\frac{t}{n}\norm{P_j'P_\Sigma M\left(P_j(\tfrac{t}{n})-P_j\right)}_\infty\\
					    &+\norm{P_j+\frac{t}{n}\left(P_jP_j'+P_j'P_j\right)-P_j(\tfrac{t}{n})}_\infty
					\end{aligned}\\
					&\qquad\leq\frac{t^2}{n^2}R_j^2b^2r^2d_j^2+2\frac{t^2}{n^2}R_j^2b^2rd_j+\frac{t^2}{n^2}R_j^3b^2d_jr^2\\
					&\qquad\leq\frac{t^2}{n^2}R_j^2b^2d_jr\left(d_jr+R_jr+2\right)
				\end{aligned}
			\end{equation}
			Combining \Cref{eq:proofthm2term2-approx1},  (\ref{eq:proofthm2term2-approx2}), and $\tfrac{t}{n}\leq\epsilon_2$ shows
			\begin{equation}\label{eq:proofthm2term2-approx}
				\begin{aligned}
					\norm{C_{j}(\tfrac{t}{n})-P_j(\tfrac{t}{n})}_\infty&\leq\frac{t}{n}bd_j^2r^2+\frac{t^2}{n^2}R_j^2b^2d_jr\left(d_jr+R_jr+2\right)\\
					&\leq\frac{t}{n}bd_jr\left(d_jr+\epsilon_2R_j^2b\left(d_jr+R_jr+2\right)\right)\eqqcolon\frac{t}{n}w_j
				\end{aligned}
			\end{equation}
			In \Cref{eq:profthm2term2-approx-constant0} and (\ref{eq:proof-thm2-chernoff}), we have proven that $\|P_j(\tfrac{t}{n})-P_j\|\leq\tfrac{t}{n}v_j$, $\|C_j(\tfrac{t}{n})-P_j(\tfrac{t}{n})\|\leq\tfrac{t}{n}w_j$ and note that $\|P_j\|_\infty=1$, $P_j(\tfrac{t}{n})C_j(\tfrac{t}{n})=C_j(\tfrac{t}{n})P_j(\tfrac{t}{n})=C_j(\tfrac{t}{n})$ holds by definition. Then we can apply the approximate version of the modified Chernoff \Cref{lem:approx-improved-chernoff} to $C_j(\tfrac{t}{n})$. This shows
			\begin{equation}\label{eq:proof-thm2-chernoff}
			    \norm{C_j(\tfrac{t}{n})^n-e^{n(C_j(\text{\tiny{$\tfrac{t}{n}$}})-P_j(\text{\tiny{$\tfrac{t}{n}$}}))}P_j(\tfrac{t}{n})}_\infty\leq\frac{t^2w_j^2}{2n}e^{t(v_j+w_j)}\leq\frac{1}{n}e^{t(v_j+2w_j)}.
			\end{equation}
			Combining \Cref{eq:proof-thm2-spectral-decomp} and (\ref{eq:proof-thm2-chernoff}) and writing out the constants $v_j$ and $w_j$ gives
			\begin{align*}
				&\norm{\left(P_\Sigma Me^{\frac{t}{n}\cL}P_\Sigma\right)^n-\sum_{j=1}^{J}\lambda_j^ne^{n\left(C_{j}(\text{\tiny{$\tfrac{t}{n}$}})-P_{j}(\text{\tiny{$\tfrac{t}{n}$}})\right)}P_{j}(\tfrac{t}{n})}_\infty\\
				&\hspace{10ex}\leq J\max\limits_{j\in\{1,..,J\}}\norm{\lambda_j^n\left(C_j(\tfrac{t}{n})\right)^n-\lambda_j^ne^{n\left(C_{j}(\text{\tiny{$\tfrac{t}{n}$}})-P_{j}(\text{\tiny{$\tfrac{t}{n}$}})\right)}P_{j}(\tfrac{t}{n})}_\infty\\
				&\hspace{10ex}\leq \frac{J}{n}e^{t(v_j+2w_j)}\\
			\end{align*}
			Finally, we define $\tilde{d}_1=\max\limits_{j\in\{1,...,J\}}v_j+2w_j$, which finishes the proof.
		\end{proof}
		
		\subsubsection*{Upper bound on \Cref{eq:proofthm2-term3}:}
		\begin{lem}\label{lem:proofthm2-term3}
			Let $(\cL,\cD(\cL))$ be the generator of a $C_0$-contraction semigroup on $\cX$ and $M\in\cB(\cX)$ a contraction with the same assumption as in \Cref{thm:spectral-gap-uniform}. Then, there exists a constant $\tilde{d}_2\geq0$ such that 
			\begin{equation*}
				\norm{\sum_{j=1}^{J}\lambda_j^ne^{n\left(C_{j}(\text{\tiny{$\tfrac{t}{n}$}})-P_{j}(\text{\tiny{$\tfrac{t}{n}$}})\right)}P_{j}(\tfrac{t}{n})-\sum_{j=1}^{J}\lambda_j^ne^{tP_j\cL P_j}P_j}_\infty\leq\frac{J}{n}e^{t\tilde{d}_2}\max\limits_{s\in[0,1]}\norm{e^{stP_j\cL P_j}}_\infty.
			\end{equation*}
		\end{lem}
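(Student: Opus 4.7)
The plan is to reduce via the triangle inequality to a single-$j$ estimate and bound, for each $j\in\{1,\dots,J\}$, the difference $e^{n(C_j(t/n)-P_j(t/n))}P_j(t/n) - e^{tP_j\cL P_j}P_j$. Inserting $\pm\,e^{n(C_j(t/n)-P_j(t/n))}P_j$, I split this as $e^{n(C_j(t/n)-P_j(t/n))}(P_j(t/n)-P_j) + (e^{n(C_j(t/n)-P_j(t/n))}-e^{tP_j\cL P_j})P_j$. The first summand is bounded using the projection estimate \Cref{eq:profthm2term2-approx-constant0} and a uniform bound on the first exponential; the second summand is handled by the integral equation \Cref{lem:integral-equation-semigroups}.

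A key preliminary is that both generators are \emph{bounded}. By \Cref{eq:proofthm2term2-approx}, $\|n(C_j(t/n)-P_j(t/n))\|_\infty\leq tw_j$, so $\|e^{sn(C_j(t/n)-P_j(t/n))}\|_\infty\leq e^{tw_j}$ uniformly in $s\in[0,1]$ and $n$. For $P_j\cL P_j$, the identity $P_j=P_jP_\Sigma$ together with the hypothesis $\|\cL P_\Sigma\|_\infty\leq b$ of \Cref{thm:spectral-gap-uniform} gives $P_j\cL P_j = P_j(\cL P_\Sigma)P_j$, which is bounded by $b$. Hence no domain issues arise and \Cref{lem:integral-equation-semigroups} provides the usual integral representation of the second summand.

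The crucial estimate is $\|n(C_j(t/n)-P_j(t/n))-tP_j\cL P_j\|_\infty = \cO(1/n)$. I would derive it by Taylor-expanding, through \Cref{lem:properties-semigroups}, the factor $e^{(t/n)\cL}$ inside $C_j(t/n)$ to separate a leading $\bar\lambda_j P_j(t/n)P_\Sigma MP_j(t/n)$ term from a remainder of order $t/n$. Multiplication by $n$ turns the first piece into a contribution of order $1/n$ thanks to \Cref{eq:proofthm2term2-approx2}; the second piece converges to $tP_j\cL P_j$ using the eigen-relation $\bar\lambda_j P_j P_\Sigma M = P_j$, with error controlled by $\|P_j(t/n)-P_j\|_\infty = \cO(t/n)$ and the Taylor remainder of the integrated semigroup.

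Assembling the pieces, using $\|P_j\|_\infty=1$ to bound $\|e^{(1-s)tP_j\cL P_j}P_j\|_\infty\leq\max_{s'\in[0,1]}\|e^{s'tP_j\cL P_j}\|_\infty$, and absorbing the first summand via $\max_{s'}\|e^{s'tP_j\cL P_j}\|_\infty\geq 1$, one gets a per-$j$ bound of the required form. Summing over $j$ yields the claim. The main obstacle is bookkeeping: the constants $v_j, w_j, R_j, d_j, r$ from \Cref{lem:quantitative-appro-riesz-projection} must be combined into a single exponent $\tilde d_2$ uniform in $j$; the analytic content is otherwise routine because both generators are bounded.
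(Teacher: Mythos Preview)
Your proposal is correct and follows essentially the same route as the paper: the paper likewise reduces to a single-$j$ estimate, proves the generator convergence $\|n(C_j(\tfrac{t}{n})-P_j(\tfrac{t}{n}))-tP_j\cL P_j\|_\infty=\cO(n^{-1})$ by combining \Cref{eq:proofthm2term2-approx2} with a Taylor expansion of the integrated semigroup and the approximation \Cref{eq:profthm2term2-approx-constant0}, then applies \Cref{lem:integral-equation-semigroups} to compare the exponentials, and finally absorbs the projection difference $P_j(\tfrac{t}{n})-P_j$ via one more use of \Cref{eq:profthm2term2-approx-constant0}. Your explicit preliminary observation that both generators are bounded (so that \Cref{lem:integral-equation-semigroups} applies without domain issues) is exactly the point the paper is tacitly using.
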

		\begin{proof}
			For ease of notation, we absorb the time parameter $t$ into the generator $\cL$ and $b$. In order to prove the convergence of the generator, \Cref{eq:proofthm2term2-approx2} proves:
			\begin{equation*}
				\begin{aligned}
					\norm{n\left(C_{j}(\tfrac{1}{n})-P_j(\tfrac{1}{n})\right)-P_j\cL P_j}_\infty&\leq\norm{\bar{\lambda}_jP_j(\tfrac{1}{n})P_\Sigma M\cL\int_{0}^{1}e^{\frac{s}{n}\cL}P_\Sigma P_j(\tfrac{1}{n})ds-P_j\cL P_j}_\infty\\
					&\quad+\frac{1}{n}R_j^2b^2d_jr\left(d_jr+R_jr+2\right),
				\end{aligned}
			\end{equation*}
		where $R_j\coloneqq\sup_{z\in\Gamma_j}\|R(z,P_\Sigma M)\|_\infty$, $d_j\coloneqq R_j\inf_{z\in\Gamma_j}\frac{2+2|z|^2}{1+2|z|^2}+\frac{1}{2}$, and $r$ is the radius of the curves $\Gamma_j$ defined in \Cref{eq:defr}. Then, we apply \Cref{lem:properties-semigroups} and \Cref{lem:quantitative-appro-riesz-projection} on the first term:
			\begin{align*}
				&\norm{\bar{\lambda}_jP_j(\tfrac{1}{n})P_\Sigma M\cL\int_{0}^{1}e^{\frac{\tau_1}{n}\cL}P_\Sigma P_j(\tfrac{1}{n})d\tau_1- P_j\cL P_j}_\infty\\
			    &\qquad \leq\frac{1}{n}\norm{\bar{\lambda}_jP_j(\tfrac{1}{n})P_\Sigma M\cL\int_{0}^{1}\int_{0}^{1}\tau_1e^{\frac{\tau_1\tau_2}{n}\cL}\cL P_\Sigma P_j(\tfrac{1}{n})d\tau_2d\tau_1}_\infty\\		&\qquad\quad+\norm{\bar{\lambda}_jP_j(\tfrac{1}{n})P_\Sigma M\cL P_\Sigma P_j(\tfrac{1}{n})-P_j\cL P_j}_\infty\\
			    &\qquad\leq\frac{1}{2n}b^2d_j^2r^2+\norm{\left(P_j(\tfrac{1}{n})-P_j\right)P_\Sigma M\cL P_\Sigma P_j(\tfrac{1}{n})}_\infty+\norm{P_j\cL \left(P_j(\tfrac{1}{n})-P_j\right)}\\
			    &\qquad\leq\frac{1}{n}b^2d_jr\left(\frac{1}{2}d_jr+R_jd_jr+R_j\right)\,,
			\end{align*}
			where we used \Cref{eq:profthm2term2-approx-constant0} in the last step and the assumption that $M\cL$ and $\cL P_{\Sigma}$ are bounded by $b$ and all the inequalities discussed before \Cref{eq:proofthm2term2-approx}. In combination with \Cref{lem:integral-equation-semigroups} 
			\begin{align*}
				&\norm{e^{n\left(C_{j}(\text{\tiny{$\tfrac{1}{n}$}})-P_{j}(\text{\tiny{$\tfrac{1}{n}$}})\right)}-e^{P_j\cL P_j}}_\infty\\
				&\qquad\leq\max\limits_{s\in[0,1]}\norm{e^{sP_j\cL P_j}}_\infty\norm{e^{(1-s)n\left(C_{j}(\text{\tiny{$\tfrac{1}{n}$}})-P_{j}(\text{\tiny{$\tfrac{1}{n}$}})\right)}}_\infty\norm{n\left(C_{j}(\tfrac{1}{n})-P_j(\tfrac{1}{n})\right)-P_j\cL P_j}_\infty\\
				&\qquad\leq\frac{1}{n}\max\limits_{s\in[0,1]}\norm{e^{sP_j\cL P_j}}_\infty e^{w_j}b^2d_jr\left(R_j^2\left(d_jr+R_jr+2\right)+\frac{1}{2}d_jr+R_jd_jr+R_j\right)
			\end{align*}
			for all $j\in\{1,..J\}$ and where $w_j$ is defined in \Cref{eq:proofthm2term2-approx}. With one more application of \Cref{eq:profthm2term2-approx-constant0}, this shows
			\begin{equation*}
				\norm{\sum_{j=1}^{J}\lambda_j^ne^{n\left(C_{j}(\text{\tiny{$\tfrac{1}{n}$}})-P_{j}(\text{\tiny{$\tfrac{1}{n}$}})\right)}P_{j}(\tfrac{1}{n})-\sum_{j=1}^{J}\lambda_j^ne^{P_j\cL P_j}P_j}_\infty\leq\frac{J}{n}e^{t\tilde{d}_2}\max\limits_{s\in[0,1]}\norm{e^{stP_j\cL P_j}}_\infty
			\end{equation*}
			where we choose $\tilde{d}_2\geq0$ appropriately and redefine $\cL$ by $t\cL$ and $b$ by $bt$.
		\end{proof}
	
		\subsubsection*{End of the Proof of \Cref{thm:spectral-gap-uniform}:}
		Finally, we combine the upper bounds found in the lemmas in order to finish the proof of \Cref{thm:spectral-gap-uniform}.
		\begin{proof}[Proof of \Cref{thm:spectral-gap-uniform}]
			\Cref{lem:proofthm2-term1}, \ref{lem:proofthm2-term2}, and \ref{lem:proofthm2-term3} show for all $t\in[0,n\epsilon]$ with $\epsilon\coloneqq \min\{\epsilon_1,\epsilon_2\}$
			\begin{flalign*}
				&&\norm{\left(Me^{\frac{t}{n}\cL}\right)^n-\sum_{j=1}^{J}\lambda_j^ne^{tP_j\cL P_j}P_j}_\infty&\leq c_2\tilde{\delta}^n+\frac{tb}{n}+\frac{tb}{n}\frac{c_{p}c_2(2+tbc_{p}c_2)(\tilde{\delta}-\tilde{\delta}^{n})}{1-\tilde{\delta}}e^{2tbc_{p}c_2}&&\text{(\Caref{lem:proofthm2-term1})}\\
				&& &\quad+\frac{J}{n}e^{t\tilde{d}_1}&&\text{(\Caref{lem:proofthm2-term2})}\\
				&& &\quad+\frac{J}{n}e^{t\tilde{d}_2}\max\limits_{s\in[0,1]}\norm{e^{stP_j\cL P_j}}_\infty&&\text{(\Caref{lem:proofthm2-term3})}.
			\end{flalign*}
			For an appropriate constant $c_1\geq0$, we finish the proof of \Cref{thm:spectral-gap-uniform}.
		\end{proof}

	\section{Examples}\label{sec:applications}
		In this section, we present two classes of examples, which illustrate the range of applicability of our results. In the examples, we denote by  $\rho,\sigma$ quantum states. 
		\begin{ex}[Finite dimensional quantum systems]\label{ex:finite-dim}
			We choose $\cX=\cB(\cH)$ to be the algebra of linear operators over a finite dimensional Hilbert space $\cH$ endowed with the trace norm $\|x\|_1=\tr|x|$, $M:\cB(\cH)\to \cB(\cH)$ a quantum channel, i.e.~a completely positive, trace preserving linear map, and $\cL$ the generator of a semigroup of quantum channels over $\cB(\cH)$, also known as a quantum dynamical semigroup. In finite dimension, it is know that every quantum channel is a contraction \cite[Cor.~3.40]{Watrous.2018}, the spectrum includes the eigenvalue $1$ \cite[Thm.~3]{Wolf.2010}, and every linear operator in finite dimension has a discrete spectrum. Moreover, the nilpotent part of a quantum channel is zero \cite[Lem.~A.1]{Hasenohrl.2020}. Therefore, there exist $\delta\in(0,1)$, $\tilde{c}>0$, and a set of eigenvalues and projections $\{\lambda_j,P_j\}_{j=1}^J$ such that for all $n\in\N$
			\begin{equation*}
				\norm{M^nx-\sum_{j=1}^J\lambda_jP_jx}_1\leq\tilde{c}\delta^n\norm{x}_1
			\end{equation*}
		    Note that the assumptions on the semigroup are satisfied due to the finiteness of the system and the contraction property of the $P_j$ must be assumed additionally.
		\end{ex}
		In the following example class, we calculate $\delta$ directly.
		\begin{ex}[{Power convergence via strong data processing inequalities}]
			{As in \Cref{ex:finite-dim}, we define $\cX=\cB(\cH)$ endowed with the trace norm $\|x\|_1$, $M:\cB(\cH)\to \cB(\cH)$ a quantum channel, and $\cL$ the generator of a quantum dynamical semigroup.}
			Here, we further assume the existence of a projection $P:\cB(\cH)\to \mathcal{N}$ onto a subalgebra $\mathcal{N}\subset \cB(\cH)$ with $MP=PM$ and such that the following \textit{strong data processing inequality} holds for some $\hat{\delta}\in(0,1)$: for all states $\rho\in\cX$,
			\begin{align}\label{SDPI}
				D(M(\rho)\|M\circ P(\rho))\le \hat{\delta}\,D(\rho\|P(\rho))\,,
			\end{align}
			where we recall that the relative entropy between two quantum states, i.e.~positive, trace-one operators on $\cH$, is defined as $D(\rho\|\sigma):=\tr[\rho\log\rho-\rho\log\sigma]$, whenever $\operatorname{supp}(\rho)\subseteq\operatorname{supp}(\sigma)$. Equation \eqref{SDPI} was recently shown to hold under a certain detailed balance condition for $M$ in \cite{gao2021complete}: there exists a full-rank state $\sigma$ such that for any two $x,y\in\cB(\cH)$, $$\tr[\sigma\,x^*M^*(y)]=\tr[\sigma\,M^*(x^*)y].$$ Here $x^*$, resp. $M^*$, denotes the adjoint of $x$ w.r.t.~the inner product on $\cH$, resp. the adjoint of $M$ w.r.t. the Hilbert-Schmidt inner product on $\cB(\cH)$. In finite dimensions, the quantity $\sup_\rho\,D(\rho\|P(\rho))<\infty$ is called the Pimsner-Popa index of $P$ \cite{pimsner1986entropy}. Using Pinsker's inequality, we see that the assumption of \Cref{thm:spectral-gap-uniform} is satisfied: for all $x=x^*\in\cB(\cH)$ with $\|x\|_1\le 1$ and  decomposition $x=x_+-x_-$ into positive and negative parts and corresponding states $\rho_{\pm}=x_{\pm}/\tr[x_{\pm}]$,
			\begin{align*}
				\|(M^n-M^n\circ P)(x)\|_1& \le \tr[x_+]\,\|(M^n-M^n\circ P)(\rho_+)\|_1+ \tr[x_-]\,\|(M^n-M^n\circ P)(\rho_-)\|_1\\
				&= \|x\|_1\,\sqrt{2}\,\max_{\rho\in\{\rho_+,\rho_-\}}D(M^n(\rho)\|M^n\circ P(\rho))^{\frac{1}{2}}\\
				&\le \sqrt{2}\,\sup_\rho\,D(\rho\|P(\rho))^{\frac{1}{2}}\,\hat{\delta}^{\frac{n}{2}}\eqqcolon \tilde{c}\,\delta^n\,.
			\end{align*}
			{Then, we can apply \Cref{cor:explicit-bound-prop} which proves that there is an $\epsilon>0$ such that for all $n\in\N$, $t\in[0,n\epsilon]$, and $\tilde{\delta}\in(\delta,1)$
			\begin{equation*}
				\norm{\left(Me^{\frac{t}{n}\cL}\right)^n-e^{tP\cL P}P}_\infty=\cO\left(\frac{e^{t\|\cL\|_\infty}}{n}+\frac{\tilde{\delta}}{n}e^{\frac{8\tilde{c}t\norm{\cL}_\infty}{\tilde{\delta}-\delta}}\right)
			\end{equation*}
			for $n\rightarrow\infty$.}
		\end{ex}
		\begin{ex}[Infinite dimensional quantum systems and unbounded generators]
			Here, we pick $\cH=L^2(\mathbb{R})$, denote by $I$ the identity operator on $\cH$, let $\sigma$ be a quantum state on $\cH$ and $M$ a generalized depolarizing channel of depolarizing parameter $p\in (\frac{1}{2},1)$ and fixed point $\sigma$:
			\begin{align}
				M(\rho):=(1-p)\rho +p\tr(\rho)\,\sigma\,.
			\end{align}
			It is clear by convexity that $M$ satisfies the uniform strong power convergence with projection $P(\rho)=\tr(\rho)\,\sigma$ and parameter $\delta=2(1-p)<1$,
			\begin{equation*}
			    \norm{M(\rho)-P(\rho)}_1=(1-p)\norm{\rho -\tr(\rho)\,\sigma}_1\leq2(1-p)\norm{\rho}_1.
			\end{equation*}
			Let $\cL$ be a generator of a $C_0$-contraction semigroup such that $\sigma\in\cD(\cL)$. For example, let $\cH$ be the Fock-space spanned by the Fock basis $\{\ket{0},\ket{1},\ket{2},...\}$, $a$ and $a^\dagger$ be the annihilation and creation operator defined by $a\ket{0}=0$, $a\ket{j}=\sqrt{j}\ket{j-1}$ for all $j\in\N_{\geq 1}$, and $a^\dagger\ket{j}=\sqrt{j+1}\ket{j+1}$ for all $j\in\N_{\geq0}$. Then, define $e^{t\cL}(\rho)\coloneqq e^{-itH}\rho e^{itH}$ where $H=a^\dagger a+\tfrac{1}{2}$ ($H=-\Delta+x^2$) is the Hamiltonian of the harmonic oscillator as in \cite{Becker.2021} and 
			\begin{equation*}
			    \sigma=\frac{1}{3}(\ket{0}\bra{0}+\ket{1}\bra{1}+\ket{2}\bra{2})+\frac{1}{10}(\ket{0}\bra{1}+\ket{1}\bra{0}).
			\end{equation*}
			Then, we have that for all $t\ge 0$: 
			\begin{align}\label{upperbound1}
				\|(\1-P)e^{t\cL}P\|_{1\to 1}=\sup_{\|x\|_1\le 1}|\tr(x)|\,\|e^{t\cL}(\sigma)-\sigma\|_1\le t\,\|\cL(\sigma)\|_1\,.
			\end{align}
			Moreover, by duality and the unitality of the maps $e^{t\cL^*}$ we have that
			\begin{align}\label{upperbound2}
				\|Pe^{t\cL}(\1-P)\|_{1\to 1}=\|(\1-P)^*e^{t\cL^*}P^*\|_{\infty\to\infty}=\sup_{\|y\|\le 1}|\tr(\sigma y)|\|(\1-P)(I)\|=0\,,
			\end{align}
			Therefore, the assumptions of \Cref{thm:spectral-gap} are satisfied and we find the convergence rate $\mathcal{O}(n^{-1})$. Interestingly, this answers a conjecture of \cite[Ex.~3, 5]{Becker.2021} for the Hamiltonian evolution generated by the one-dimensional harmonic oscillator. There, the authors had numerically guessed the optimal rate which we prove here. However their analytic bounds could only provide a decay of order $\mathcal{O}(n^{-\frac{1}{4}})$ (q.v.~remark after \Caref{lem:chernoff}) and for a restriction of $H$ to a finite dimensional stable subspace, which effectively assumed the boundedness of the generator.
		\end{ex}
        The depolarizing noise considered in the previous example is artificial. In an infinite dimensional bosonic system, a more natural model of noise is the photon loss channel, which we consider in the next example.

		\begin{ex}[Bosonic beam-splitter]
            We define the bosonic one-mode system by the algebra generated by the creation and annihilation operators $a^*$ and $a$ which satisfy the canonical commutation relation (CCR):
			\begin{equation*}
				[a,a^*]=\1\,.
			\end{equation*}
			The associated Fock basis $\{\ket{0},\ket{1},\ket{2},...\}$ is orthonormal and defined by 
			\begin{equation*}
			a^*\ket{j}=\sqrt{j+1}\ket{j+1}\quad\text{and}\quad a\ket{j}=\sqrt{j}\ket{j-1}
			\end{equation*}
			where the vacuum state $\ket{0}$ satisfies $a\ket{0}=0$. The Fock basis spans a Hilbert space called Fock space on which the operators from the CCR algebra are defined. A bosonic quantum state is a semidefinite operator in the CCR algebra with trace 1. 
			A bosonic $2$-mode system is defined by the CCR-algebra generated by $\{a,b,a^*,b^*\}$, which satisfy, additionally to the canonical commutation relation, $[a,b]=0$. Next, we consider the \textit{quantum beam-splitter} for $\lambda\in [0,1)$ 
			\begin{equation*}
				M_\lambda(y)\coloneqq\tr_2[U_\lambda y\otimes\sigma U_\lambda^*] \,,
			\end{equation*}
			where $\tr_2$ denotes the partial trace over the second register, $U_\lambda\coloneqq e^{(a^* b-b^* a)\operatorname{arcos}(\sqrt{\lambda})}$, an environment state $\sigma$, and $y$ an element in the CCR algebra generated by $\{a^*, a\}$. Moreover, $P(y)\coloneqq\tr[y]\sigma$ defines a projection which satisfies $PM_\lambda=M_\lambda P=P$ with the adjoint $P^*(x)=\tr[\sigma x]\1$.
            
            In order to establish e.g.~the uniform power convergence of \Cref{thm:spectral-gap} in the topology of the trace distance, we would need to consider a convergence in the form of $\|M_\lambda^n(\rho)-\sigma\|_1\to 0$ in the limit of large $n$ and uniformly in the initial state $\rho$. Such property is notoriously hard to prove even in the classical setting \cite{PW16}. Instead, we will consider a different metric on the set of quantum states which turns out to be more easy to work with.
			
			We write $\cB_{{N}}$ for the linear space of all ${N}$-bounded operators, where $ N= a^\dagger a$ corresponds to the photon number operator. That is the vector space of linear operators $X$ on $L^2(\mathbb{R})$ such that for any $|\psi\rangle\in \operatorname{dom}(N)$, $|\psi\rangle\in\operatorname{dom}(X)$ and there are some positive constants $a,b$ such that
			\begin{align*}
			    \|X|\psi\rangle\|\le a\|N|\psi\rangle\|+b\|\psi\|\,.
			\end{align*}
			 We define the \textit{Bosonic Lipschitz constant} of a $X\in \cB_{{N}}$ as \cite{Cambyse.2021}
            \begin{align}
                \|\nabla X\|^2 :=  \sup_{|\psi\rangle,|\varphi\rangle}\,|\langle \psi|[a,X]|\varphi\rangle |^2+|\langle \psi|[a^*,X]|\varphi\rangle|^2\,,\nonumber
            \end{align}
            where the suppremum is over all pure states $|\psi\rangle,|\varphi\rangle\in\operatorname{dom}({N})$ of norm $1$. By duality, we then define the \textit{Bosonic Wasserstein norm} of a linear functional $f$ over $\mathcal{B}_N$ with $f(\1)=0$ as
            \begin{align*}
                \|f\|_{W_1}:=\sup_{\|\nabla X\|\le 1}\,\big|f(X)\big|\,.
            \end{align*}
            where the supremum is over all ${N}$-bounded, self-adjoint operators $X$. We then choose our Banach space $\cX$ as the closure of the set of such linear functionals such that $\|f\|_{W_1}<\infty$. In particular, whenever $f\equiv f_{\rho-\sigma}$ is defined in terms of the difference between two quantum states $\rho,\sigma$ as $f_{\rho-\sigma}(X)=\tr((\rho-\sigma) X)$, we denote the Wasserstein distance associated to the norm $\|.\|_{W_1}$ as (see also \cite{Cambyse.2021}):
            \begin{align*}
                W_1(\rho,\sigma):=\|f_{\rho-\sigma}\|_{W_1}\,.
            \end{align*}			
			These definitions extend the classical Lipschitz constant $\|\nabla f\|:=\sup_{x\in\mathbb{R}^2}|\nabla f(x)|$ of a real, continuously differentiable function $f$ of $2$ variables as well as the dual Wasserstein distance over probability measures on $\mathbb{R}^2$. 
			
			In order to relate the Wasserstein distance to the statistically more meaningful trace distance, we seek for an upper bound on the trace distance in terms of $W_{1}$. By duality of both metrics, this amounts to finding an upper bound on the Lipschitz constant $\|\nabla X\|$ of any bounded operator $X$ in terms of its operator norm $\|X\|_\infty$. However, a bound of that sort does not exist (as classically, one can easily think of bounded observables which are not \textit{smooth}). In the classical setting, the problem can be handled by first \textit{smoothing} the function $f$, e.g. by convolving it with a Gaussian density $g$. In that case, one proves that there exists a finite constant $C>0$ such that $\|\nabla (f\ast g)\|\le C\|f\|_\infty$. In analogy with the classical setting, we can prove that for any two states $\rho_1,\rho_2$ and $\lambda\in [0,1)$ (see also \cite[Proposition 6.4]{Cambyse.2021}),
			\begin{align}\label{tracetoWass}
			    \|M_\lambda(\rho_1-\rho_2)\|_1\le \,C\, W_{1}(\rho_1,\rho_2)\,,
			\end{align}
			where $C^2:=(\|[a,\sigma]\|_1^2+\|[a^*,\sigma]\|_1^2)\lambda {(1-\lambda)^{-1}}$.
			
            With a slight abuse of notations, we also write $M_\lambda(f)$ for $f\circ \mathcal{B}^*_\lambda$. It remains to prove the uniform power convergence. Proposition 6.2 from \cite{Cambyse.2021} gives 
			\begin{align*}
				\norm{M_\lambda(f)}_{W_1}&=\sup_{\|\nabla X\|\leq1}\abs{f\circ M_\lambda^*(X)}\\
				&=\sup_{\|\nabla X\|\leq1}\abs{f\left(\frac{M_\lambda^*(X)}{\|\nabla M_\lambda^*(X)\|}\right)}\|\nabla M_\lambda^*(X)\|\\
				&=\sup_{\|\nabla X\|\leq1}\abs{f(X)}\sqrt{\lambda}\\
				&=\sqrt{\lambda}\,\|f\|_{W_1}
				\end{align*}
				The uniform power convergence follows by $P(f)(X)\equiv f\circ P^*(X)=\tr(\sigma X)f(\1)=0$. Moreover, the asymptotic Zeno condition (\ref{eq:thm1-asympzeno}) is satisfied if $\sigma\in\cD(\cL)$ so that \Cref{thm:spectral-gap} is applicable.
		\end{ex}
		As illustrated here, our asymptotic Zeno condition is easily verifiable and provides a rich class of examples. More examples for which our optimal convergence rate holds can be found in \cite{Becker.2021}.

	\section{Discussion and Open Questions}\label{sec:discussion}
		In this paper, we proved the optimal convergence rate of the quantum Zeno effect in two results: \Cref{thm:spectral-gap} focuses on weakening the assumptions of the $C_0$-semigroup to the so-called asymptotic Zeno condition. Hence, \Cref{thm:spectral-gap} allows strongly continuous Zeno dynamics which is novel for open systems. In \Cref{thm:spectral-gap-uniform} instead, we weaken the assumption on $M$ to the uniform power convergence as in \cite[Thm.~3]{Becker.2021}. Additionally, we presented an example which shows the optimality of the achieved convergence rate. This brings up the question whether our assumptions are optimal and how the assumption on the contraction correlates with the assumption on the $C_0$-semigroup. For example, is it possible to weaken the uniform power convergence in \Cref{thm:spectral-gap} or \Cref{prop:spectral-gap-uniform-norm-power-convergence} to finitely many eigenvalues on the unit circle without assuming stronger assumption on the semigroup? Following our proof strategy (q.v.~\Caref{lem:proofthm1-term1}), this question is related to the conjecture of a generalized version of \textit{Trotter's product formula for finitely many projections} under certain assumptions on the generator,
		\begin{equation*}
			\norm{\left(\sum_{j=1}^{J}\lambda_jP_je^{\frac{1}{n}\cL}\right)^nx-\sum_{j=1}^{J}\left(\lambda_jP_je^{\frac{1}{n}\cL}\right)^nx}\overset{?}{=}\cO\left(\frac{1}{n}(\norm{x}+\norm{\cL x}+\|\cL^2 x\|)\right),
		\end{equation*}		
		Another line of generalization would be to weaken the assumption on $M$ to the strong topology as in Theorem 2 in \cite{Becker.2021}. There, the authors assume that $M^n$ converges to $P$ in the strong topology and that the semigroup is uniformly continuous. It would be interesting to know whether an extension to $C_0$-semigroups is possible. 
		Finally, another important line of generalization would be to extend our results to time-dependent semigroups as in \cite{Mobus.2019}.

	\emph{Acknowledgments:} We would like to thank Michael Wolf and Markus Hasen\"ohrl for their support on this project. Moreover, we would like to thank Valentin A. Zagrebnov for his helping out with questions on the Chernoff Lemma and the anonymous reviewers for their constructive and detailed feedback. T.M. and C.R. acknowledge the support of the Munich Center for Quantum Sciences and Technology, and C.R. that of the Humboldt Foundation.
	
	%References
	\setlength{\bibitemsep}{0.5ex}
	\printbibliography[heading=bibnumbered]
	\vspace{2ex}
	\addresseshere
	\appendix
	\section{Holomorphic Functional Calculus and Semicontinuous Spectra}\label{sec:appendix-holomorphic-fc-semicontinuity}
		In this section, we introduce the holomorphic functional calculus and consider some continuity properties of the spectrum of perturbed operators. These two methods are used in \Cref{prop:spectral-gap-uniform-norm-power-convergence} and \Cref{thm:spectral-gap-uniform} in the main text. Let's start with the holomorphic functional: To be precise, the holomorphic functional calculus defines $f(A)$ for $A\in\cB(\cX)$ and a function $f:\C\rightarrow\C$ holomorphic on a neighbourhood of $\sigma(A)$ in such a way that, if the spectrum is separated into two sets, then the functions restricted to the separated subsets can be treated independently. This in turn is used to define spectral projections of $A$. More details on this topic can be found in Section 2.3 of \cite{Simon.2015}. We use the holomorphic functional calculus to introduce an integral representation of $f(A)$ for a holomorphic function $f$, as well as its associated spectral projections:
		\begin{prop}[Holomorphic Functional Calculus \texorpdfstring{\cite[Thm.~2.3.1-3]{Simon.2015}}{[36, Thm.~2.3.1-3]}]\label{prop:holomorphic-functional-calculus}
			Let $A\in\cB(\cX)$, $\Gamma:[0,2\pi]\rightarrow\C$ be a curve around $\sigma(A)$, and $f$ be a function that is holomorphic in the neighbourhood of $\sigma(A)$. Then,
			\begin{equation*}
				f(A)\coloneqq\frac{1}{2\pi i}\oint_{\Gamma}f(z)R(z,A)dz
			\end{equation*}
			is independent of $\Gamma$. If a subset $\tilde{\sigma}$ of the spectrum is separated by a curve $\Gamma$, the spectral projection w.r.t.~the enclosed subset is defined as
			\begin{equation*}
				P\coloneqq \frac{1}{2\pi i}\oint_{\Gamma}R(z,A)dz
			\end{equation*}
			and satisfies $P^2=P$ and $PA=AP$. If $\tilde{\sigma}=\{\lambda_0\}$ is an isolated eigenvalue, the following nilpotent operator
			\begin{equation*}
				N\coloneqq \frac{1}{2\pi i}\oint_{\Gamma}(z-\lambda_0)R(z,A)dz,
			\end{equation*}
			satisfies $NP=PN=N$ and $AP=\lambda_0P+N$.
		\end{prop}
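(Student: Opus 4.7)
The plan is to reduce every claim to two standard ingredients: (i) a Banach space valued version of Cauchy's integral theorem (applicable because $z\mapsto R(z,A)$ is holomorphic on $\rho(A)$), and (ii) the first resolvent identity
\begin{equation*}
R(z,A)-R(w,A)=(w-z)R(z,A)R(w,A),\qquad z,w\in\rho(A),
\end{equation*}
together with the algebraic relation $AR(z,A)=zR(z,A)-\1$ which follows directly from $(z-A)R(z,A)=\1$.

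First I would show that $f(A)$ is independent of $\Gamma$. Since $f$ is holomorphic on a neighbourhood $U$ of $\sigma(A)$ and $R(\cdot,A)$ is holomorphic on $\rho(A)$, the integrand $f(z)R(z,A)$ is an operator-valued holomorphic function on $U\cap\rho(A)$. Any two admissible curves $\Gamma,\Gamma'$ are homotopic inside this open set, so Cauchy's theorem for holomorphic Banach space valued functions (which follows by applying scalar Cauchy to every continuous linear functional and invoking Hahn-Banach) yields equality of the integrals. The same argument specialised to $f\equiv 1$ gives independence of the defining contour for $P$.

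Next I would prove $P^2=P$ and $PA=AP$. For the idempotent property I pick two curves $\Gamma$ (inner) and $\Gamma'$ (outer), both separating $\tilde\sigma$ from its complement, and write
\begin{equation*}
P^2=\frac{1}{(2\pi i)^2}\oint_\Gamma\!\oint_{\Gamma'}R(z,A)R(w,A)\,dw\,dz.
\end{equation*}
Substituting the resolvent identity and applying Fubini splits this into two iterated integrals. In one, the inner integral $\oint_{\Gamma'}(w-z)^{-1}dw$ with $z$ inside $\Gamma'$ equals $2\pi i$; in the other, $\oint_{\Gamma}(w-z)^{-1}dz$ with $w$ outside $\Gamma$ equals $0$. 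This collapses $P^2$ to $P$. Commutation $PA=AP$ is immediate from $R(z,A)A=AR(z,A)$, which I can pass inside the Bochner integral by linearity and continuity.

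Finally, for the one-point case $\tilde\sigma=\{\lambda_0\}$, I use $AR(z,A)=zR(z,A)-\1$ to compute
\begin{equation*}
(A-\lambda_0)P=\frac{1}{2\pi i}\oint_\Gamma\bigl((z-\lambda_0)R(z,A)-\1\bigr)dz=N,
\end{equation*}
since $\oint_\Gamma dz=0$ for any closed curve. This gives $AP=\lambda_0 P+N$. The identities $PN=NP=N$ then follow by multiplying this relation by $P$ on either side and invoking $P^2=P$ together with $PA=AP$. To show $N$ is quasinilpotent I restrict $A$ to the invariant subspace $P\cX$; the spectrum of this restriction is $\{\lambda_0\}$ (a standard consequence of the decomposition $\sigma(A)=\tilde\sigma\cup(\sigma(A)\setminus\tilde\sigma)$ given by the complementary projections), so the spectral radius formula yields $\|N^n\|_\infty^{1/n}=\|(A-\lambda_0)^n P\|_\infty^{1/n}\to 0$. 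The only delicate step is justifying the Fubini/Cauchy manipulation in the $P^2=P$ computation; everything else is a direct translation of the scalar-valued calculus to the Banach space setting using the resolvent identity.
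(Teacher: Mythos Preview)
Your argument is correct and follows the standard textbook approach to the Riesz--Dunford calculus: contour independence via the Banach-valued Cauchy theorem, idempotency of $P$ via the two-contour trick and the first resolvent identity, and the identities for $N$ via $AR(z,A)=zR(z,A)-\1$. The quasinilpotency argument through the spectral radius formula on $P\cX$ is also the usual one.

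However, there is nothing to compare against: the paper does \emph{not} prove this proposition. It is stated in the appendix purely as a citation of \cite[Thm.~2.3.1--3]{Simon.2015}, preceded by the sentence ``More details on this topic can be found in Section 2.3 of \cite{Simon.2015}'', and followed immediately by the next lemma with no intervening proof. So you have supplied a (correct, standard) proof where the paper deliberately deferred to the literature.
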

		In our case, $M$ has finitely many isolated eigenvalues on the unit circle. However, if the semigroup perturbs those eigenvalues, the following \namecref{lem:semicontinuity-spectrum} states a condition under which the separation of the spectrum is maintained.
		\begin{lem}[Semicontinuous Spectra \texorpdfstring{\cite[Sec.~IV.3.1]{Kato.1995}}{[25,Sec.~IV.3.1]}]\label{lem:semicontinuity-spectrum}
			Let $b\geq0$ and $M:\R_{\geq0}\rightarrow\cB(\cX)$ be a vector-valued function which satisfies
			\begin{equation*}
				\norm{M(t)-M(0)}_\infty\leq tb.
			\end{equation*}
			Moreover, assume that $\sigma(M(0))=\sigma_1\cup\sigma_2$ is a disjoint union separated by a curve $\gamma:[0,2\pi]\rightarrow\rho(M(0))$. For $\epsilon>0$ satisfying 
			\begin{equation}\label{eq:semicontinuity-time-bound}
				\epsilon<\frac{1}{2b}\inf_{z\in\gamma}(1+\abs{z}^2)^{-1}\left(1+\sup_{z\in\gamma}\norm{R(z,M(0))}_\infty^2\right)^{-\frac{1}{2}},
			\end{equation}
			the spectrum of $\sigma(M(t))$ is separated by $\gamma$ for all $t\in[0,\epsilon]$.
		\end{lem}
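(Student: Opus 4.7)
The plan is to show that under the hypothesised bound on $\epsilon$, the curve $\gamma$ lies entirely in the resolvent set $\rho(M(t))$ for all $t\in[0,\epsilon]$. Once this is established, $\gamma$ is a closed curve in $\rho(M(t))$ enclosing one of the two components of $\sigma(M(0))$, and a short continuity argument for the associated spectral projection will show that $\gamma$ still encloses a nonempty piece of $\sigma(M(t))$, so $\gamma$ separates $\sigma(M(t))$ in the sense used throughout the paper.

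First I would fix $z\in\gamma\subset\rho(M(0))$ and write the identity
\[
z-M(t)=(z-M(0))\bigl(\1-R(z,M(0))(M(t)-M(0))\bigr),
\]
reducing invertibility of $z-M(t)$ to a smallness condition on $\|R(z,M(0))(M(t)-M(0))\|_\infty$. The hypothesis $\|M(t)-M(0)\|_\infty\leq tb$ gives
\[
\|R(z,M(0))(M(t)-M(0))\|_\infty\leq tb\,\|R(z,M(0))\|_\infty,
\]
and the stated condition on $\epsilon$ makes this strictly less than $1$ uniformly in $z\in\gamma$ and $t\in[0,\epsilon]$, using only the trivial inequalities $(1+|z|^2)\geq 1$ and $(1+\sup_{z'}\|R(z',M(0))\|_\infty^2)^{1/2}\geq \sup_{z'}\|R(z',M(0))\|_\infty$. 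The second Neumann series then gives the representation
\[
R(z,M(t))=\sum_{k=0}^{\infty}\bigl(R(z,M(0))(M(t)-M(0))\bigr)^k R(z,M(0)),
\]
which converges in operator norm uniformly in $z\in\gamma$ and $t\in[0,\epsilon]$, so $\gamma\subset\rho(M(t))$.

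Second, I would verify that the two pieces of $\sigma(M(t))$ that $\gamma$ now separates are both nonempty. The spectral projection
\[
P(t):=\frac{1}{2\pi i}\oint_\gamma R(z,M(t))\,dz
\]
is well defined by the previous step, and the uniform convergence of the Neumann series makes $t\mapsto P(t)$ norm-continuous on $[0,\epsilon]$. Because $P(0)$ is a nontrivial projection one has $\|P(0)\|_\infty\geq 1$ and $\|\1-P(0)\|_\infty\geq 1$; shrinking $\epsilon$ further if necessary so that $\|P(t)-P(0)\|_\infty<1$, neither $P(t)=0$ nor $P(t)=\1$ is possible. Hence $\sigma(M(t))$ has points inside and outside $\gamma$, completing the proof.

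The main obstacle is that the specific form of the bound on $\epsilon$ in (\ref{eq:semicontinuity-time-bound}) is a little stronger than is strictly needed for the Neumann series to converge, which strongly suggests that the lemma is really proved in tandem with the sharper uniform estimate $\|R(z,M(t))\|_\infty\leq \|R(z,M(0))\|_\infty\cdot\tfrac{2+2|z|^2}{1+2|z|^2}$ used later in (\ref{eq:proofthm1-perturbed-resolvent-uniform-boundedness}) and (\ref{eq:proofthm2-perturbed-resolvent-uniform-boundedness}). Extracting that sharper bound by summing the Neumann tail $\sum_{k\geq 1}(tb\|R(z,M(0))\|_\infty)^k$ and matching it to $\tfrac{1}{1+2|z|^2}$ is where the precise algebraic form of the condition enters; for the qualitative separation statement of the lemma itself, this step is essentially bookkeeping rather than a conceptual difficulty.
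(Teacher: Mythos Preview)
Your proof is correct, but the paper's own proof is much shorter: it simply observes that $\sup_{z\in\gamma}\|R(z,M(0))\|_\infty<\infty$ by compactness of the image of $\gamma$ and continuity of the resolvent (so the right-hand side of (\ref{eq:semicontinuity-time-bound}) is a well-defined positive number), and then invokes \cite[Thm.~IV.3.16]{Kato.1995} directly. That theorem of Kato is precisely the statement that a curve separating $\sigma(M(0))$ continues to separate $\sigma(M(t))$ under a perturbation satisfying the explicit bound (\ref{eq:semicontinuity-time-bound}); the specific algebraic form of the condition is Kato's. What you have done is essentially unpack Kato's argument via the second Neumann series, which yields a self-contained proof and has the advantage of making transparent \emph{why} the bound looks the way it does --- your final paragraph is exactly right that the particular constants are chosen so that summing the Neumann tail produces the uniform resolvent estimate $\|R(z,M(t))\|_\infty\le \|R(z,M(0))\|_\infty\cdot\tfrac{2+2|z|^2}{1+2|z|^2}$ used later in (\ref{eq:proofthm1-perturbed-resolvent-uniform-boundedness}) and (\ref{eq:proofthm2-perturbed-resolvent-uniform-boundedness}). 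One minor quibble: your phrase ``shrinking $\epsilon$ further if necessary'' in the non-emptiness step slightly departs from the lemma as stated, which fixes $\epsilon$ once and for all; in fact Kato's theorem handles this within the same bound, and in any case the applications in the paper only require $\gamma\subset\rho(M(t))$ so that the contour integrals in \Cref{prop:holomorphic-functional-calculus} and \Cref{lem:quantitative-appro-riesz-projection} are well defined.
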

		\begin{proof}
			Firstly, we show that \Cref{eq:semicontinuity-time-bound} is well-defined --- i.e. the value $\epsilon>0$ exists. The compactness of the image of $\gamma$ and the continuity of the resolvent \cite[Thm.~II.1.5]{Kato.1995} show that $\sup_{z\in\gamma}\|R(z,M(0))\|_\infty<\infty$ and $\epsilon>0$ exists. Then, we can apply Theorem 3.16 from \cite[Sec.~IV]{Kato.1995}, which proves the \namecref{lem:semicontinuity-spectrum}.
		\end{proof}
		\begin{rmk*}
			It is possible to generalize the \namecref{lem:semicontinuity-spectrum} above to uniformly continuous vector-valued maps. This this case however, we loose the explicit bound on $\epsilon$.
		\end{rmk*}

	\section{Spectral Gap Assumption}\label{sec:appendix-spectral-gap}
		In the proofs of \Cref{prop:spectral-gap-uniform-norm-power-convergence} and \Cref{thm:spectral-gap-uniform}, we use the equivalence \cite[Prop.~3.1]{Becker.2021} of the uniform power convergence and the spectral gap assumption with corresponding quasinilpotent operators being zero, that is there are eigenvalues $\{\lambda_j\}_{j=1}^J\subset\partial\D_1$ and a gap $\delta\in(0,1)$ so that
		\begin{equation*}
			\sigma(M)\subset \D_\delta\cup\{\lambda_j\}_{j=1}^J.
		\end{equation*}
		Moreover, the quasinilpotent operators are assumed to be zero, that is
		\begin{equation*}
			N_j=\frac{1}{2\pi i}\oint_{\Gamma_{j}}(z-\lambda_j)R(z,M)dz=0
		\end{equation*}
		for $\Gamma_{j}:[0,2\pi)\ni\varphi\mapsto\lambda_j+\frac{1}{2}\min\limits_{k,l\in\{1,..,J\}}\{\abs{\lambda_k-\lambda_l},1-\delta\}e^{i\varphi}$ (q.v.~\Caref{eq:quasinilpotent-operator}). Additionally, the projections defined in the uniform power convergence are equal to the spectral projections 
		\begin{equation*}
			P_j=\frac{1}{2\pi i}\oint_{\Gamma_{\lambda_j}} R(z,M)dz
		\end{equation*}
		for all $j\in\{1,...,J\}$ (q.v.~\Caref{eq:spectral-projection}). Since $N_j=0$, $P_jM=MP_j=\lambda_jP_j$ \cite[Thm.~2.3.5]{Simon.2015}. The equivalence between the uniform power convergence and the spectral gap assumption with corresponding quasinilpotent operators being zero is given in \cite[Prop.~3.1]{Becker.2021}.
		\begin{prop}[\texorpdfstring{\cite[Prop.~3.1]{Becker.2021}}{[2, Prop.~3.1]}]\label{prop:equivalence-spectral-gap}
			Let $M\in\cB(\cX)$ be a contraction, $J\in\N$, $\delta\in(0,1)$, and $\{\lambda\}_{j=1}^J\subset\partial\D_1$. Then, the following statements are equivalent:
			\begin{enumerate}
				\item The contraction $M$ satisfies the spectral gap assumption w.r.t.~$\{\lambda_j\}_{j=1}^J$, the gap $\delta$, and the corresponding quasinilpotent operators being zero. 
				\item There are projections $\{P_j\}_{j=1}^J$, $\delta\in(0,1)$ and $\tilde{c}\ge 0$ such that $MP_j=\lambda_j P_j$ for all $j=\{1,...,J\}$ and
				\begin{equation*}
					\norm{M^n-\sum_{j=1}^{J}\lambda_j^nP_j}_\infty\leq\,\tilde{c}\,\delta^n.
				\end{equation*}
				\item There are contractions $\{C_j\}_{j=1}^J$ so that $\lim\limits_{n\rightarrow\infty}\|M^n-\sum_{j=1}^{J}\lambda_j^nC_j\|_\infty=0$.
			\end{enumerate}
			If any of the conditions above holds true, $C_j=P_j$ and $P_j$ are the eigenprojectors w.r.t.~to the eigenvalue $\lambda_j$.
		\end{prop}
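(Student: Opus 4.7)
The plan is to close the loop $(1) \Rightarrow (2) \Rightarrow (3) \Rightarrow (1)$; the identification $C_j = P_j$ advertised at the end of the proposition then falls out because each implication constructs its projections from the same $M$-dependent Ces\`aro procedure, and both the spectral projection and the Ces\`aro limit are uniquely determined by $M$.

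For $(1) \Rightarrow (2)$, I invoke the holomorphic functional calculus of \Cref{prop:holomorphic-functional-calculus}. Each $\lambda_j$ being isolated in $\sigma(M)$, small pairwise disjoint contours $\Gamma_j$ around them yield spectral projections $P_j = \tfrac{1}{2\pi i}\oint_{\Gamma_j} R(z,M)\,dz$; the vanishing of the corresponding nilpotents forces $MP_j = \lambda_j P_j$. Setting $P_\Sigma = \sum_j P_j$ and $P_\Sigma^\perp = \1 - P_\Sigma$, the decomposition $M = \sum_j \lambda_j P_j + MP_\Sigma^\perp$ has commuting, mutually annihilating summands, so $M^n - \sum_j \lambda_j^n P_j = (MP_\Sigma^\perp)^n$. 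The spectral radius of $MP_\Sigma^\perp$ is at most $\delta$, hence Gelfand's formula gives $\|(MP_\Sigma^\perp)^n\|_\infty \leq \tilde{c}\,(\delta')^n$ for any $\delta' > \delta$; renaming $\delta'$ as $\delta$ yields \eqref{unifpower}.

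For $(2) \Rightarrow (3)$, I set $C_j \coloneqq P_j$; only the contractivity $\|P_j\|_\infty \leq 1$ requires work. Consider the Ces\`aro average $S_N \coloneqq \tfrac{1}{N}\sum_{n=0}^{N-1}\lambda_j^{-n} M^n$, whose operator norm is bounded by $1$ because each $\lambda_j^{-n} M^n$ is a contraction. Plugging in \eqref{unifpower} and using the geometric identity $\tfrac{1}{N}\sum_{n=0}^{N-1}(\lambda_k/\lambda_j)^n = \tfrac{1}{N}\cdot\tfrac{1 - (\lambda_k/\lambda_j)^N}{1 - \lambda_k/\lambda_j} \to 0$ as $N\to\infty$ for $k\neq j$, together with the remainder bound $\tilde{c}/(N(1-\delta))$, one finds $S_N \to P_j$ in norm, and continuity of the norm delivers $\|P_j\|_\infty \leq 1$.

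For $(3) \Rightarrow (1)$, the same Ces\`aro procedure applied to the limit in $(3)$ shows $S_N \to C_j$. Short manipulations---shifting the index in $MS_N$ to get $MC_j = \lambda_j C_j$, squaring for $C_j^2 = C_j$, and using $M^nC_k \to \lambda_k^n C_k$ for $C_jC_k = 0$ when $j\neq k$---establish that the $C_j$'s are mutually orthogonal projections onto eigenspaces of $M$. Setting $R \coloneqq M(\1 - \sum_j C_j)$, the identity $C_j(\1 - \sum_k C_k) = 0$ gives $R^n = (M^n - \sum_j\lambda_j^n C_j)(\1-\sum_k C_k)$, so $\|R^n\|_\infty \to 0$. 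The main obstacle is now to upgrade this qualitative decay to a quantitative spectral gap, and here spectral mapping does it cleanly: if $\mu \in \sigma(R)$ with $|\mu| = r$, then $\mu^n \in \sigma(R^n)$, so $r^n \leq \|R^n\|_\infty \to 0$, forcing $r < 1$. Picking any $\delta \in (r,1)$ yields $\sigma(M) \subset \{\lambda_1,\ldots,\lambda_J\} \cup \D_\delta$. The vanishing of the quasinilpotents is then automatic because, for a contraction $M$, a non-trivial Jordan block at a peripheral eigenvalue would force $\|M^n\|_\infty$ to grow polynomially; finally, uniqueness of the holomorphic spectral projection together with $MC_j = \lambda_j C_j$ identifies $C_j$ with $P_j$, closing the loop.
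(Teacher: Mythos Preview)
The paper does not supply its own proof of this proposition: it is stated in Appendix~B purely as a citation of \cite[Prop.~3.1]{Becker.2021}, so there is nothing to compare against line by line. That said, your argument is essentially the standard one and is correct in outline, with one step that deserves tightening.

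In $(3)\Rightarrow(1)$, your justification for the vanishing of the quasinilpotent parts is phrased as ``a non-trivial Jordan block at a peripheral eigenvalue would force $\|M^n\|_\infty$ to grow polynomially.'' This heuristic is fine when the quasinilpotent is actually nilpotent, but in infinite dimensions $N_j$ is a priori only quasinilpotent, and the growth argument then requires more care. The cleaner and fully rigorous route is already implicit in what you wrote afterwards: once you have the $M$-invariant direct-sum decomposition $\cX=\bigoplus_j C_j\cX\oplus(\1-P_\Sigma)\cX$ with $M|_{C_j\cX}=\lambda_j\,\mathrm{id}$ and $\sigma\bigl(M|_{(\1-P_\Sigma)\cX}\bigr)\subset\D_\delta$, the holomorphic functional calculus applied to this decomposition immediately identifies the spectral projection at $\lambda_j$ with $C_j$. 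Then $N_j=(M-\lambda_j)P_j=(M-\lambda_j)C_j=0$ is automatic from $MC_j=\lambda_j C_j$, with no appeal to growth of powers needed. I would reorder your last two sentences accordingly.

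A minor point in $(1)\Rightarrow(2)$: since $\sigma(M)\setminus\{\lambda_j\}$ is compact and contained in the \emph{open} disk $\D_\delta$, it actually lies in some $\overline{\D_{\delta''}}$ with $\delta''<\delta$; Gelfand's formula then gives $\|(MP_\Sigma^\perp)^n\|_\infty\le\tilde c\,\delta^n$ with the original $\delta$, so no renaming is needed.
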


	\section{Approximate modified Chernoff $\sqrt{n}$ Lemma}\label{sec:appendix-chernoff}		
		\begin{lem}[Approximate Modified Chernoff Lemma]\label{lem:approx-improved-chernoff}
			Let $\epsilon>0$, $P\in\cB(\cX)$ a projection with $\|P\|_\infty=1$, and $[0,\epsilon]\ni t\mapsto P(t)\in\cB(\cX)$, $[0,\epsilon]\ni t\mapsto C(t)\in\cB(\cX)$ be two vector valued maps with the properties $P(t)^2=P(t)$, $P(t)C(t)=C(t)P(t)=C(t)$,  $\|P(t)-P\|_\infty\leq tv$, and $\|C(t)-P(t)\|_\infty\leq tw$ for some $v,w\geq0$ and all $t\in[0,\epsilon]$. Then, for all $n\in\N$ with $\frac{1}{n}\in[0,\epsilon]$
			\begin{equation*}
				\norm{C(\tfrac{1}{n})^n-e^{n(C(\text{\tiny{$\tfrac{1}{n}$}})-P(\text{\tiny{$\tfrac{1}{n}$}}))}P(\tfrac{1}{n})}\leq\frac{n}{2}e^{v+w}\norm{\left(C(\tfrac{1}{n})-P(\tfrac{1}{n})\right)^2}\leq\frac{w^2}{2n}e^{v+w}.
			\end{equation*}
		\end{lem}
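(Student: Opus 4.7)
The plan is to mirror the slick argument used for the Modified Chernoff Lemma \Cref{lem:improved-chernoff}, with the ambient identity $\1$ replaced by the perturbed projection $P(\tfrac{1}{n})$ and with careful tracking of the norm inflation induced by the non-contractivity of $C(\tfrac{1}{n})$. Throughout, write $t=\tfrac{1}{n}$, $P\coloneqq P(t)$, $C\coloneqq C(t)$, and $D\coloneqq C-P$. The hypotheses $P^2=P$ and $PC=CP=C$ are equivalent to the commutation relations $DP=PD=D$ (and in particular $P$ commutes with $D$, hence with $e^{(1-s)nD}$). Moreover $C^n=C^nP=PC^n$ for $n\geq1$.

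The first step is to interpolate between $C$ and $P$ via $C_s\coloneqq (1-s)P+sC=P+sD$ for $s\in[0,1]$ and to verify the two algebraic identities: $C_sD=DC_s=D+sD^2$, which means $C_s$ commutes with $D$, and $C_sP=PC_s=C_s$. These identities allow me to differentiate cleanly. By the product rule,
\begin{equation*}
\frac{\partial}{\partial s}\bigl(C_s^n\,e^{(1-s)nD}P\bigr)=nC_s^{n-1}D\,e^{(1-s)nD}P-nC_s^n\,D\,e^{(1-s)nD}P=-ns\,C_s^{n-1}\,D^2\,e^{(1-s)nD}P,
\end{equation*}
using $C_s-P=sD$ in the last step. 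Evaluating the endpoints and applying the fundamental theorem of calculus gives
\begin{equation*}
C^n-e^{nD}P=-n\int_0^1 s\,C_s^{n-1}\,D^2\,e^{(1-s)nD}P\,ds,
\end{equation*}
where I use $C_1^n=C^n=C^nP$, $C_0^n=P$ together with $PD^k=D^k$ for $k\geq 1$ to identify the $s=0$ boundary with $e^{nD}P$. Note that $D$ commutes with $e^{(1-s)nD}$ and $DP=D$, so the integrand can equivalently be rewritten as $-ns\,C_s^{n-1}\,e^{(1-s)nD}\,D^2$, which will be convenient for the norm estimate.

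The remaining step is to bound the three factors inside the integral. From $\|P\|_\infty\leq\|P(0)\|_\infty+\|P(t)-P(0)\|_\infty\leq 1+tv=1+v/n$ and $\|C\|_\infty\leq\|P\|_\infty+\|D\|_\infty\leq 1+v/n+w/n$, one has by convexity $\|C_s\|_\infty\leq 1+(v+sw)/n$, hence the elementary inequality $(1+x/n)^{n-1}\leq e^x$ yields $\|C_s\|_\infty^{n-1}\leq e^{v+sw}$. Likewise $\|e^{(1-s)nD}\|_\infty\leq e^{(1-s)ntw}=e^{(1-s)w}$. Combining these gives
\begin{equation*}
\|C_s^{n-1}\,e^{(1-s)nD}\,D^2\|_\infty\leq e^{v+sw}\,e^{(1-s)w}\,\|D^2\|_\infty=e^{v+w}\,\|D^2\|_\infty,
\end{equation*}
so the integral estimate produces
\begin{equation*}
\bigl\|C^n-e^{nD}P\bigr\|_\infty\leq n\,e^{v+w}\,\|D^2\|_\infty\int_0^1 s\,ds=\tfrac{n}{2}\,e^{v+w}\,\|D^2\|_\infty,
\end{equation*}
which is the first claimed inequality. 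The second follows from $\|D^2\|_\infty\leq\|D\|_\infty^2\leq(w/n)^2$.

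The only delicate point is the bookkeeping in the bound on $\|C_s\|_\infty^{n-1}$: naively using $\|C_s\|_\infty\leq 1+(v+w)/n$ yields an extra factor $e^{v+w}$ rather than the stated $e^{v+w}$ once combined with $\|e^{(1-s)nD}\|_\infty$ (the constants would not match). Splitting the exponent as $e^{v+sw}\cdot e^{(1-s)w}$ is what makes the final constant independent of $s$ and exactly equal to $e^{v+w}$.
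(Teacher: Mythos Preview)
Your proof is correct and follows essentially the same approach as the paper: interpolate via $C_s=P(\tfrac1n)+s\bigl(C(\tfrac1n)-P(\tfrac1n)\bigr)$, apply the fundamental theorem of calculus to $s\mapsto C_s^n e^{(1-s)nD}P(\tfrac1n)$, and bound $\|C_s\|_\infty^{n-1}\le e^{v+sw}$ together with $\|e^{(1-s)nD}\|_\infty\le e^{(1-s)w}$ so that the exponents combine to the $s$-independent constant $e^{v+w}$. Your write-up is in fact slightly more explicit than the paper's in spelling out the derivative computation and the identification of the boundary term at $s=0$; the only cosmetic issue is the overloading of the symbol $P$ for both the fixed projection and $P(\tfrac1n)$, which you do clarify but which momentarily clashes when you invoke $\|P(0)\|_\infty=1$.
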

		\begin{proof}
			Similar to the proof of \Cref{lem:improved-chernoff}, we define $C_t(\tfrac{1}{n})\coloneqq P(\tfrac{1}{n})+t(C(\tfrac{1}{n})-P(\tfrac{1}{n}))$ for $t\in[0,1]$ and $n\in\N$ and we use the fundamental theorem of calculus so that
			\begin{align}
				\norm{C(\tfrac{1}{n})^n-e^{n(C(\text{\tiny{$\tfrac{1}{n}$}})-P(\text{\tiny{$\tfrac{1}{n}$}}))}P(\tfrac{1}{n})}_\infty&\leq\int_{0}^{1}\norm{\frac{\partial}{\partial t}(C_t(\tfrac{1}{n})^ne^{(1-t)n(C(\text{\tiny{$\tfrac{1}{n}$}})-P(\text{\tiny{$\tfrac{1}{n}$}}))})}_\infty dt\notag\\
				&\leq n\int_{0}^{1}t\norm{C_t(\tfrac{1}{n})^{n-1}e^{(1-t)n(C(\text{\tiny{$\tfrac{1}{n}$}})-P(\text{\tiny{$\tfrac{1}{n}$}}))}}_\infty\norm{(P(\tfrac{1}{n})-C(\tfrac{1}{n}))^2}_\infty dt\notag\\
				&\leq n\norm{\left(C(\tfrac{1}{n})-P(\tfrac{1}{n})\right)^2}_\infty\int_{0}^{1}t\norm{C_t(\tfrac{1}{n})}_\infty^{n-1}e^{(1-t)n\norm{C(\text{\tiny{$\tfrac{1}{n}$}})-P(\text{\tiny{$\tfrac{1}{n}$}})}_\infty} dt\notag\\
				&\leq n\norm{\left(C(\tfrac{1}{n})-P(\tfrac{1}{n})\right)^2}_\infty\int_{0}^{1}te^{v+tw}e^{(1-t)w} dt\label{eq:proof-approx-chernoff}\\
				&=\frac{n}{2}e^{v+w}\norm{\left(C(\tfrac{1}{n})-P(\tfrac{1}{n})\right)^2}_\infty\notag\\
				&\leq\frac{w^2}{2n}e^{v+w}.\notag
			\end{align}
			In the fourth inequality (\ref{eq:proof-approx-chernoff}), we used 
			\begin{align*}
			    \norm{C_t(\tfrac{1}{n})}_\infty^{n-1}&=\norm{P+P(\tfrac{1}{n})-P+t(C(\tfrac{1}{n})-P(\tfrac{1}{n}))}_\infty^{n-1}\\
			    &\leq\left(1+\norm{P(\tfrac{1}{n})-P}_\infty+ t\norm{C(\tfrac{1}{n})-P(\tfrac{1}{n})}_\infty\right)^{n-1}\\
			    &\leq\left(1+\frac{v+tw}{n}\right)^{n}\\
			    &\leq e^{v+tw}
			\end{align*}
			which proves the \namecref{lem:improved-chernoff}.
		\end{proof}
\end{document}